\def\mdeg{\ensuremath{\mathrm{mdeg}}}
\def\hgt{\ensuremath{\mathrm{ht}}}
\def\coeff{\ensuremath{\mathrm{coeff}}}
\def\A {\ensuremath{\mathbb{A}}}
\def\C {\ensuremath{\mathbb{C}}}
\def\R {\ensuremath{\mathbb{R}}}
\def\N {\ensuremath{\mathbb{N}}}
\def\Z {\ensuremath{\mathbb{Z}}}
\def\K {\ensuremath{\mathbb{K}}}
\def\FF {\ensuremath{\mathbb{F}}}
\def\P {\ensuremath{\mathbb{P}}}
\def\Q {\ensuremath{\mathbb{Q}}}
\def\Qbar {\ensuremath{\overline{\mathbb{Q}}}}
\def\Kbar {\ensuremath{\overline{\mathbb{K}}}}
\def\dunder{\underline{d}}
\def\mydegh{d}
\def\mydeg{\mathfrak{d}}
\def\myheight{s}
\def\homotop{\mathbf{\mathrm{\bf homot}}}
\def\myxi{{\xi}}
\def\GS {\ensuremath{\mathsf{G}}}
\def\mA{\ensuremath{\mathbf{A}}}
\def\GL{\ensuremath{{\rm GL}}}
\def\reg{\ensuremath{{\rm reg}}}
\def\rank{\ensuremath{{\rm rank}}}
\def\jac{\ensuremath{{\mathrm{jac}}}}
\def\scrQ{\ensuremath{\mathscr{Q}}}
\def\scrC{\ensuremath{\mathscr{C}}}
\def\scrH{\ensuremath{\mathscr{H}}}
\def\g {\ensuremath{\mathbf{g}}}
\def\h {\ensuremath{\mathbf{h}}}
\def\s {\ensuremath{\mathbf{s}}}
\def\L {\ensuremath{\mathbb{L}}}
\def\bL {\ensuremath{\mathbf{L}}}
\def\F {\ensuremath{\mathbb{F}}}
\def\X {\ensuremath{\mathbf{X}}}
\def\x {\ensuremath{\mathbf{x}}}
\def\u {\ensuremath{\mathbf{u}}}
\DeclareBoldMathCommand{\bn}{n}
\DeclareBoldMathCommand{\d}{d}
\DeclareBoldMathCommand{\bc}{c}
\DeclareBoldMathCommand{\bh}{h}
\DeclareBoldMathCommand{\bu}{u}
\DeclareBoldMathCommand{\bx}{x}
\DeclareBoldMathCommand{\bs}{s}
\DeclareBoldMathCommand{\btheta}{\vartheta}
\DeclareBoldMathCommand{\f}{f}
\DeclareBoldMathCommand{\bell}{\ell}
\DeclareBoldMathCommand{\bbeta}{\eta}
\DeclareBoldMathCommand{\bchi}{\chi}
\newtheorem{theorem}{Theorem}
\newtheorem{corollary}[theorem]{Corollary}
\newtheorem{proposition}[theorem]{Proposition}
\newtheorem{lemma}[theorem]{Lemma}
\newtheorem{remark}[theorem]{Remark}
\begin{document}

\title{Bit complexity for multi-homogeneous polynomial system solving
  \\Application to polynomial minimization}

\author{M. {Safey El Din}$^{1}$, \'E. Schost$^{2}$}

\footnotetext[1]{Sorbonne Universit\'es, UPMC Univ. Paris 06, CNRS, INRIA Paris Center, LIP6, PolSys Team, France}
\footnotetext[2]{David Cheriton School of Computer Science, University of Waterloo, ON, Canada}

\maketitle

\begin{abstract}
  Multi-homogeneous polynomial systems arise in many applications.  We
  provide bit complexity estimates for solving them which, up to a few
  extra other factors, are quadratic in the number of solutions and
  linear in the height of the input system, under some genericity
  assumptions. The assumptions essentially imply that the Jacobian
  matrix of the system under study has maximal rank at the solution
  set and that this solution set is finite. The algorithm is
  probabilistic and a probability analysis is provided.

  Next, we apply these results to the problem of optimizing a linear
  map on the real trace of an algebraic set. Under some genericity
  assumptions, we provide bit complexity estimates for solving this
  polynomial minimization problem.
\end{abstract}



\section{Introduction}


\subsection{Motivation and problem statement}

In this paper, we are interested in exact algorithms solving systems
of polynomial equations with a multi-homogeneous structure (the
polynomials we consider are actually affine, but can be seen as the
dehomogenization of multi-homogeneous ones); we focus in particular on
the bit complexity aspects of this question. The main application we
have in mind is the solution of some constrained optimization
problems. This is used in many algorithms for studying real solutions
to polynomial systems (see e.g. \cite{BaGiHeMb97, BaGiHeMb01, SaSc03,
  BGHS14, SaSc13} and references therein). We will also pay 
particular attention to the situation when the constraints are given
as quadratic equations.

We work with polynomials in $m$ groups of variables.  Let thus
$\bn=(n_1,\dots,n_m)$ be positive integers, and consider variables
$\X=(\X_1,\dots,\X_m)$, with $\X_1=(X_{1,1},\dots,X_{1,n_1})$, \dots,
$\X_m=(X_{m,1},\dots,X_{m,n_m})$. We write $N=n_1 + \cdots + n_m$ for
the total number of variables.

Let $\K$ be a field and $\f=(f_1,\dots,f_M)$ in $\K[\X_1,\dots,\X_m]$,
for some $M \le N$ (we will sometimes write $\f_M$ instead of $\f$, in
order to highlight the length of the sequence). We associate to $\f$
the algebraic set $Z(\f)$, defined as the set of all $\bx $ in
$\Kbar{}^N$ such that $\f(\bx)=0$ and such that the Jacobian matrix of
$\f$ has rank $M$ at $\bx$.  By the Jacobian
criterion~\cite[Chapter~16]{Eisenbud95}, $Z(\f)$ is either empty, or
equidimensional of dimension $N-M$, and it is defined over $\K$.

Suppose that $M=N$.  It is known that using the multi-degree structure
of $\f$, that is, the partial degrees of these equations in
$\X_1,\dots,\X_m$, together with a multi-homogeneous B\'ezout bound,
we can obtain finer estimates on the cardinality of $Z(\f)$ than
through the direct application of B\'ezout's theorem in many cases.

In this paper, we focus on the case $\K=\Q$, and show how the same
phenomenon holds in terms of bit complexity. Indeed, our goal is to obtain an
algorithm for solving such systems whose bit complexity is, up to some
extra factors, quadratic in the multi-homogeneous bound and linear in
the {\em heights} of the polynomials in the input system (which is a
measure of their bit size).


In the following paragraphs, we recall the notion of height and the
data structure we use to represent $Z(\f)$. We will also use these
notions to describe related works on solving multi-homogeneous
systems.

\smallskip Let us first however describe how such results can be
applied to the problem of minimizing the map
$\pi_1: (x_1, \ldots, x_n)\mapsto x_1$ subject to the constraints
$h_1=\cdots=h_p=0$, with
$\h=(h_1, \ldots, h_p)\subset \Z[X_1, \ldots, X_n]$. Assuming that
$\h$ is a reduced regular sequence, that the minimizer exists and that
the set of minimizers is finite, it is well-known that this problem
can be tackled by solving the so-called Lagrange system
\[
h_1=\cdots=h_p=0, \ [L_1, \ldots, L_p]
\begin{bmatrix}
  \frac{\partial h_1}{\partial X_2} & \cdots &   \frac{\partial h_1}{\partial X_n} \\
  \vdots & & \vdots \\
  \frac{\partial h_p}{\partial X_2} & \cdots &   \frac{\partial h_p}{\partial X_n} \\
\end{bmatrix} = [0 \ \cdots \ 0 ], u_1L_1+\cdots+u_pL_p=1,
\]
where $\bL=(L_1, \ldots, L_p)$ are new variables (called Lagrange
multipliers) and $(u_1, \ldots, u_p)$ are
randomly chosen integers.  Hence, using the notation introduced above, we have
for this system $m=2$, $\bn=(n, p)$, $\X=(\X_1, \X_2)$ with
$\X_1=(X_1, \ldots, X_n)$ and $\X_2=\bL$.


\subsection{Bit size and data structures} \label{intro:notation}

\subsubsection{Multi-degree, height and bit size}

Let $\K$ be a field as above. To a polynomial $f$ in
$\K[\X_1,\dots,\X_m]$ we associate its {\em multi-degree}
$\mdeg(f)=(d_1,\dots,d_m) \in \N^m$, with $d_i=\deg(f,\X_i)$ for all
$i$. When comparing multi-degrees, we use the (partial) componentwise
order, so that saying that $f$ has multi-degree at most
$\dunder=(d_1,\dots,d_m)$ means that $\deg(f,\X_i) \le d_i$ holds for
all $i$. Similarly, to a sequence of polynomials $\f_M$, we associate its
multi-degree $\mdeg(\f_M)=(\mdeg(f_1),\dots,\mdeg(f_M))$.  Saying that
$\f_M$ has multi-degree at most $\d=(\dunder_1,\dots,\dunder_M)$, with
now all $\dunder_i=(d_{i,1},\dots,d_{i,m})$ in $\N^m$, means that
$\deg(f_i,\X_j) \le d_{i,j}$ holds for all $i,j$.

\smallskip

Consider a polynomial $f$ with coefficients in $\Q$. To measure its
bit size, we will use its {\em height}, defined as follows. First, for
$a=u/v$ in $\Q-\{0\}$, define the height of $a$, $\hgt(a)$, as
$\max(\log(|u|), \log(v))$, with $u \in \Z$ and $v \in \N$ coprime
. For a non-zero univariate or multivariate
polynomial $f$ with rational coefficients, we let $v \in \N$ be the
minimal common denominator of all its non-zero coefficients; then
$\hgt(f)$ is defined as the maximum of the logarithms of $v$ and of the
absolute values of the coefficients of $v f$ (which are integers). 

When $f$ has integer coefficients, this is simply the maximum of the
logarithms of the absolute values of these coefficients.  More
generally, for $f$ with rational coefficients, knowing the degree and
height of a polynomial with rational coefficients gives us an upper
bound on the size of its binary representation. As in the case of
degrees, for polynomials $\f_M=(f_1,\dots,f_M)$, we write that
$\hgt(\f_M)=(\hgt(f_1),\dots,\hgt(f_M))$, and we say that $\hgt(\f_M)
\le \bs$, with $\bs=(s_1,\dots,s_M)$, if $\hgt(f_i) \le s_i$ holds for
all $i$.

Given $\bbeta=(\eta_1, \ldots, \eta_M)$ in $\R^M$ and
$\d=(\dunder_1, \ldots, \dunder_M)$ with
$\dunder_i=(d_{1,1}, \ldots, d_{i,m})\in \N^m$, we denote by
$\scrC_\bn(\d)$ the sum of the coefficients of the polynomial 
\[
\prod_{i=1}^M(d_{i,1}\vartheta_1+\cdots+d_{i,m}\vartheta_m)\mod \langle \vartheta_1^{n_1+1},\dots,\vartheta_m^{n_m+1} \rangle
\]
and by $\scrH_{\bn}(\bbeta,\d)$ the sum of the coefficients of the
polynomial
\[
\prod_{i=1}^M(\eta_i\zeta+d_{i,1}\vartheta_1+\cdots+d_{i,m}\vartheta_m)\mod \langle \zeta^2,\vartheta_1^{n_1+1},\dots,\vartheta_m^{n_m+1}\rangle.
\]

\subsubsection{Zero-dimensional parametrizations}

Consider a zero-dimensional algebraic set $V \subset \Kbar{}^N$,
defined over $\K$. A {\em zero-dimensional parametrization}
$\scrQ=((q,v_1,\dots,v_N),\lambda)$ of $V$ consists in polynomials
$(q,v_1,\dots,v_N)$, such that $q\in \K[T]$ is monic and squarefree,
all $v_i$'s are in $\K[T]$ and satisfy $\deg(v_i) < \deg(q)$, and in a
$\K$-linear form $\lambda$ in $N$ variables, such that

\smallskip

\begin{itemize}
\item $\lambda(v_1,\dots,v_N)=T q' \bmod q$, where
  $q'=\frac{\partial q}{\partial T}$;

\smallskip

\item we have the equality
  $V=\left \{\left(
      \frac{v_1(\tau)}{q'(\tau)},\dots,\frac{v_N(\tau)}{q'(\tau)}\right
    ) \ \mid \ q(\tau)=0 \right \};$
\end{itemize}

\smallskip

\noindent the constraint on $\lambda$ then says that the roots of $q$ are
precisely the values taken by $\lambda$ on $V$. This definition
implies that the linear form $\lambda$ takes pairwise distinct values
on the points of $V$; we call such linear forms {\em separating}
and
we say that $\scrQ$ is {\em associated to} $\lambda$.

\smallskip This data structure has a long history, going back to work
of Kronecker and Macaulay \cite{Kronecker82,Macaulay16}, and has been
used in a host of algorithms in effective
algebra~\cite{GiMo89,GiHeMoPa95,ABRW,GiHeMoPa97,GiHeMoMoPa98,Rouillier99,GiLeSa01,Lecerf2000}.

The reason for using a rational parametrization with $q'$ as a
denominator is well-known~\cite{ABRW, Rouillier99, GiLeSa01}: when
$\K=\Q$, and for systems without necessarily any kind of
multi-homogeneous structure, it leads to a precise theoretical control
on the size of the coefficients, which is verified in practice
extremely accurately.  A main purpose of this article is to show how
such results, which are known for general systems, can be extended and
refined to take into account multi-homogeneous situations.

\subsection{Main results}

\subsubsection{Algorithm for solving multi-homogeneous polynomial systems}

The main result of the paper is a probabilistic algorithm for solving
multi-homogeneous systems.  Following references such
as~\cite{GiHeMoPa95,GiHeMoPa97,GiHeMoMoPa98,GiLeSa01,Lecerf2000}, we
will represent the input polynomials $\f$ of our algorithm by means of
a {\em straight-line program}, that is, a sequence of elementary
operations $+,-,\times$ that evaluates the polynomials $\f$ from the
input variables $\X_1,\dots,\X_m$; the {\em length} or {\em size} $L$ of such an objet
is simply the number of operations it performs.

The approach developed here is not new: we start by computing a
zero-dimensional parametrization of $Z(\f \bmod p)$, for a well-chosen
prime $p$, and lift it modulo powers of $p$ to a zero-dimensional
paramet\-rization of $Z(\f)$. The novelty of the theorem below lies in
the use of multi-homogeneous height bounds proved hereafter to control
the cost of the process.

The algorithm is randomized, and part of the randomness amounts to
choosing the prime $p$. Constructing primes is a difficult question in
itself, and not the topic of this paper; hence, we will assume that we
are given an oracle $\mathscr{O}$, which takes as input an integer $B$
and returns a prime number in $\{B+1,\dots,2B\}$, uniformly
distributed within the set of primes in this interval (for a
randomized solution to this question, we refer the reader
to~\cite[Section~18.4]{GaGe03}). In all the paper, we use the
soft-$O$ notation $O\tilde{~}$, in order to indicate that we omit
polylogarithmic terms.


\begin{theorem}\label{thm:homoZ}
  Suppose that $\f=(f_1,\dots,f_N)$ satisfies $\mdeg(\f) \le
  \d=(\dunder_1,\dots,\dunder_m)$ and $\hgt(\f)\le \bs=(s_1,\dots,s_N)$, and that $\f$
  is given by means of a straight-line program $\Gamma$ of size $L$,
  that uses integer constants of height at most $b$.

  There exists an algorithm {\sf NonsingularSolutionsOverZ} that takes
  $\Gamma$, $\d$ and $\bs$ as input, and that produces one of the
  following outputs:

\smallskip

  \begin{itemize}
  \item either a zero-dimensional parametrization of $Z(\f)$,

\smallskip

  \item or a zero-dimensional parametrization of degree less than that of $Z(\f)$,

\smallskip
  \item or {\sf fail}.
  \end{itemize}

\smallskip

\noindent  The first outcome occurs with probability at least $21/32$. In any
  case, the algorithm uses
  $$ O\tilde{~}\left (L b + \scrC_\bn(\d)\scrH_{\bn}(\bbeta,\d) \left (L+ N\mydeg +N^2 \right) N(\log(\myheight) + N )
  \right)$$
  boolean operations, with
  $$
  \mydeg = \max_{1 \le i \le N} d_{i,1} + \cdots + d_{i,m},\quad
  \myheight =\max_{1 \le i \le N}(s_i) \quad\text{and}\quad
  \bbeta=\left (s_i + \sum_{j=1}^m \log(n_j+1) d_{i,j}\right)_{1 \le i \le N}.
  $$
The
  algorithm calls the oracle $\mathscr{O}$ with an input parameter
  $B=\myheight \mydeg^{O(N)}$ and the polynomials in the output have
  degree at most $\scrC_\bn(\d)$ and height
  $O\tilde{~}(\scrH_{\bn}(\bbeta,\d) + N \scrC_\bn(\d))$.
\end{theorem}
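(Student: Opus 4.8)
The plan is to follow the classical modular-then-lift strategy for geometric resolution, in the style of \cite{GiLeSa01,Lecerf2000}, but to replace, everywhere a total-degree B\'ezout bound would otherwise govern the size of the intermediate and final data, the multi-homogeneous quantities $\scrC_\bn(\d)$ (for degrees) and $\scrH_\bn(\bbeta,\d)$ (for heights). Concretely: compute a zero-dimensional parametrization of $Z(\f\bmod p)$ for a prime $p$ drawn by the oracle $\mathscr{O}$, lift it $p$-adically, and reconstruct rational coefficients; correctness together with the probability $21/32$ will follow from bounding a single nonzero integer whose non-vanishing forces $p$ and all the random choices to be good.

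\emph{Modular computation.} Call $\mathscr{O}$ with $B=\myheight\,\mydeg^{O(N)}$, reduce $\Gamma$ modulo $p$ (cost $\softO(Lb)$ plus $\softO(L(\log\myheight+N))$ for the evaluations), and check that the multi-degree is preserved. For $p$ outside a controlled bad set, forming $Z(\f)$ commutes with reduction, so $Z(\f\bmod p)$ is finite of the same degree as $Z(\f)$, which by the multi-homogeneous B\'ezout bound is at most $\scrC_\bn(\d)$. Then run the incremental symbolic-homotopy algorithm: maintain a parametrization of $Z(f_1,\dots,f_i)$ cut by a generic linear space of the right dimension, and pass from $i$ to $i+1$ by a one-dimensional lifting of the homotopy deforming $f_{i+1}$, followed by a cleaning step that discards the spurious components. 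The multi-homogeneous B\'ezout bound keeps every intermediate curve and parametrization of degree below $\scrC_\bn(\d)$; one step costs $\softO(\scrC_\bn(\d))$ evaluations of $\Gamma$ (hence $\softO(\scrC_\bn(\d)L)$ operations in $\F_p$) plus linear algebra on data of size $\softO(\scrC_\bn(\d)+N\mydeg+N^2)$. Summing over $i=1,\dots,N$ and accounting for $\F_p$-arithmetic yields the $\scrC_\bn(\d)(L+N\mydeg+N^2)N$-shaped part of the bound (up to the $\log p$ factor). Genericity of the random linear form and of the homotopy parameters enters the failure probability.

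\emph{Lifting, reconstruction, and probability.} Newton-lift the parametrization of $Z(\f\bmod p)$ to one modulo $p^{\kappa}$, doubling $\kappa$ at each stage (one stage: $\softO(\scrC_\bn(\d)L)$ evaluations plus quasi-linear arithmetic, the last stage dominating). Here the decisive input is the \emph{arithmetic} multi-homogeneous B\'ezout bound proved later in the paper: the minimal parametrization of $Z(\f)$ has height $\softO(\scrH_\bn(\bbeta,\d)+N\scrC_\bn(\d))$, so a precision $\kappa=\softO(\scrH_\bn(\bbeta,\d)+N\scrC_\bn(\d))$ suffices for rational reconstruction; multiplying the per-stage cost by $\kappa$ produces the $\scrC_\bn(\d)\scrH_\bn(\bbeta,\d)$ factor, and the output degree and height bounds are then immediate. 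For the probability, exhibit a nonzero integer $\Delta$ — assembled from leading coefficients, the discriminant of the modulus, and the numerators and denominators met during the computation — such that $p\nmid\Delta$, together with the random parameters avoiding a proper hypersurface, forces the output to parametrize all of $Z(\f)$ (otherwise it parametrizes a proper subvariety, or {\sf fail}). Bounding $\log\Delta$ by $\myheight\,\mydeg^{O(N)}$ shows that only an $O(1/\mathrm{poly})$ fraction of the primes in $\{B+1,\dots,2B\}$ are bad; combining this with Schwartz--Zippel estimates for the remaining random choices and tuning the polynomial factors gives the success probability $21/32$.

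\emph{Main obstacle.} The delicate part is entirely arithmetic: establishing the multi-homogeneous arithmetic B\'ezout inequality that bounds the height of a parametrization of $Z(\f)$ by $\softO(\scrH_\bn(\bbeta,\d)+N\scrC_\bn(\d))$, and then propagating it to a bound on $\log\Delta$ strong enough that the oracle parameter $B=\myheight\,\mydeg^{O(N)}$ is provably large enough. Once those height estimates are in hand, what remains is a careful but routine accounting of the geometric-resolution and lifting machinery, with $\scrC_\bn(\d)$ playing the role of the classical B\'ezout number throughout.
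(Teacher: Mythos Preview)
Your lifting and reconstruction steps are essentially what the paper does, but your modular step differs from the paper's in a way that creates a genuine gap.

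You propose to compute $Z(\f\bmod p)$ by the \emph{incremental} geometric-resolution algorithm of~\cite{GiLeSa01}: maintain a parametrization of $Z(f_1,\dots,f_i)$ cut by $N-i$ generic hyperplanes, and go from $i$ to $i+1$ by a one-dimensional homotopy. You then assert that ``the multi-homogeneous B\'ezout bound keeps every intermediate curve and parametrization of degree below $\scrC_\bn(\d)$''. This is the step that fails. The generic hyperplanes used in the incremental scheme have multi-degree $(1,\dots,1)$, so the $i$th intermediate fibre has degree governed by $\scrC_\bn(\dunder_1,\dots,\dunder_i,(1,\dots,1),\dots,(1,\dots,1))$, not by $\scrC_\bn(\d)$; in general this is \emph{not} bounded by $\scrC_\bn(\d)$ (nor by $\scrC_{\bn'}(\d')$). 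The paper says exactly this in its related-work discussion: the geometric degree that drives the cost of~\cite{GiLeSa01} ``cannot be controlled in terms of the quantities $\scrC_\bn(\d)$ and $\scrC_{\bn'}(\d')$'' (with a pointer to~\cite{HNS16} for concrete families where it blows up). So your per-step cost $\softO(\scrC_\bn(\d)(L+N\mydeg+N^2))$ is not justified, and the whole modular-cost analysis collapses.

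What the paper does instead is a \emph{single global} multi-homogeneous homotopy (Section~\ref{sec:degreebounds}): it builds an explicit start system $\g$ of the same multi-degrees, with exactly $\scrC_\bn(\d)$ simple roots, forms $\homotop(\f,\g,t)=t\f+(1-t)\g$, and lifts the zero-dimensional parametrization of $Z(\g)$ along the one-parameter curve $\scrZ$ in the $t$-direction, then specializes at $t=1$. The degree of this curve is bounded by $\scrC_{\bn'}(\d')$ (Lemma~\ref{lemma:degreeZ}), and that quantity---not $\scrC_\bn(\d)$---is what enters the modular cost; only afterwards does the paper use the inequality $\scrC_{\bn'}(\d')\le \scrH_\bn(\bbeta,\d)$ to fold it into the final bound. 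The specialization at $t=1$ also requires a nontrivial argument (Lemmas~\ref{lemma:nonrep}--\ref{lemma:rw} and the notion of a well-separating linear form) that has no counterpart in your sketch. Once Proposition~\ref{prop:homo1} is in hand, the remainder of your plan (choice of $p$ via a bad integer $A$ with $\log A\le H$, Newton lifting to precision governed by Proposition~\ref{prop:hgt}, rational reconstruction, and the probability $7/8\times 3/4=21/32$) matches the paper.
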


A more detailed discussion on the probabilistic aspects and the choice
of the prime $p$ is given in Subsection~\ref{sec:mainalgo}. Here, we investigate an immediate consequence of Theorem~\ref{thm:homoZ},
when for all $1\leq i \leq N$, we have
$\mdeg(f_i)\leq \dunder = (d_1, \ldots, d_m)$ and $\hgt(f_i)\leq \myheight$, that is,
$\d=(\dunder, \ldots, \dunder)$ and $\bs=(s,\dots,s)$.
Technical but immediate computations show that in this case,
$$\scrC_\bn(\d)= d_1^{n_1}\cdots d_m^{n_m}{{N}\choose{n_1\cdots
    n_m}}$$
where ${{N}\choose{n_1\cdots n_m}}$ is the multinomial coefficient
$\frac{N!}{n_1!\cdots n_m!}$ (recall that $N=n_1+\cdots+n_m$) and
\[
\scrH_{\bn}(\bbeta,\d) \leq m(\myheight+\mydeg+1) d_1^{n_1}\cdots d_m^{n_m}
{{N}\choose{n_1\cdots n_m}}
\]
where $\mydeg=d_1+\cdots+d_m$.  From this, we obtain the following corollary.

\begin{corollary}
  Suppose that $\f=(f_1,\dots,f_N)$ satisfies
  $\mdeg(f_i) \le \dunder=(d_1, \ldots, d_m)$ and
  $\hgt(f_i)\le \myheight$ for all $i$, and that $\f$ is given by
  means of a straight-line program $\Gamma$ of size $L$, that uses
  integer constants of height at most $b$.

  There exists an algorithm {\sf NonsingularSolutionsOverZ} that takes
  $\Gamma$, $\dunder$ and $\myheight$ as input, and that produces one of the
  following outputs:

\smallskip

  \begin{itemize}
  \item either a zero-dimensional parametrization of $Z(\f)$,

\smallskip

  \item or a zero-dimensional parametrization of degree less than that of $Z(\f)$,

\smallskip

  \item or {\sf fail}.
  \end{itemize}

\smallskip

\noindent  The first outcome occurs with probability at least $21/32$. In any
  case, the algorithm uses
  $$ O\tilde{~}\left (L b + 
    \left (d_1^{n_1}\cdots d_m^{n_m}{{N}\choose{n_1\cdots
    n_m}}\right )^2m (\myheight+\mydeg)
\left (L+ N\mydeg +N^2 \right) N(\log(\myheight) + N ) \right)$$
boolean operations, with
$\mydeg =  d_{1} + \cdots + d_{m}$. The
algorithm calls the oracle $\mathscr{O}$ with an input parameter
$B=\myheight \mydeg^{O(N)}$ and the polynomials in the output have
degree at most $d_1^{n_1}\cdots d_m^{n_m}{{N}\choose{n_1\cdots n_m}}$
and height $O\tilde{~}(
d_1^{n_1}\cdots d_m^{n_m}{{N}\choose{n_1\cdots n_m}}(m(\myheight + \mydeg)+N)
)$.
\end{corollary}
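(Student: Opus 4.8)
The plan is to read the corollary off from Theorem~\ref{thm:homoZ} by specializing to the case where all multi-degrees and heights coincide, i.e. $\d=(\dunder,\dots,\dunder)$ (so that $d_{i,j}=d_j$ for all $i$) and $\bs=(\myheight,\dots,\myheight)$. One runs {\sf NonsingularSolutionsOverZ} exactly as in that theorem: the three possible outputs, the bound $21/32$ on the probability of returning a zero-dimensional parametrization of $Z(\f)$, and the oracle call with input parameter $B=\myheight\mydeg^{O(N)}$ all transfer verbatim, once one notes that here the theorem's parameters reduce to $\mydeg=d_1+\cdots+d_m$ and to the $\myheight$ of the statement. So the remaining task is purely to put the quantities $\scrC_\bn(\d)$ and $\scrH_\bn(\bbeta,\d)$ that appear in the cost and output estimates of Theorem~\ref{thm:homoZ} into closed form for this uniform instance, and then to tidy up the $O\tilde{~}$-expressions.

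Evaluating $\scrC_\bn(\d)$ is a one-line multinomial computation: when all $\dunder_i$ are equal, the defining product is $(d_1\vartheta_1+\cdots+d_m\vartheta_m)^N$, and reduction modulo $\langle\vartheta_1^{n_1+1},\dots,\vartheta_m^{n_m+1}\rangle$ keeps only monomials $\vartheta_1^{k_1}\cdots\vartheta_m^{k_m}$ with each $k_j\le n_j$; since $k_1+\cdots+k_m=N=n_1+\cdots+n_m$ this forces $k_j=n_j$ for all $j$, leaving a single term, whence $\scrC_\bn(\d)=d_1^{n_1}\cdots d_m^{n_m}\binom{N}{n_1\cdots n_m}$. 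For $\scrH_\bn(\bbeta,\d)$ one argues the same way after reducing modulo $\zeta^2$: the product becomes $(d_1\vartheta_1+\cdots+d_m\vartheta_m)^N+\zeta(\eta_1+\cdots+\eta_N)(d_1\vartheta_1+\cdots+d_m\vartheta_m)^{N-1}$, the first summand contributing $\scrC_\bn(\d)$ again and the second, after reduction modulo the $\vartheta_j^{n_j+1}$, being supported on the $m$ monomials obtained from $\vartheta_1^{n_1}\cdots\vartheta_m^{n_m}$ by lowering one exponent by one (each valid since the $n_j$ are positive); substituting $\eta_i=\myheight+\sum_{j=1}^m\log(n_j+1)d_j$ and bounding the multinomial coefficients and the resulting sum crudely yields the estimate $\scrH_\bn(\bbeta,\d)\le m(\myheight+\mydeg+1)\,d_1^{n_1}\cdots d_m^{n_m}\binom{N}{n_1\cdots n_m}$ already recorded before the statement.

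It then remains only to substitute. In the cost bound of Theorem~\ref{thm:homoZ} we get $\scrC_\bn(\d)\,\scrH_\bn(\bbeta,\d)\le m(\myheight+\mydeg+1)\bigl(d_1^{n_1}\cdots d_m^{n_m}\binom{N}{n_1\cdots n_m}\bigr)^2$, and absorbing the additive constant into the soft-$O$ gives exactly the claimed boolean operation count; the degree bound $\scrC_\bn(\d)$ of the theorem becomes $d_1^{n_1}\cdots d_m^{n_m}\binom{N}{n_1\cdots n_m}$ and the height bound $O\tilde{~}(\scrH_\bn(\bbeta,\d)+N\scrC_\bn(\d))$ becomes $O\tilde{~}\bigl(d_1^{n_1}\cdots d_m^{n_m}\binom{N}{n_1\cdots n_m}(m(\myheight+\mydeg)+N)\bigr)$. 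I do not expect any real obstacle here: all the mathematical content sits in Theorem~\ref{thm:homoZ}, and the only point to be careful about is pinning down the combinatorial constants for $\scrC_\bn(\d)$ and $\scrH_\bn(\bbeta,\d)$, using that $N=n_1+\cdots+n_m$ is precisely the total degree at which the quotients modulo $\langle\vartheta_j^{n_j+1}\rangle$ and modulo $\langle\zeta^2,\vartheta_j^{n_j+1}\rangle$ truncate the two products.
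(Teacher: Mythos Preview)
Your proposal is correct and follows essentially the same approach as the paper: the paper states the corollary as an ``immediate consequence'' of Theorem~\ref{thm:homoZ}, obtained by specializing to $\d=(\dunder,\dots,\dunder)$ and $\bs=(\myheight,\dots,\myheight)$ and invoking the closed forms $\scrC_\bn(\d)=d_1^{n_1}\cdots d_m^{n_m}\binom{N}{n_1\cdots n_m}$ and $\scrH_\bn(\bbeta,\d)\le m(\myheight+\mydeg+1)\,d_1^{n_1}\cdots d_m^{n_m}\binom{N}{n_1\cdots n_m}$, which it records just before the statement via ``technical but immediate computations''. You in fact spell out those computations (the multinomial argument forcing $k_j=n_j$, and the $\zeta$-expansion modulo $\zeta^2$) in more detail than the paper does, but the route is identical.
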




\subsubsection{Minimization problems} 

We describe now the main results on {\em generic instances} of the
problem of minimizing the map $\pi_1: (x_1, \ldots, x_n)\mapsto x_1$
subject to the constraints $h_1=\cdots=h_p=0$, with $\h=(h_1, \ldots,
h_p)\subset \Z[X_1, \ldots, X_n]$, using our algorithm for
multi-homogeneous systems.

This is done by considering the Lagrange system in $N=n+p$ variables
$$
h_1=\cdots=h_p=0, \qquad L_1\frac{\partial h_1}{\partial
  X_j}+\cdots+L_p\frac{\partial h_p}{\partial X_j}=0 \text{ for }
2\leq j \leq n, \qquad u_1L_1+\cdots+u_pL_p=1
$$
where $\bL=L_1, \ldots, L_p$ are new variables and
$(u_1, \ldots, u_p)$ are chosen at random. As we will
see, in generic situations, the projection on the
$(X_1, \ldots, X_n)$-space of the complex solution set of this system
is finite, and coincides with the set of critical points of $\pi_1$ on
$V=V(h_1,\dots,h_p)$.  

Let $\mydegh$ be the maximum of the degrees of the polynomials in $\h$.
The Lagrange system above possesses a bi-homogeneous structure, with $p$
equations of total degree at most $\mydegh$, resp.\ $0$ in variables
$\X$, resp.\ $\bL$ (we will then speak of bidegree $(d,0)$),
$n-1$ equations of bidegree at most $(\mydegh-1,1)$ and one equation of
bidegree $(0,1)$.

We prove in Section~\ref{sec:appli} that we can solve the bi-homogeneous
system above in randomized time
$$ O\tilde{~}\left ( p (E+n) \myheight' + 
{{n-1} \choose {p-1}}{n \choose p}
  (\myheight+\mydegh)\mydegh^{2p}(\mydegh-1)^{2(n-p)}(pE+n\mydegh+n^2)
\right ),$$
where $\myheight$ is the height of the input polynomials, $E$ is the
length of the straight-line program that computes them, and
$\myheight'$ is the height of the integers that appear in this
straight-line program (in most cases, one expects
$\myheight' \le \myheight$, in which case the first term
disappears). The degree $\scrC$ of the output is at most
${{n-1} \choose {p-1}}\mydegh^p(\mydegh-1)^{n-p}$, and its height
$\scrH$ is
$ O\tilde{~}\left (n{n \choose
    p}(\myheight+\mydegh)\mydegh^p(\mydegh-1)^{n-p}\right )$.

One can always construct a naive straight-line program for the input
polynomials, simply by computing all monomials they involve and
summing them. In this case, one can take
$E \in O(p{{n+\mydegh}\choose{n}})$ and $\myheight'=\myheight$, which
leads to a boolean runtime of the form
$$ O\tilde{~}\left ( {{n-1} \choose {p-1}}{n \choose p}{{n+\mydegh}\choose{n}}
  (\myheight+\mydegh)\mydegh^{2p}(\mydegh-1)^{2(n-p)} \right ).$$
Taking $p=1$ as in~\cite{MeSa16}, this is
$ O\tilde{~}\left ( (\myheight+\mydegh)\mydegh^{n+2}(\mydegh-1)^{2(n-1)}
\right ).$
In this case, for large $\mydegh$, our result is hardly an improvement
over the cost $O\tilde{~}( \mydegh^{3n}  \myheight)$ obtained in that
reference.

The gain is much more significant in the case $\mydegh=2$. In this case, we
can take $E\in O(pn^2)$ and $\myheight'=\myheight$. As a result, we
obtain a running time of
$ O\tilde{~} (n^{5} {{n-1} \choose {p-1}}{n \choose p}\myheight 2^{2p} )$
for the quadratic case, for an output of degree $\scrC$ at most
${{n-1} \choose {p-1}} 2^p$, and of height $\scrH$ in
$O\tilde{~} (n {n \choose p}\myheight 2^p )$: when the codimension $p$ is
fixed, all these quantities are {\em polynomial} in $n$, with 
the runtime being $ O\tilde{~} (n^{2p+4}\myheight)$.

We end this section with an easy consequence of the above result,
concerning the determination of an isolating interval for
$\min_{\bx\in V\cap \R^n} \pi_1(\bx)$.  The output of our algorithm
describes a finite set in the $\X,\bL$-space whose projection on the
$\X$-space is the set of critical points of $\pi_1$ on $V$. From the
zero-dimensional parametrization of this set, using root isolation
algorithms as in~\cite[Section~3]{MeSa16}, we can then compute boxes
of side length $2^{-\sigma}$ around all roots of the system using
$O\tilde{~}(n \scrC^2 \scrH + n \scrC \sigma)$ bit operations, with
$\scrC$ and $\scrH$ the bounds on the output degree and height
mentioned above. For instance, in the quadratic case, this is
$O\tilde{~}( n^2{{n-1} \choose {p-1}}^2{n \choose p} \myheight 2^{3p}
+ n{{n-1} \choose {p-1}} 2^p \sigma )$ bit operations. For fixed $p$,
the cost of the root isolation step is $O\tilde{~}( n^{2p+1} \myheight
+ n^p \sigma )$, so the whole process is polynomial in~$n$.

As an illustration/application, one may mention the {\em
  Celis-Dennis-Tapia (CDT) problem}~\cite{Celi84}, to minimize a
non-convex quadratic function over the intersection of two ellipsoids,
which can be turned into an instance of the problem above by
introducing a new dummy variable.  Such problems arise naturally in
iterative non-linear optimization procedures where in one iteration
step, the objective function and the constraints are approximated by
quadratic models. Taking $p=3$ as in the CDT problem, the
overall cost for computing a zero-dimensional parametrization of the 
minimizers and computing isolating boxes is $O\tilde{~}(n^{10} \myheight + n^3 \sigma)$.

\subsection{Related work} 

\subsubsection{Multi-homogeneous polynomial systems}

As already said, the techniques used in the algorithm are not new: we
first solve the system modulo a prime, using a symbolic homotopy
algorithm that adapts to the multi-homogeneous case an algorithm given
by Jeronimo {\it et al.}~\cite{JeMaSoWa09} for the sparse case; then,
we use lifting techniques from~\cite{GiLeSa01,Schost03}, as well as
techniques coming from \cite[Section~4]{RRS}, to recover the output
over $\Z$. Taking into account our upper bound on the height of the
output, this results in the first bound (that we are aware of) on the
boolean cost of solving polynomial systems that involves their
multi-homogeneous structure in such a manner. Our results on the
heights of zero-dimensional parametrizations computed by our algorithm
rely on objects introduced by, and results due to D'Andrea, Krick and
Sombra~\cite{DaKrSo13}.

Although we do not have boolean complexity bounds to compare with,
several results are known in an arithmetic complexity model (where
we count base field operations at unit cost).
In the bi-homogeneous case, the algorithm in~\cite{HeJeSaSo02} has an
 arithmetic cost at least $\scrC_\bn(\d)^5\scrC_{\bn'}(\d')^6$
with $\bn'=(1, n_1, \ldots n_m)$ and $\d'=(\dunder'_1, \ldots,
\dunder'_m)$, where for all $i$ we set $\dunder'_i=(1, d_{i,1}, \ldots, d_{i, m})$.
Closer to us are two algorithms from~\cite{GiLeSa01}
and~\cite{JeMaSoWa09}. The geometric resolution algorithm
of~\cite{GiLeSa01} solves our questions in time quadratic in a
particular geometric degree associated to the input system; however,
in general, this degree cannot be controlled in terms of the
quantities $\scrC_\bn(\d)$ and $\scrC_{\bn'}(\d')$ used in our
analysis (see for example those systems appearing in \cite{HNS16}); in
addition, we are not aware of a probability analysis for it.

Another line of work exploits properties of resultant formulae to
solve multi-homogene\-ous systems; we refer in particular to
\cite{JeSa07, EM12, GKP13} among many others and we also mention
\cite{EMT16} focusing on the particular case of bilinear systems. In
this setting, solving multi-homogeneous polynomial systems mostly reduce
to compute determinants of structured submatrices of the Macaulay
matrix. The bit complexity results obtained this way are cubic in
$\scrC_\bn(\d)$; exploiting the structure of Macaulay submatrices, we
do not know whether a result essentially linear in
$\scrC_\bn(\d)\scrH_\bn(\d)$, such as the one  in Theorem~\ref{thm:homoZ},
could be obtained in this formalism.

\subsubsection{Minimization problems}

We comment now on related work on minimization problems.  If we let
$\mydegh$ be the maximum of the degrees of the input polynomials, it is
known that the critical point method runs in time
$\mydegh^{O(n)}$~\cite[Section~14.2]{BaPoRo06} in an {\em algebraic
  complexity model}, counting arithmetic operations in the base field
$\Q$ at unit cost.

More precisely, using Gr\"obner bases techniques,
papers~\cite{DBLP:conf/issac/FaugerePS12} and~\cite{Sp14} establish
that if the polynomials $h_1, \ldots, h_p$ are generic enough, this
computation can done using
\[ O\left ({{n+D_\reg}\choose{n}}^\omega+\left
    (\mydegh^p(\mydegh-1)^{n-p}{{n-1}\choose{p-1}}\right )^\omega\right
) \]
operations in $\Q$, with $D_\reg=\mydegh(p-1)+(\mydegh-2)n+2$, and where
$\omega$ is such that computing the row echelon form of a matrix of
size $k\times k$ is done in time $O(k^\omega)$. In the quadratic case,
with $\mydegh=2$,
this becomes
\[ O\left ({{n+2p}\choose{2p}}^\omega+\left
    (2^{p}{{n-1}\choose{p-1}}\right )^\omega\right )\subset
O((n+2p)^{2p\omega}) \] operations in $\Q$.  The best known value for
$\omega$ is $\omega < 2.38$~\cite{LeGall14}; in the often discussed
case where $p$ is constant, the cost is then $O(n^{4.76p})$. For the
CDT problem, we have $p=3$, so that generic instances of it can be
solved using $O(n^{14.28})$ arithmetic operations.


The quadratic case has actually been known to be solvable in
$n^{O(p^2)}$ bit operations since Barvinok's paper~\cite{Barv93}; this
was later improved to $n^{O(p)}$ by Grigoriev and Vorobjov in
\cite{GP}. The algorithms are deterministic, and make no assumption on
the input system, but the constant in the big-$O$ exponent is not
specified. In~\cite{JePe14}, Jeronimo and Perrucci give a randomized
algorithm to compute the minimum of a function on a basic
semi-algebraic set. In our setting, with $\myheight=2$ and $p$ fixed,
the running time is $O\tilde{~}(n^{2p+5} + n^{3p})$ arithmetic
operations.

Fewer references discuss bit complexity. When $p=1$, \cite[Prop. 3.8
  and Lemma 4.1]{MeSa16} give boolean complexity estimates of the form
$O\tilde{~}(\myheight \mydegh^{3n})$ for critical point computation on
a hypersurface, under some genericity assumptions on the input; here,
$\myheight$ is an upper bound on the height of the input
polynomials. Height bounds on the minimum polynomial defining
$\min_{\bx \in V\cap \R^n}\pi_1(\bx)$ are given in~\cite{JPT}; they
turn out to be of the same order as the ones we derive, but no
algorithm with bit complexity depending on these bounds is
given. 


\subsection{Plan of the paper} We start by recalling basic notions
and fixing notation in Section~\ref{sec:prelim}. In particular, this
section states height bounds for the output of our algorithms; the
proof of these bounds is postponed to the end of the paper in
Section~\ref{proof:prop:hgt}. Section~\ref{sec:degreebounds} gives a
symbolic homotopy deformation algorithm dedicated to multi-homogeneous
cases; in the main algorithm, we apply this result over a prime
field. Section~\ref{sec:bounds} discusses computations over the
rationals, with a cost analysis in the boolean model. We finally apply
this to our minimization problem in Section~\ref{sec:appli}.


\paragraph{Acknowledgments.}The first author is member of and
supported by Institut Universitaire de France. The second author is
supported by an NSERC Discovery Grant.

\section{Notation and preliminaries}\label{sec:prelim}

\subsection{Basic notions}

In the whole paper, we use freely basic notions such as dimension,
degree, reducibility and irreducibility, smoothness\dots, of algebraic
sets. We recall these basic notions below and we refer the reader to
references such as~\cite{ZaSa58,Mumford76,Shafarevich77,Eisenbud95}
for more details.

For a field $\K$ and an algebraic closure $\Kbar$ of $\K$, a
$\K$-algebraic set $V\subset \Kbar{}^N$ is the set of common solutions
in $\Kbar{}^N$ to $N$-variate polynomial equations with coefficients in
$\K$.  Usually, the base field $\K$ will be clear from the context; in
this case we simply say {\em algebraic sets} for $\K$-algebraic sets.

For a sequence of polynomials $\f_M=(f_1, \ldots, f_M)$ in the ring
of $N$-variate polynomials with coefficients in $\K$,
$V(\f)\subset \Kbar{}^N$ denotes the algebraic set defined by
$f_1=\cdots=f_M=0$. The ideal generated by $\f$ is denoted by
$\langle \f\rangle$. The ideal associated to $V(\f)$ is the set of
polynomials that vanish at all points of $V(\f)$.

For an algebraic set $V=V(\f)$, the dimension $\dim(V)$ of $V$ is
the Krull dimension of the coordinate ring of $V$; zero-dimensional
algebraic sets are non-empty finite algebraic sets. By convention,
the empty algebraic set has dimension $-1$.

When $V$ is an irreducible algebraic set, the degree of $V$ is the
number of points lying in the intersection of $V$ with $\dim(V)$
generic hyperplanes. The degree of an arbitrary algebraic set is the
sum of the degrees of its irreducible components. When the algebraic
set under consideration has dimension zero, its degree is its
cardinality.

An algebraic set $V=V(\f)\subset \Kbar{}^N$ is said to be
equidimensional when all its irreducible components have the same
dimension. In this case, assuming that $\f$ generates a radical
ideal, the smooth points of $V$ are those points at which the rank of
the Jacobian matrix of $\f$ is the codimension of $V$, {\it i.e.},
$N-\dim(V)$. Those points which are not smooth are called singular.

\subsection{Chow ring and arithmetic Chow ring}\label{ssec:heightbds}

We recall hereafter definitions for Chow rings and arithmetic Chow
rings; the latter ones are an arithmetic analogue to Chow rings due to
D'Andrea, Krick and Sombra~\cite{DaKrSo13}, on which most of our bit
size estimates will rely.

For a field $\K$, an algebraic closure $\Kbar$ of $\K$, 
and an $m$-uple $\bn=(n_1,\dots,n_m)$,
we denote
by $\P^{\bn}(\Kbar)$ the multi-projective space
$\P^{n_1}(\Kbar)\times \cdots \times \P^{n_m}(\Kbar)$. Consider the
ring of truncated power series
$$A^*(\P^\bn(\Kbar))=\Z[\vartheta_1, \dots,\vartheta_m]/\langle
\vartheta_1^{n_1+1},\dots,\vartheta_m^{n_m+1} \rangle;
$$ it is the {\em Chow ring} of the multi-projective space
$\P^{\bn}(\Kbar)$. For $\K=\Q$, we also
define
$$A^*(\P^\bn(\Qbar),\Z)=\R[\zeta, \vartheta_1, \dots,\vartheta_m]/\langle
\zeta^2, \vartheta_1^{n_1+1},\dots,\vartheta_m^{n_m+1} \rangle;
$$
this is called the {\em arithmetic Chow
  ring} of $\Q$. Since the field we use will be clear from the context, we
will use the simpler notations $A^*(\P^\bn)$ and $A^*(\P^\bn, \Z)$ for
Chow rings and arithmetic Chow rings.

Now, given a multi-degree
$\dunder=(d_1, \ldots, d_m)$ and  a non-negative real number $\eta$, we set 
\[
\chi(\dunder)=d_1\vartheta_1+\cdots+d_m\vartheta_m\in
A^*(\P^\bn)\]
and
\[
\chi'(\eta, \dunder)=\eta\zeta+d_1\vartheta_1+\cdots+d_m\vartheta_m\in
A^*(\P^\bn,\Z).\]
Given vectors $\d=(\dunder_1, \ldots, \dunder_M)$ and
$\bbeta=(\eta_1, \ldots, \eta_M)$, with all  $\dunder_i$ in $\N^m$
and all  $\eta_i$ in $\R_{\geq 0}$, we set 
\[
\bchi(\d)=\chi(\dunder_1)\cdots \chi(\dunder_M)
\in A^*(\P^\bn).
\]
and
\[
\bchi'(\bbeta, \d)=\chi(\eta_1, \dunder_1)\cdots \chi(\eta_M, \dunder_M)
\in A^*(\P^\bn,\Z).
\]
For $\bc=(c_1, \ldots, c_m)\in \N^m$, we denote in the sequel
$\vartheta_1^{c_1}\cdots\vartheta_m^{c_m}$ by $\vartheta^\bc$.  Note
that all monomials appearing in $\bchi(\d)$ and
$\bchi'(\bbeta,\d)$ have total degree $M$; then, we define the
quantities
\[
\scrC_\bn(\d) = \sum_{\bc \in \N^m} \coeff(\bchi(\d), \vartheta^\bc)
\]
and 
\[
\scrH_{\bn}(\bbeta,\d) = \sum\limits_{\bc \in \N^m,\ |\bc| =M-1}
\coeff(\bchi'(\bbeta,\d), \zeta\ \vartheta^\bc) + \sum_{\bc \in \N^m,\ |\bc| =M}
\coeff(\bchi'(\bbeta,\d), \vartheta^\bc).
\]
Note that they coincide with the quantities defined in
Subsection~\ref{intro:notation}. Observe also that all coefficients of
 $\bchi'(\bbeta,\d)$ not taken into account in the
above sums are necessarily zero.

The quantities $\scrC_\bn(\d)$ and $\scrH_\bn(\bbeta, \d)$ play a
crucial role for bounding the degree and the height of the output of
the algorithms described in the sequel. As an illustration, the
following degree inequality is proved in~\cite[Proposition I.1,
  electronic appendix]{SaSc13}. In what follows, we let 
$\X_1,\dots,\X_m$ be blocks of variables of respective lengths
$n_1,\dots,n_m$, as defined in the introduction.
\begin{proposition}\label{prop:degH}
  Let $\f=(f_1,\dots,f_M)$ be polynomials in $\K[\X_1,\dots,\X_m]$,
  with $\mdeg(\f) \le\d$. Then, the $(N-M)$-equidimensional component
  of $V(\f)$ has degree at most $\scrC_\bn(\d)$.
\end{proposition}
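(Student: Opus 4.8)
The plan is to prove the bound by induction on $M$, using B\'ezout-type intersection theory in the multi-projective space $\P^\bn$. First I would reduce the affine statement to a multi-projective one: homogenize each $f_i$ with respect to the block $\X_i$ by introducing extra homogenizing variables $X_{i,0}$, obtaining multi-homogeneous polynomials $F_i$ of multidegree exactly $\dunder_i$ (or bounded by it, padding if necessary) on $\P^\bn$. The closure $\overline{V(\f)}$ inside $\P^\bn$ is then contained in $V(F_1,\dots,F_M)$, and the $(N-M)$-dimensional component of the affine set maps to a union of $(N-M)$-dimensional components of this multi-projective variety; since degree of an affine variety equals the degree of its projective closure (intersection with generic linear spaces, which one can take to avoid the hyperplane at infinity), it suffices to bound the degree of the equidimensional part of $V(F_1,\dots,F_M)$ in $\P^\bn$.

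Next I would set up the intersection-theoretic computation in the Chow ring $A^*(\P^\bn)=\Z[\vartheta_1,\dots,\vartheta_m]/\langle \vartheta_1^{n_1+1},\dots,\vartheta_m^{n_m+1}\rangle$. Each hypersurface $V(F_i)$ has class $\chi(\dunder_i)=d_{i,1}\vartheta_1+\cdots+d_{i,m}\vartheta_m$ (or class bounded coefficientwise, after padding the multidegree up to $\dunder_i$; here one uses that if $\deg(F_i,\X_j)<d_{i,j}$ one can multiply by a suitable power of a linear form to increase it without leaving the ambient space). The key geometric input is the multi-homogeneous B\'ezout bound: the sum of the degrees of the $(N-M)$-dimensional components of $V(F_1,\dots,F_M)$ is at most the ``numerical'' intersection product, namely the sum of the coefficients of all monomials $\vartheta^\bc$ with $|\bc|=M$ appearing in $\prod_{i=1}^M \chi(\dunder_i)=\bchi(\d)$. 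I would prove this by induction: intersecting the equidimensional part of $V(F_1,\dots,F_{M-1})$ (whose class is dominated coefficientwise by $\bchi(\d_{M-1})$ by induction hypothesis) with the hypersurface $V(F_M)$, one invokes a refined B\'ezout inequality \`a la Fulton (the degree of each component of the intersection, plus the excess components, is controlled by the product of the classes), and observes that multiplication by $\chi(\dunder_M)$ in $A^*(\P^\bn)$ exactly encodes this, with the truncation relations $\vartheta_j^{n_j+1}=0$ accounting for the fact that one cannot impose more than $n_j$ conditions coming from the $j$-th block. Summing all surviving coefficients gives $\sum_{\bc}\coeff(\bchi(\d),\vartheta^\bc)=\scrC_\bn(\d)$ (the monomials of total degree $<M$ correspond to higher-dimensional excess components and only help; those of total degree $M$ but with some $c_j>n_j$ are zero).

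The main obstacle I anticipate is making the reduction to the multi-homogeneous setting fully rigorous, specifically two points: (i) controlling what happens ``at infinity,'' i.e.\ ensuring that no $(N-M)$-dimensional component of the affine $V(\f)$ is lost or that its degree is not undercounted when passing to the closure — this is handled by choosing the $\dim = N-M$ generic linear sections to be affine (avoiding the coordinate hyperplanes at infinity), so the affine and projective degrees agree; and (ii) the padding argument when $\mdeg(f_i)$ is strictly less than $\dunder_i$ in some block, which requires that increasing a polynomial's multidegree by multiplying by powers of generic linear forms only adds components lying in those linear forms' zero loci, hence does not decrease — and in the Chow ring only increases coefficientwise — the relevant class. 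Since the statement is cited from \cite[Proposition I.1, electronic appendix]{SaSc13}, I would in practice simply invoke that reference; but the self-contained argument above, resting on the multi-homogeneous B\'ezout theorem and Fulton's refined intersection bounds, is the route I would take to reprove it.
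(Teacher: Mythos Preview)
The paper does not prove this proposition; it cites \cite[Proposition~I.1, electronic appendix]{SaSc13}, as you yourself note at the end of your proposal. So there is no in-paper argument to compare against beyond that citation.

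Your sketch has the right architecture---pass to $\P^\bn$, bound the Chow class of the top equidimensional part coefficientwise by $\bchi(\d)$ via a refined B\'ezout inequality, then sum coefficients---but there is a genuine gap at the step where you convert the multi-projective class back into an \emph{affine} degree. The inductive Chow-ring argument yields $[\bar W]\le\bchi(\d)$ coefficientwise, hence $\sum_{|\bc|=N-M}\deg_\bc(\bar W)\le\scrC_\bn(\d)$. But the statement concerns the ordinary degree of $W\subset\A^N$, computed by intersecting with $N-M$ generic affine hyperplanes; each such hyperplane closes up in $\P^\bn$ to a divisor of class $\vartheta_1+\cdots+\vartheta_m$, so the bound one reads directly off the Chow ring is the coefficient of $\vartheta^\bn$ in $\bchi(\d)\cdot(\vartheta_1+\cdots+\vartheta_m)^{N-M}$, namely $\sum_\bc\binom{N-M}{\bc}\,\coeff(\bchi(\d),\vartheta^\bc)$, which in general exceeds $\scrC_\bn(\d)$ (for instance two generic bidegree-$(1,1)$ forms with $\bn=(2,2)$ give $6$ this way, versus $\scrC_\bn(\d)=4$, while the true affine degree is~$4$). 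What your plan still needs is the comparison $\deg(W)\le\sum_\bc\deg_\bc(\bar W)$; you neither state it nor indicate how to prove it, and it is not a formality---it is essentially where the work lies. The obstacles you do flag (behaviour at infinity, padding multidegrees) are by contrast routine.
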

\noindent In particular, if $M=N$, $Z(\f)$ has degree (that is,
cardinality) at most $\scrC_\bn(\d)$, and thus all polynomials
appearing in a zero-dimensional parametrization of it have degree at
most $\scrC_\bn(\d)$. This latter claim is not new; see for
instance~\cite{MoSo87}.

All these definitions being written, we can state the new result of
this paragraph. Its proof is given in Section~\ref{proof:prop:hgt}.

\begin{proposition}\label{prop:hgt}
  Let $\f=(f_1,\dots,f_N)$ be polynomials in $\Z[\X_1,\dots,\X_m]$,
  with $\mdeg(\f) \le \d=(\dunder_1,\dots,\dunder_N)$ and
  $\dunder_i=(d_{i,1},\dots,d_{i,m})$ for all $i$, and $\hgt(\f) \le
  \bs=(s_1,\dots,s_N)$; let also $\lambda$ be a separating linear form
  for $Z(\f)$ with integer coefficients of height at most $b$. Then
  all polynomials in the zero-dimensional parametrization of $Z(\f)$
  associated to $\lambda$ have height at most $\scrH_\bn(\bbeta,\d) +
  (b + 4\log(N + 2))\scrC_\bn(\d),$ with
$$\bbeta=\left (s_i + \sum_{j=1}^m \log(n_j+1) d_{i,j}\right)_{1 \le i \le N}.$$
\end{proposition}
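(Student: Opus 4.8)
The plan is to reduce the height estimate for the zero-dimensional parametrization to a purely arithmetic-geometric statement about the multi-projective closure of $Z(\f)$, and then to invoke the arithmetic Bézout theory of D'Andrea, Krick and Sombra. First I would homogenize: pass from $\f=(f_1,\dots,f_N)$ in $\Z[\X_1,\dots,\X_m]$ to the multi-homogeneous system $\tilde\f$ in the multi-projective space $\P^\bn$, introducing a homogenizing variable $X_{j,0}$ in each block $\X_j$. Each $f_i$ becomes $\tilde f_i$ of multi-degree exactly $\dunder_i$, and one must check that the height is not damaged: the monomials of $\tilde f_i$ are in bijection with those of $f_i$ (padding with powers of the $X_{j,0}$), so $\hgt(\tilde f_i) = \hgt(f_i) \le s_i$. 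The finite set $Z(\f)$ lifts to a zero-dimensional multi-projective variety $V \subset \P^\bn$ contained in the chart $X_{1,0}\cdots X_{m,0}\neq 0$, of degree at most $\scrC_\bn(\d)$ by Proposition~\ref{prop:degH}.

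The next step is to control the \emph{arithmetic} data of $V$. Here I would use the arithmetic multi-homogeneous Bézout inequality of~\cite{DaKrSo13}: the ``arithmetic Chow class'' $\widehat{[V]} \in A^*(\P^\bn,\Z)$ of the cycle cut out by $\tilde f_1,\dots,\tilde f_N$ is bounded, coefficientwise, by $\prod_{i=1}^N \chi'(\eta_i,\dunder_i) = \bchi'(\bbeta,\d)$, where each $\eta_i$ accounts for the height of $\tilde f_i$ plus the intrinsic ``size of the equation'' correction — this is precisely where the $\sum_j \log(n_j+1)\, d_{i,j}$ term enters, coming from bounding the sup-norm / Mahler measure of $\tilde f_i$ in terms of its coefficient height and its multi-degree (the $\log(n_j+1)$ being the logarithm of the number of monomials available in each block). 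Reading off the relevant coefficients of $\bchi'(\bbeta,\d)$ — those attached to $\zeta\,\vartheta^\bc$ with $|\bc|=N-1$ and to $\vartheta^\bc$ with $|\bc|=N$ — and summing them is exactly $\scrH_\bn(\bbeta,\d)$; this bounds the (normalized) height of $V$ as a $0$-cycle over $\Q$.

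Finally I would translate the height of the cycle $V$ into the height of the polynomials of the parametrization $\scrQ=((q,v_1,\dots,v_N),\lambda)$ associated to $\lambda$. Writing $\tau$ for the roots of $q$, one has $q = \prod_\tau (T - \lambda(\xi_\tau))$ over the points $\xi_\tau \in V$, and $v_i = \sum_\tau \frac{\xi_{\tau,i}}{\,\cdot\,}\cdots$ is an interpolation-type expression with $q'$ as denominator; the standard estimates (as in~\cite{ABRW,Rouillier99,GiLeSa01,RRS}) bound $\hgt(q)$ and $\hgt(v_i)$ by the ``height of $V$ read through $\lambda$'' plus a term of the shape $(\log\|\lambda\| + \log(\deg V) + \log(N))\cdot\deg V$. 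Since $\lambda$ has integer coefficients of height at most $b$, $\log\|\lambda\| \le b + \log N$, and $\deg V \le \scrC_\bn(\d)$, this overhead is absorbed into $(b + 4\log(N+2))\,\scrC_\bn(\d)$; combining with the bound $\scrH_\bn(\bbeta,\d)$ on the height of $V$ gives the claimed inequality. The main obstacle, I expect, is the second step: correctly invoking the arithmetic Bézout inequality of~\cite{DaKrSo13} with the \emph{right} normalization of heights (projective height of a cycle vs.\ height of its Chow form vs.\ height of defining equations) and verifying that the ``equation height correction'' is precisely $\eta_i = s_i + \sum_j \log(n_j+1)\,d_{i,j}$ and not something larger — the bookkeeping of these logarithmic terms, and the passage from a cycle-theoretic height to the height of the rational univariate representation, is where all the care is needed, even though each individual estimate is classical.
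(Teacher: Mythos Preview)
Your proposal is correct and matches the paper's proof almost exactly: multi-homogenize, apply the arithmetic B\'ezout inequality of~\cite{DaKrSo13} to bound the arithmetic class of the zero-dimensional component by $\bchi'(\bbeta,\d)$ (with the $\log(n_j+1)d_{i,j}$ correction coming precisely from the sup-norm estimate you indicate), and then descend to the heights of $q,v_1,\dots,v_N$. The only place where the paper is more explicit than your sketch is your third step: rather than appealing directly to ``standard interpolation estimates,'' the paper passes through the primitive Chow form of $Z(\f)$, realized as the defining equation of the projection of an incidence variety in $\P^\bn \times \P^N(\Qbar)$ (one further application of arithmetic B\'ezout with the incidence form, plus a projection inequality from~\cite{DaKrSo13}, to bound its Mahler measure), and then recovers $q$ and the $v_i$ by specializing $C$ and $\partial C/\partial T_i$ at the coefficients of~$\lambda$---which is where the $(b+4\log(N+2))\scrC_\bn(\d)$ overhead is produced.
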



\section{The multi-homogeneous homotopy}\label{sec:degreebounds}

In this section, we work over a perfect field $\K$, using $N$
variables $\X=\X_1,\dots,\X_m$ partitioned into $m$ blocks of
respective lengths $(n_1,\dots,n_m)$, as explained in the
introduction. Our goal here is to give a symbolic homotopy algorithm
to compute $Z(\f)$, where $\f=(f_1,\dots,f_N)$ has coefficients in
$\K$, for use in the next section. These results are for a substantial
part not new. The algorithm can in particular be seen as a
modification of that in~\cite{JeMaSoWa09}; we do however have to give
a rather detailed presentation, for reasons explained in
Subsection~\ref{ssec:homo1}.
 



\subsection{Main statement}\label{ssec:homo1}

In order to compute a zero-dimensional parametrization of the
algebraic set $Z(\f)$, we use a symbolic adaptation of
multi-homogeneous homotopy continuation algorithms. In the context of
numerical continuation techniques, this approach is detailed
in~\cite{soWa05} and references therein; in a symbolic context, the
algorithm underlying the following proposition is inspired by e.g. the
algorithm in~\cite{HeJeSaSo02}, that applies in the bi-homogeneous
case.

We need here to introduce the following notation. Given a vector
$\d=(\dunder_1,\dots,\dunder_M)$, with
$\dunder_i=(d_{i,1},\dots,d_{i,m})$ for all $i$, we define the tuple
$\d'$ as $\d'=(\dunder'_1,\dots,\dunder'_M)$, with
$\dunder'_i = (1,d_{i,1},\dots,d_{i,m}) \in \N^{m+1}$ for all $i$,
together with $\bn'=(1,n_1,\dots,n_m)$. If we see $\d$ as being a
vector of multi-degrees, this corresponds to adding one new variable
(written $t$ below) and considering polynomials of degree $1$ in $t$
and multi-degree $\d$ in $\X_1,\dots,\X_m$. This allows us to
introduce the integer $\scrC_{\bn'}(\d')$, which we define as we did
for $\scrC_\bn(\d)$ above. Our convention was to use variables
$\vartheta_1,\dots,\vartheta_m$ for $\scrC_\bn(\d)$; to define
$\scrC_{\bn'}(\d')$, we introduce a new variable $\vartheta_0$ and let
$\scrC_{\bn'}(\d')$ be the sum of the coefficients of the polynomial
\[
\prod_{i=1}^M(\vartheta_0+d_{i,1}\vartheta_1+\cdots+d_{i,m}\vartheta_m)
\mod \langle \vartheta_0^2, \vartheta_1^{n_1+1}, \ldots,
\vartheta_m^{n_m+1}\rangle.
\]

\begin{proposition}\label{prop:homo1}
  Suppose that $\f=(f_1,\dots,f_N)$ has multi-degree at most
  $\d=(\dunder_1,\dots,\dunder_N)$, with all $\dunder_i$ in $\N^m$,
  and that $\f$ is given by a straight-line program $\Gamma$ of size
  $L$; suppose further that $\K$ has characteristic either zero or at
  least $e$, where
  $$e=\max \left(\max_{1 \le j \le m} d_{1,j} + \cdots + d_{N,j}, 8(N-1)\scrC_\bn(\d)^2\right).$$
  There exists an algorithm {\sf NonsingularSolutions} that takes
  $\Gamma$ and $\d$ as input and that outputs one of the following:

\smallskip

  \begin{itemize}
  \item either a zero-dimensional parametrization of $Z(\f)$,

\smallskip

  \item or a zero-dimensional parametrization of degree less than that of $Z(\f)$,

\smallskip
  \item or {\sf fail}.
  \end{itemize}

\smallskip

\noindent  The first outcome occurs with probability at least $7/8$.  In any
  case, the algorithm uses
  $$ O\tilde{~}\left (\scrC_\bn(\d)\scrC_{\bn'}(\d') \left (L+\sum_{1
    \le i \le N, 1 \le j \le m}d_{i,j}+N^2 \right)N \right)$$
  operations in $\K$, where we write $\d'=(d'_1,\dots,d'_N)$, with
  $d'_i = (1,d_{i,1},\dots,d_{i,m})$ for all $i$ and
  $\bn'=(1,n_1,\dots,n_m)$.
\end{proposition}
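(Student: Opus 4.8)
The plan is to set up a symbolic multi-homogeneous homotopy connecting a ``start system'' $\g=(g_1,\dots,g_N)$ whose solutions are known and easy to parametrize to the target system $\f$. First I would build the start system: for each $i$, pick $g_i$ to be a product of generic affine linear forms, with $d_{i,j}$ factors involving only the block $\X_j$, so that $\mdeg(g_i)=\dunder_i$ and the solution set $Z(\g)$ is a transversal intersection whose cardinality is exactly the multi-homogeneous B\'ezout number $\scrC_\bn(\d)$ (this is classical and the count is forced by Proposition~\ref{prop:degH} together with genericity). A zero-dimensional parametrization of $Z(\g)$ can be written down directly, or obtained by a short computation, since each $Z(\g)$-point is the unique solution of a square linear system. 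Then introduce the homotopy parameter $t$ and set $\f_t = t\,\f + (1-t)\,\g$, so $\f_0=\g$, $\f_1=\f$; viewing $t$ as an extra variable of degree $1$ places us exactly in the multi-degree setting $(\bn',\d')$ that motivates the definition of $\scrC_{\bn'}(\d')$. The curve $\mathcal{C}$ traced by the solutions as $t$ varies over $\bar\K$ has degree bounded by $\scrC_{\bn'}(\d')$ by Proposition~\ref{prop:degH} applied with $N$ equations in $N+1$ variables.

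Next I would run the standard symbolic homotopy continuation engine on this curve: lift the parametrization of the fiber at $t=0$ as a power series in $t$ (Newton/Hensel iteration on the SLP $\Gamma$ and on $\g$), track it to sufficiently high $t$-adic precision, and then recover the fiber at $t=1$ by rational reconstruction / evaluation. The precision needed and hence the number of Newton steps is governed by the degree of $\mathcal{C}$ in $t$, which is $O(\scrC_{\bn'}(\d'))$; each Newton step costs $O\tilde{~}(L + \sum_{i,j} d_{i,j} + N^2)$ operations per coefficient (evaluation of $\f$ and its Jacobian via $\Gamma$, plus linear algebra of size $N$), and one manipulates parametrizations of degree $O(\scrC_\bn(\d))$ in the auxiliary variable $T$; multiplying these gives the claimed bound $O\tilde{~}(\scrC_\bn(\d)\,\scrC_{\bn'}(\d')(L+\sum d_{i,j}+N^2)N)$. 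The characteristic hypothesis $\mathrm{char}(\K)=0$ or $\ge e$ is exactly what is needed so that (a) denominators appearing in the generic start system and in the random choices are invertible, and (b) the $8(N-1)\scrC_\bn(\d)^2$ bound guarantees that a random choice of linear form and of specialization values avoids the finitely many bad hyperplanes — there are $O(N\scrC_\bn(\d)^2)$ of them coming from separating-form conditions and from points where the curve is vertical — which is where the $7/8$ success probability comes from (Zippel--Schwartz over a set of size $\ge e$).

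The two subtleties that require care, and which I expect to be the main obstacles, are: first, handling the bad behaviour of the homotopy at $t=1$. The target $Z(\f)$ is defined by the Jacobian-rank condition, and some branches of $\mathcal{C}$ may not be defined at $t=1$, or several branches may collide, or points may escape to infinity; the algorithm must detect this and is allowed to output either a correct parametrization of $Z(\f)$, or a parametrization of \emph{smaller} degree (capturing only part of $Z(\f)$), or \textsf{fail}. Proving that these are the only outcomes — in particular that the algorithm never outputs a parametrization of a set strictly containing or incomparable to $Z(\f)$ — requires arguing that the limit at $t=1$ of the tracked points lies in $Z(\f)$ whenever it is well-defined, which uses the fact that the Jacobian of $\f$ has maximal rank on $Z(\f)$ and a properness/specialization argument on the incidence curve. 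Second, one must verify that a \emph{single} random linear form $\lambda$ is simultaneously separating along the whole curve $\mathcal{C}$ (not just at $t=0$), so that the parametrization can be propagated coherently; this is again a genericity statement quantified by the $\scrC_\bn(\d)^2$ factor. Once these points are established, the cost analysis is the routine bookkeeping indicated above, and the proposition follows; the detailed presentation is needed precisely because, unlike in~\cite{JeMaSoWa09}, we must track the multi-homogeneous degree $\scrC_{\bn'}(\d')$ through every step rather than a single B\'ezout number.
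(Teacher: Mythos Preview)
Your outline matches the paper's approach: the same homotopy $t\f+(1-t)\g$, the same degree bound $\scrC_{\bn'}(\d')$ on the curve via Proposition~\ref{prop:degH}, lifting from $t=0$ by Newton iteration, and the same cost bookkeeping. Two points of divergence are worth noting. First, the start system in the paper is \emph{deterministic}, built from Vandermonde-type affine forms $\kappa_i(\X_j)=X_{j,1}+iX_{j,2}+\cdots+i^{n_j}$ (Lemma~\ref{lemma:g}); no genericity is invoked there, and the only random choice in the whole algorithm is the linear form $\lambda$, picked from the structured family $u^{(i)}=X_1+iX_2+\cdots+i^{N-1}X_N$.

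Second, and more substantively, your phrase ``recover the fiber at $t=1$ by rational reconstruction / evaluation'' and your condition ``$\lambda$ separating along the whole curve'' gloss over the actual technical core. The coefficients of $\scrQ$ lie in $\K(t)$ and may have poles at $t=1$; the paper's Lemma~\ref{lemma:rw} shows that after multiplying $q,v_1,\dots,v_N$ by $\tau^e$, with $\tau=t-1$ and $e=-\nu(q)$, one can set $\tau=0$ and read off polynomials $r,w_j$ encoding exactly the bounded limits $\varphi_1,\dots,\varphi_{c'}$; one then strips multiple roots of $r$ and applies {\sf Clean} (filtering by the Jacobian determinant) to land in $Z(\f)$. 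For this to work, $\lambda$ must be \emph{well-separating}: separating for $Z(\g)$, separating for the limit set $\{\varphi_1,\dots,\varphi_{c'}\}$, and satisfying the valuation condition $\nu(\lambda(\Phi_i))=\mu_i$ on each branch $\Phi_i$. The last condition is not a ``separating along the curve'' statement; it is what guarantees that the leading coefficient $r_0$ in Lemma~\ref{lemma:rw} is nonzero. The branches $\Phi_i$ are described as generalized (Hahn-type) power series rather than Puiseux series, precisely because the paper needs this in positive characteristic (Lemma~\ref{lemma:nonrep}); your sketch should account for this, since ordinary Puiseux expansions need not exist when ${\rm char}(\K)>0$.
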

A discussion on probabilistic aspects and cases where the algorithm
fails is given in Remark~\ref{rem:failure}.

The algorithm of~\cite{JeMaSoWa09} deals with symbolic homotopies for
sparse systems, with a running time that would be comparable to ours
in the case of multi-homogeneous systems. However, that algorithm
requires a base field of characteristic zero (whereas we will need it
over a finite field), and the system $\f$ must be zero-dimensional
(which is not the case for us); in addition, the last step of that
algorithm, specialization at $t=1$ (Section~6.2 in~\cite{JeMaSoWa09})
appears to overlook issues that we discuss below, inspired
by~\cite[Section~4]{RRS}. For these reasons, lacking another
reference, we decided to include a self-contained proof dedicated to
our multi-homogeneous situation.

Without loss of generality, in what follows, we suppose that all
polynomials $f_i$ are non-constant.


\subsection{The start system}

The following construction is from~\cite{HeJeSaSo02, JeSa07} (however,
the cost estimates below are new). For any integers $i,j$, with $j$ in
$\{1,\dots,m\}$, let us define the affine polynomial
$$\kappa_i(\X_j) = X_{j,1} + i X_{j,2} + \cdots + i^{n_j-1} X_{j,n_j} + i^{n_j}.$$
Next, considering non-negative integers $\dunder=(d_1,\dots,d_m)$
and $\underline{e}=(e_1,\dots,e_m)$, we 
define the polynomial
$$g_{\dunder,\underline{e}} = \prod_{j=1}^m \prod_{k=0}^{d_j-1} \kappa_{k+e_j}(\X_j).$$
The following result is straightforward, once one notices that for any
$i$, $\kappa_i(\X_j)$ has multi-degree $(0,\dots,0,1,0,\dots,0)$, with $1$
at the $j$-th entry.
\begin{lemma}\label{lemma:2}
  The polynomial $g_{\dunder,\underline{e}}$ has multi-degree $\dunder$.
\end{lemma}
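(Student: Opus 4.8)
The plan is to deduce the claim directly from the definition of multi-degree and its behaviour under products. First I would note that for every integer $i$ and every $j\in\{1,\dots,m\}$, the affine polynomial
\[
\kappa_i(\X_j) = X_{j,1} + i X_{j,2} + \cdots + i^{n_j-1} X_{j,n_j} + i^{n_j}
\]
involves only the variables of the block $\X_j$, and its coefficient of $X_{j,1}$ equals $1$; hence $\kappa_i(\X_j)$ is nonzero, has degree exactly $1$ in $\X_j$, and degree $0$ in $\X_\ell$ for every $\ell\neq j$. In other words $\mdeg(\kappa_i(\X_j)) = \e_j$, the $j$-th unit vector of $\N^m$, independently of the value of $i$.

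Next I would invoke the fact that $\K[\X_1,\dots,\X_m]$ is an integral domain: for any block $\X_i$ and any two nonzero polynomials $f,g$, the leading coefficient of $fg$ with respect to $\X_i$ (a polynomial in the remaining variables) is the product of those of $f$ and $g$, hence nonzero, so that $\deg(fg,\X_i)=\deg(f,\X_i)+\deg(g,\X_i)$. Thus multi-degree is additive over products. Applying this to
\[
g_{\dunder,\underline{e}} = \prod_{j=1}^m \prod_{k=0}^{d_j-1} \kappa_{k+e_j}(\X_j)
\]
gives $\mdeg(g_{\dunder,\underline{e}}) = \sum_{j=1}^m \sum_{k=0}^{d_j-1} \mdeg(\kappa_{k+e_j}(\X_j)) = \sum_{j=1}^m d_j\,\e_j = \dunder$, which is exactly the assertion.

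There is no real obstacle here; the only step deserving a word of justification is the additivity of multi-degree under products, which holds because $\K$ is a field and the polynomial ring therefore has no zero divisors. If one prefers to avoid leading forms altogether, one can instead exhibit, for each block $\X_j$, the monomial $X_{j,1}^{d_j}$ arising as the product of the terms $X_{j,1}$ over $k=0,\dots,d_j-1$: its coefficient in $g_{\dunder,\underline{e}}$ is $1$, so degree $d_j$ in $\X_j$ is attained, while no monomial of $g_{\dunder,\underline{e}}$ can exceed degree $d_j$ in $\X_j$ since each factor $\kappa_{k+e_j}(\X_j)$ contributes at most $1$. This yields the same conclusion.
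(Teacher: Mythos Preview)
Your proof is correct and follows the same approach as the paper: identify the multi-degree of each factor $\kappa_i(\X_j)$ as the $j$-th unit vector, then use additivity of multi-degree under products to obtain $\sum_{j=1}^m d_j\,\e_j=\dunder$. The paper states this in one line without spelling out the additivity argument, so your version is simply a more detailed rendering of the same reasoning.
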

Finally, given multi-degrees $\d=(\dunder_1,\dots,\dunder_N)$, with
each $\dunder_i$ in $\N^m$, we define the system $\g=(g_1,\dots,g_N)$ by
$$g_i = g_{\dunder_i, \dunder_1 + \cdots + \dunder_{i-1}}
= \prod_{j=1}^m \prod_{k=0}^{d_{i,j}-1} \kappa_{k+d_{1,j}+\cdots+d_{i-1,j}}(\X_j), \qquad 1 \le i \le N.$$
\begin{lemma}\label{lemma:g}
  Suppose that $\K$ has characteristic zero, or at least $\max_{1 \le
    j \le m} d_{1,j} + \cdots + d_{N,j}$, and that for all $i$, $\dunder_i$
  is different from $(0,\dots,0)$. Then the following holds:

\smallskip

  \begin{itemize}
  \item for $i$ in $\{1,\dots,N\}$, $g_i$ has multi-degree $\dunder_i$;

\smallskip

  \item one can compute $\g$ by means of a straight-line program 
    of length $O\tilde{~}(\sum_{i,j} d_{i,j})$;

\smallskip

  \item $\g$ has $\scrC_\bn(\d)$ roots, and one can compute all of
    them using $O\tilde{~}(\scrC_\bn(\d) N)$ operations in $\K$;

\smallskip

  \item the Jacobian matrix of $\g$ is invertible at all these roots.
  \end{itemize}
\end{lemma}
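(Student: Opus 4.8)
The plan is to verify each of the four bullet points in Lemma~\ref{lemma:g} in order, leaning on Lemma~\ref{lemma:2} and on elementary properties of the linear forms $\kappa_i(\X_j)$.

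\textbf{Multi-degree of $g_i$.} Each $g_i = g_{\dunder_i, \dunder_1+\cdots+\dunder_{i-1}}$ is, by definition, of the shape $g_{\dunder,\underline{e}}$ with $\dunder = \dunder_i$. By Lemma~\ref{lemma:2}, $g_{\dunder_i,\underline{e}}$ has multi-degree $\dunder_i$ regardless of the shift $\underline{e}$, which immediately gives the first point. (The assumption $\dunder_i \neq (0,\dots,0)$ is what guarantees $g_i$ is non-constant, consistent with the standing assumption that all $f_i$ are non-constant.)

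\textbf{Straight-line program for $\g$.} The form $\kappa_i(\X_j) = X_{j,1} + iX_{j,2} + \cdots + i^{n_j-1}X_{j,n_j} + i^{n_j}$ can be evaluated by Horner's rule in $O(n_j)$ operations once the integer $i$ is available; across all $i$ and $j$ appearing in $\g$, the number of distinct factors is $\sum_{i,j} d_{i,j}$ (the index $k+d_{1,j}+\cdots+d_{i-1,j}$ ranges over $d_{i,j}$ consecutive values for each pair $(i,j)$), so all factors are computed with $O\tilde{~}(\sum_{i,j} d_{i,j} \cdot \max_j n_j)$ operations; forming each $g_i$ as the product of its factors and then taking the product structure costs another $O(\sum_{i,j} d_{i,j})$ multiplications. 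Up to the polylogarithmic/linear-in-$N$ slack hidden in $O\tilde{~}$, this yields a straight-line program of length $O\tilde{~}(\sum_{i,j} d_{i,j})$. The one subtlety is that the constants $i^{n_j}$ and the powers $i^k$ are large integers; since we are working over $\K$ and counting operations in $\K$, this does not affect the arithmetic count, but one should note that these powers are built by repeated squaring within the straight-line program.

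\textbf{Roots of $\g$ and their enumeration.} This is the heart of the lemma and the step I expect to be the main obstacle. The key combinatorial observation is that $g_i = 0$ forces, for \emph{some} block $j$ and some $k \in \{0,\dots,d_{i,j}-1\}$, the linear equation $\kappa_{k+d_{1,j}+\cdots+d_{i-1,j}}(\X_j) = 0$. A solution of the whole system $\g = 0$ thus amounts to choosing, for each $i \in \{1,\dots,N\}$, a pair (block index $j_i$, offset), i.e.\ selecting one linear form $\ell_i$ from the $d_{i,1}+\cdots+d_{i,m}$ available for $g_i$, and then solving the linear system $\ell_1 = \cdots = \ell_N = 0$. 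By construction the integer indices $k+d_{1,j}+\cdots+d_{i-1,j}$ are chosen so that, for a fixed block $j$, the indices used across all the $g_i$ run through $0,1,2,\dots$ without repetition; hence if we pick $n_j$ forms in block $j$ (and we must pick exactly $n_j$, for each $j$, to get a square nonsingular linear system since $n_1+\cdots+n_m = N$), they are $\kappa_{i_1}(\X_j),\dots,\kappa_{i_{n_j}}(\X_j)$ for \emph{distinct} integers $i_1,\dots,i_{n_j}$. These define a transposed Vandermonde (alternant) linear system in $X_{j,1},\dots,X_{j,n_j}$ whose matrix has determinant a product of differences $\prod_{a<b}(i_a - i_b)$, nonzero under the characteristic hypothesis (the characteristic exceeds $\max_j \sum_i d_{i,j}$, which bounds all indices in play); so each admissible selection yields exactly one point of $\Kbar{}^N$, and it lies in $\K^N$. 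Counting: the number of ways to distribute the $N$ indices $i$ into blocks, choosing $n_j$ of them for block $j$ subject to respecting which forms belong to which $g_i$, is exactly the sum of the coefficients of $\bchi(\d) = \prod_i (d_{i,1}\vartheta_1 + \cdots + d_{i,m}\vartheta_m)$ modulo $\langle \vartheta_j^{n_j+1}\rangle$ — monomials of total degree $N$ in $\vartheta$ with $\vartheta_j$-degree $n_j$ are precisely the admissible selections, and $\scrC_\bn(\d)$ is that coefficient sum. One must also check that distinct selections give distinct points (two selections differing in block $j$ produce different intersection points of lines, because the forms $\kappa_i(\X_j)$ for distinct $i$ are pairwise non-proportional), so there is no overcounting and $\g$ has \emph{exactly} $\scrC_\bn(\d)$ roots. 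Algorithmically, one enumerates the selections (each of size $O(N)$ to describe), solves the corresponding block-diagonal Vandermonde systems — each block of size $n_j \le N$ solvable in $O\tilde{~}(n_j)$ operations by fast Vandermonde solving — giving $O\tilde{~}(N)$ per root and $O\tilde{~}(\scrC_\bn(\d) N)$ overall.

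\textbf{Invertibility of the Jacobian at the roots.} At a root $x$ obtained from a selection $(\ell_1,\dots,\ell_N)$, I claim $g_i$ vanishes to order exactly one along $\ell_i$ at $x$ and no other factor of $g_i$ vanishes at $x$; then $\nabla g_i(x)$ is a nonzero scalar multiple of the (constant) gradient of $\ell_i$, and the Jacobian of $\g$ at $x$ is, up to nonzero row-scaling, the matrix whose rows are the coefficient vectors of $\ell_1,\dots,\ell_N$ — which is (after block reordering) the block-diagonal Vandermonde matrix shown above to be invertible. The point requiring care is that no factor of $g_i$ other than the selected $\ell_i$ vanishes at $x$: a factor $\kappa_{i'}(\X_j)$ of $g_i$ with index $i' \neq$ the selected index in block $j$ vanishes at $x$ only if the selected form in block $j$ (coming from some $g_{i''}$) and $\kappa_{i'}(\X_j)$ have a common zero within the chosen $n_j$-dimensional linear solution of block $j$ — but adding $\kappa_{i'}$ to the $n_j$ already-chosen forms gives $n_j+1$ forms $\kappa$ with distinct indices in $n_j$ variables, which (Vandermonde again, any $n_j$ of them independent) have no common zero unless one coincides; generically this is avoided, and in fact it never happens here because the index sets are disjoint by the construction of $\g$. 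Assembling these four points completes the proof.
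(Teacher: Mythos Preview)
Your treatment of the first, third, and fourth bullets is essentially the paper's argument, in places spelled out in more detail (e.g.\ you explicitly argue that distinct selections give distinct roots and that no non-selected factor of $g_i$ can vanish at the root; the paper is terser there but the reasoning is the same Vandermonde inconsistency argument).

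There is, however, a genuine gap in your proof of the second bullet. Evaluating each $\kappa_i(\X_j)$ by Horner costs $O(n_j)$ operations, so your total is
\[
\sum_{j=1}^m n_j \sum_{i=1}^N d_{i,j} \;\le\; \Bigl(\max_j n_j\Bigr)\sum_{i,j} d_{i,j},
\]
and you then assert this is $O\tilde{~}(\sum_{i,j} d_{i,j})$ ``up to the polylogarithmic/linear-in-$N$ slack hidden in $O\tilde{~}$''. But $O\tilde{~}$ suppresses only polylogarithmic factors, not factors linear in $N$ (e.g.\ when $m=1$ one has $\max_j n_j = N$), so your bound is off by a genuine factor of $N$ in the worst case. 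The paper obtains the stated bound by a different mechanism: for each fixed $j$, all the required values $\kappa_i(\X_j)$ are evaluations of the single univariate polynomial
\[
P_j(T) \;=\; X_{j,1} + X_{j,2}T + \cdots + X_{j,n_j}T^{n_j-1} + T^{n_j}
\]
at the $\sum_i d_{i,j}$ integer points $k + d_{1,j}+\cdots+d_{i-1,j}$. Fast multipoint evaluation then gives all of them in $O\tilde{~}(n_j + \sum_i d_{i,j})$ operations rather than $O(n_j \sum_i d_{i,j})$; summing over $j$ and using $\sum_j n_j = N \le \sum_{i,j} d_{i,j}$ (from $\dunder_i \ne 0$) yields $O\tilde{~}(\sum_{i,j} d_{i,j})$. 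Replace your Horner argument with this multipoint-evaluation step and the second bullet goes through.
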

\begin{proof}
  The first claim follows directly from Lemma~\ref{lemma:2}.
  In order to build a straight-line program for the polynomials $\g$, recall that $g_i(\X)$ takes the form
  $$g_i(\X) = \prod_{j=1}^m \prod_{k=0}^{d_{i,j}-1}
  \kappa_{k+d_{1,j}+\cdots+d_{i-1,j}}(\X_j).$$ We actually start by
  fixing $j$ in $\{1,\dots,m\}$. For such a fixed $j$, we have to
  evaluate all linear forms $\kappa_{k+d_{1,j}+\cdots+d_{i-1,j}}(\X_j)$,
  for $i=1,\dots,N$ and $k=0,\dots,d_{i,j}-1$. Due to the shape of
  these linear forms, each such evaluation amounts to computing the
  value of the polynomial $X_{j,1} + X_{j,2} T + \cdots + X_{j,n_j}
  T^{n_j-1} + T^{n_j}$ at $k+d_{1,j}+\cdots+d_{i-1,j}$. This
  polynomial has degree less than $n_j$, and we have to evaluate it at
  $\sum_{i=1,\dots,N} d_{i,j}$ points, so using fast multipoint
  evaluation~\cite[Chapter~10]{GaGe03}, this can be done in
  $O\tilde{~}(n_j+\sum_{i} d_{i,j})$ operations.

  Taking all $j$ into account, the overall time for evaluating these
  linear forms is $O\tilde{~}(N+\sum_{i,j} d_{i,j})$
  operations. Because for all $i=1,\dots,N$,
  $\sum_{j=1,\dots,m} d_{i,j}$ is at least equal to $1$ (otherwise, we
  would have $\dunder_i=(0,\dots,0)$), this is
  $O\tilde{~}(\sum_{1\leq i\leq N,1\leq j\leq m} d_{i,j})$. The cost
  needed to deduce all $g_i(\X)$ themselves is
  $O(\sum_{i,j} d_{i,j})$. This proves the second item.
  
  For the third point, remark first that the solutions of the system
  $\g=0$ are obtained by cancelling one factor in each $g_i$. For any
  given $j$ in $\{1,\dots,m\}$, our assumption on the characteristics
  of the base field implies that the affine forms
  $\kappa_{k+d_{1,j}+\cdots+d_{i-1,j}}(\X_j)$ showing up in the
  definition of $g_1,\dots,g_N$ are pairwise distinct, and thus (since
  they form a Vandermonde system) linearly independent. Thus, if we
  choose more than $n_j$ forms involving $\X_j$, we obtain an
  inconsistent linear system for $\X_j$. As a result, the solutions
  are obtained by choosing $n_1$ linear equations for $\X_1$, \dots,
  $n_m$ linear equations for $\X_m$. There are $\scrC_\bn(\d)$ such
  choices; for any of these choices, we recover the value of each
  $\X_j$ by solving a Vandermonde linear system; this can be done in
  quasi-linear time $O\tilde{~}(N)$~\cite[Chapter~10]{GaGe03}.

  Finally, to prove that all solutions are multiplicity-free, remark
  that locally around any of these solutions, the system is 
  equivalent to a linear system (since once we have chosen linear 
  equations to define the values of $\X_1,\dots,\X_m$, all other
  linear equations are non-zero).
\end{proof}


\subsection{The homotopy curve $\mathscr{Z}$} 

We now construct the homotopy itself. Given polynomials
$\f=(f_1,\dots,f_N)$ with multi-degrees
$\d=(\dunder_1,\dots,\dunder_N)$, with all $\dunder_i$ in $\N^m$, we
define the system $\g$ as above, together with the equations
$$\homotop(\f,\g, t) = t \f + (1-t) \g \in \K[t,\X],$$ for a new variable $t$. We
make the same assumption on the characteristics of the base field as
in the Lemma~\ref{lemma:g} (the assumptions on the $d_i$'s is
satisfied, since we assume that none of the $f_i$'s is constant).

Remark that $\homotop(\f,\g, 0)=\g$ and $\homotop(\f,\g, 1)=\f$.
Adding a new ``block'' of variables consisting only of $t$, the system
$\homotop(\f,\g, t)$ is seen to have multi-degree at most
$\d'=(d'_1,\dots,d'_N)$, with $d'_i = (1,d_{i,1},\dots,d_{i,m})$ for
all $i$; as said above, we correspondingly define
$\bn'=(1,n_1,\dots,n_m)$.

The system $\homotop(\f,\g, t)$ may not necessarily define a curve in
$\Kbar{}^{N+1}$ (for instance if $\f=-\g$, the fiber above $t=1/2$ has
dimension $N$). Let us then define the algebraic set $\mathscr{Z}$ as
the Zariski closure of $V(\homotop(\f,\g, t))-V(D)$, where $D$ is the
determinant of the Jacobian matrix $\mathbf{J}(\homotop(\f,\g, t))$ of
$\homotop(\f,\g, t)$ with respect to $\X_1,\dots,\X_m$. Finally, let
$\pi:\Kbar{}^{N+1} \to \Kbar$ denote the projection on the $t$-axis.

\begin{lemma}\label{lemma:degreeZ}
 The algebraic set $\mathscr{Z}$ has dimension one, the image by $\pi$
 of each of its irreducible components is dense, and it has degree at
 most $\scrC_{\bn'}(\d')$.
\end{lemma}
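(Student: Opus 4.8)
The plan is to analyze $\mathscr{Z}$ as a subvariety of $\P^{\bn'}(\Kbar)$ — that is, of $\P^1 \times \P^{n_1} \times \cdots \times \P^{n_m}$, where the extra $\P^1$ carries the homotopy variable $t$ — and to apply the multi-homogeneous B\'ezout-type bound already stated as Proposition~\ref{prop:degH}. First I would recall that $\mathscr{Z}$ is defined as the Zariski closure of $V(\homotop(\f,\g,t)) - V(D)$, where $D$ is the Jacobian determinant with respect to $\X_1,\dots,\X_m$; this is precisely the union of the irreducible components of $V(\homotop(\f,\g,t))$ along which the Jacobian matrix of $\homotop$ has full rank $N$ at a generic point. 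By the Jacobian criterion~\cite[Chapter~16]{Eisenbud95}, applied to the $N$ equations $\homotop(\f,\g,t)$ in the $N+1$ variables $(t,\X)$, every such component is either empty or has dimension exactly $(N+1) - N = 1$; taking the closure does not change the dimension. This proves the first assertion, that $\mathscr{Z}$ has dimension one (granting it is nonempty, which follows since the start system $\g$ has $\scrC_\bn(\d) \geq 1$ nonsingular roots by Lemma~\ref{lemma:g}, and the fiber of $\mathscr{Z}$ over $t=0$ contains these points — here one uses that $\mathbf{J}(\homotop)(0,\cdot)$ restricted to $\X$ is $\mathbf{J}(\g)$, which is invertible at those roots, so none of them lies on $V(D)$).

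Next I would establish that $\pi$ restricted to each irreducible component of $\mathscr{Z}$ is dominant. Suppose some irreducible component $\mathscr{Z}_0$ had $\pi(\mathscr{Z}_0)$ not dense in $\Kbar$; since $\dim \mathscr{Z}_0 = 1$ and $\pi$ is then non-dominant, $\pi(\mathscr{Z}_0)$ is a single point $t = t_0$. Thus $\mathscr{Z}_0$ lies in the fiber $V(\homotop(\f,\g,t_0))$, which is a subset of $\Kbar^N$ cut out by $N$ equations, and on this whole curve $\mathscr{Z}_0$ the Jacobian $\mathbf{J}_{\X}(\homotop(\f,\g,t_0))$ must have rank $N$ at a generic point (this is the defining condition of $\mathscr{Z}$, inherited by its components away from $V(D)$). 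But a rank-$N$ Jacobian at a point of an algebraic set defined by $N$ equations in $N$ variables forces that point to be isolated — contradicting $\dim \mathscr{Z}_0 = 1$. Hence every component of $\mathscr{Z}$ dominates the $t$-line.

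For the degree bound, the key step is to view $\homotop(\f,\g,t)$ as a system of $N$ polynomials of multi-degree at most $\d' = (\dunder'_1,\dots,\dunder'_N)$ with $\dunder'_i = (1, d_{i,1},\dots,d_{i,m})$ in the $(m+1)$ blocks of variables of sizes $\bn' = (1,n_1,\dots,n_m)$ — this is exactly the multi-degree structure recorded just before the lemma. Since $\mathscr{Z}$ is contained in $V(\homotop(\f,\g,t))$ and is equidimensional of dimension $1 = (N+1) - N$, it is contained in the $(N+1-N)$-equidimensional component of $V(\homotop(\f,\g,t))$; Proposition~\ref{prop:degH} (applied with $m$ replaced by $m+1$, the variable blocks $t,\X_1,\dots,\X_m$, and $M = N$) bounds the degree of that component by $\scrC_{\bn'}(\d')$. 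The degree of $\mathscr{Z}$ — understood as the sum of degrees of its irreducible components in the multi-projective embedding, or equivalently, since every component dominates the $t$-line, the number of points of $\mathscr{Z}$ in a generic fiber of $\pi$ — is at most the degree of that ambient component, hence at most $\scrC_{\bn'}(\d')$. The one subtlety worth a careful word, and the main obstacle, is matching the notion of ``degree of $\mathscr{Z}$'' used here (a multi-homogeneous degree in the $\X$-blocks, equal to the generic fiber cardinality of $\pi$) with the total degree bounded by Proposition~\ref{prop:degH}: one needs that taking the closure of the quasi-affine set in the multi-projective space does not introduce components at infinity of larger degree, which is handled by the standard fact that the multi-projective closure has the same degree as computed by the Chow-ring class of the system, so that $\scrC_{\bn'}(\d')$ — the sum of all coefficients of $\bchi'$, dominating in particular the coefficient extracting the fiber count — is still a valid upper bound.
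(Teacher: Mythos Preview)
Your proof is correct and reaches the same conclusion, but the route differs from the paper's. The paper dispatches both dimension claims---that $\mathscr{Z}$ is one-dimensional and that every irreducible component dominates the $t$-line---in one stroke by invoking the Lazard Lemma~\cite[Proposition~3.4]{Morrison99}, then applies Proposition~\ref{prop:degH} for the degree bound exactly as you do. You instead argue both dimension claims directly: the Jacobian criterion gives $\dim\mathscr{Z}=1$, and a short contradiction argument (a component contained in a single fiber $t=t_0$ would be a positive-dimensional subset of $V(\homotop(\f,\g,t_0))$ on which the $\X$-Jacobian generically has full rank~$N$, which is impossible) handles dominance. This is a perfectly valid and more self-contained substitute for the Lazard Lemma citation; what you lose is brevity, what you gain is that the reader need not look up an external reference whose statement is not reproduced here.

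One minor remark: your closing paragraph about the ``subtlety'' of matching notions of degree is unnecessary. The degree used throughout the paper (Section~\ref{sec:prelim}) is the ordinary affine degree---the number of points in a generic linear section---and Proposition~\ref{prop:degH} bounds exactly that quantity for the $1$-equidimensional part of $V(\homotop(\f,\g,t))$. Since $\mathscr{Z}$ is a union of some of the irreducible components of that equidimensional part, its degree is bounded by the same number without any appeal to multi-projective closures or Chow classes. You may safely drop that discussion.
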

\begin{proof}
  The so-called Lazard Lemma~\cite[Proposition 3.4]{Morrison99}
  implies the dimension claims; as a result, we can apply
  Proposition~\ref{prop:degH} to obtain the degree bound.
\end{proof}

Let $\mathscr{I} \subset \K[t,\X]$ be the ideal
$\langle \homotop(\f,\g, t)\rangle : D^\infty$, so that $\mathscr{Z}$
is the zero-set of $\mathscr{I}$. Let us further denote by ${\frak I}$
the extension of $\mathscr{I}$ to $\K(t)[\X]$, and by
${\frak Z}\subset \overline{\K(t)}{}^N$ its zero-set; the Jacobian
criterion implies that ${\frak I}$ is radical, and that ${\frak Z}$
has dimension zero.  Let then $\lambda$ be a linear form with
coefficients in $\K$ that separates the points of ${\frak Z}$ (we will
discuss our choice for it further on). To $\lambda$, we can associate
a zero-dimensional parametrization $\scrQ=((q,v_1,\dots,v_N),\lambda)$
of ${\frak Z}$, where all polynomials have coefficients in
$\K(t)$. The previous lemma and Theorem~1 in~\cite{Schost03} imply the
following bound.

\begin{lemma}
  The numerator and denominator of all coefficients of all polynomials
  $q,v_1,\ldots,\penalty-10000v_N$ have degree at most $\scrC_{\bn'}(\d')$.
\end{lemma}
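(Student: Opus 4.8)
The plan is to deduce the bound directly from Theorem~1 of~\cite{Schost03}, applied to the homotopy curve $\mathscr{Z}$, all of whose relevant geometric features were recorded in Lemma~\ref{lemma:degreeZ}. That theorem asserts the following: if $C \subset \Kbar{}^{N+1}$ is an equidimensional curve, defined over $\K$, whose projection onto the $t$-axis is dominant on each irreducible component, then for a generic $\K$-linear form $\lambda$ the generic fibre of $C$ --- that is, the zero-set in $\overline{\K(t)}{}^N$ of the extension to $\K(t)[\X]$ of the ideal of $C$ --- admits a zero-dimensional parametrization, in the normalization used here (monic squarefree $q \in \K(t)[T]$, $\deg_T v_i < \deg_T q$, and $\lambda(v_1,\dots,v_N) = Tq'\bmod q$), whose coefficients are rational functions of $t$ with numerators and denominators of degree at most $\deg(C)$.

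First I would check that $\mathscr{Z}$ meets all these hypotheses. By Lemma~\ref{lemma:degreeZ}, $\mathscr{Z}$ is one-dimensional and the projection $\pi$ onto the $t$-axis is dominant on each of its components; being the zero-set of $\mathscr{I} = \langle \homotop(\f,\g, t)\rangle : D^\infty \subset \K[t,\X]$, it is defined over $\K$; and its generic fibre is precisely ${\frak Z}$, which the Jacobian criterion shows to be radical and zero-dimensional. The degree estimate $\deg(\mathscr{Z}) \le \scrC_{\bn'}(\d')$ is again part of Lemma~\ref{lemma:degreeZ} (itself obtained from the Lazard lemma and Proposition~\ref{prop:degH}). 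Feeding these into Theorem~1 of~\cite{Schost03}, and recalling that for a fixed separating $\lambda$ the parametrization $\scrQ = ((q,v_1,\dots,v_N),\lambda)$ in the above normalization is unique, we obtain that the numerators and denominators of all coefficients of $q, v_1, \ldots, v_N$ have degree at most $\scrC_{\bn'}(\d')$, as claimed.

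The only subtle point is the genericity requirement on $\lambda$: the estimate of~\cite{Schost03} holds for $\lambda$ ranging over a dense Zariski-open subset of the space of $\K$-linear forms, whereas at this stage $\lambda$ is merely required to separate the points of ${\frak Z}$, its explicit value (and the associated probability estimate) being deferred. I would simply stipulate that $\lambda$ also lie in that open set --- a harmless extra constraint, since it and the set of separating forms are both dense and defined over $\K$, so their intersection is nonempty --- and leave the concrete choice to the later analysis, consistently with the parenthetical announcement preceding the statement. Beyond this bookkeeping there is no real obstacle: the geometric substance has already been absorbed into Lemma~\ref{lemma:degreeZ}.
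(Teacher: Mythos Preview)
Your proposal is correct and follows exactly the paper's approach: the paper's proof is a single sentence invoking Lemma~\ref{lemma:degreeZ} together with Theorem~1 of~\cite{Schost03}, which is precisely what you do. Your discussion of the genericity of $\lambda$ is an extra caution not present in the paper (which simply assumes $\lambda$ separating and applies the cited theorem without further comment); it is harmless, but unnecessary if one reads~\cite{Schost03} as giving the bound for any separating linear form.
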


\subsection{Specialization properties} 
In our main algorithm, we use a classical tool, {\em lifting
  techniques}: to compute $\scrQ$, we compute the specialization of it
at $t=0$, lift it to a sufficient precision in $t$, and recover
$\scrQ$. Once we know $\scrQ$, we want to let $t=1$ in it, in order to
obtain a zero-dimenzional parametrization for $Z(\f)$.
In this paragraph, we give properties that underlie this process. 
First, we describe the situation at $t=0$.

\begin{lemma}\label{lemma:spec0}
  If a linear form $\lambda$ with coefficients in $\K$ is a separating
  element for $Z(\g)$, it is separating for ${\frak Z}$. When it is
  the case, $t$ divides no denominator in the corresponding
  zero-dimensional parametrization $\scrQ=((q,v_1,\dots,v_N),\lambda)$
  of ${\frak Z}$, and letting $t=0$ in these polynomials yields a
  zero-dimensional parametrization of $Z(\g)$.
\end{lemma}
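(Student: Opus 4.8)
The strategy is to identify the scheme‑theoretic fibre of $\mathscr{Z}$ over $t=0$ with $Z(\g)$, to produce the analytic branches of $\mathscr{Z}$ issued from the points of $Z(\g)$, and then to transport both the separating property and the good behaviour of the parametrization from $t=0$ to the generic point $\K(t)$ along these branches. Concretely, let $R=\K[t,\X]/\mathscr{I}$. I first claim that the fibre $R/tR$ is the reduced coordinate ring of $Z(\g)$. Set‑theoretically, $\mathscr{Z}\subseteq V(\homotop(\f,\g,t))$ and $\homotop(\f,\g,0)=\g$, so the fibre over $0$ is contained in $V(\g)=Z(\g)$. Conversely, for $x\in Z(\g)$ Lemma~\ref{lemma:g} gives that the Jacobian of $\g$ is invertible at $x$, so $D(0,x)\neq 0$ and $(0,x)\in V(\homotop)\setminus V(D)\subseteq\mathscr{Z}$; moreover near $(0,x)$ the function $D$ is a unit, hence $\mathscr{I}=\langle\homotop\rangle$ there, and the Jacobian of $\homotop$ with respect to $\X$ at $(0,x)$ equals the invertible Jacobian of $\g$ at $x$. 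Thus $\mathscr{Z}$ is smooth of dimension one at $(0,x)$, the coordinate $t$ is a local parameter, and $\pi$ is unramified there, so the local ring modulo $t$ is a field. Since, by Lemma~\ref{lemma:degreeZ}, each component of $\mathscr{Z}$ dominates the $t$–line, $t$ is a non‑zerodivisor on the (radical) ring $R$; together with the previous local analysis this shows $R/tR$ is reduced and equal to the coordinate ring of $Z(\g)$, of $\K$–dimension $\scrC_\bn(\d)$ by Lemma~\ref{lemma:g}.

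Next I would show $\#{\frak Z}=\scrC_\bn(\d)$ and that every point of ${\frak Z}$ specializes, through a branch of $\mathscr{Z}$, to a point of $Z(\g)$. Unramifiedness of $\pi$ at $(0,x)$ lets Hensel's lemma produce, for each $x\in Z(\g)$, a unique branch $\sigma_x\in\Kbar[[t]]^N$ with $\homotop(\f,\g,t)(\sigma_x)=0$, $D(t,\sigma_x)\neq 0$ and $\sigma_x(0)=x$; these are $\scrC_\bn(\d)$ distinct points of ${\frak Z}$. To see there are no others, I would check that $\pi\colon\mathscr{Z}\to\A^1$ is finite over a Zariski neighbourhood of $0$: multihomogenizing $\g$ block by block, a solution at infinity in some block $\X_j$ would, via the shape of the linear forms $\kappa$, have to satisfy $n_j$ of them simultaneously, a Vandermonde system forcing that block to be the zero vector — impossible (this is the computation behind Lemma~\ref{lemma:g}); so $V(\g^{\mathrm{hom}})$ lies in the affine chart, the fibre of the multiprojective closure of $\mathscr{Z}$ over $0$ is contained in $\mathbf A^N$, and a proper morphism whose fibre over a point is affine is finite over a neighbourhood of that point. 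Then $R$, localized at $\langle t\rangle$, is finite and torsion‑free over the DVR $\K[t]_{\langle t\rangle}$, hence free; its rank equals both $\dim_{\K(t)} R\otimes\K(t)=\#{\frak Z}$ and $\dim_\K R/tR=\scrC_\bn(\d)$. Therefore ${\frak Z}=\{\sigma_x\mid x\in Z(\g)\}$, and since $\lambda$ separates $Z(\g)$ the power series $\lambda(\sigma_x)\in\Kbar[[t]]$ have pairwise distinct constant terms $\lambda(x)$, so they are pairwise distinct and $\lambda$ separates ${\frak Z}$.

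With $\lambda$ separating we get the parametrization $\scrQ=((q,v_1,\dots,v_N),\lambda)$ of ${\frak Z}$ over $\K(t)$, and $q=\prod_{x\in Z(\g)}(T-\lambda(\sigma_x))$. Its coefficients are symmetric functions of the $\lambda(\sigma_x)\in\Kbar[[t]]$, hence lie in $\K(t)\cap\Kbar[[t]]$, i.e. have no pole at $t=0$; so $t$ divides no denominator of $q$. Likewise, each $v_i$ is recovered from the relations defining a zero‑dimensional parametrization by Lagrange interpolation through the nodes $\lambda(\sigma_x)$, with values built from the coordinates of $\sigma_x$ and from $q'$; the corresponding interpolation (Vandermonde‑type) matrix has determinant $\prod_{x\neq x'}(\lambda(x)-\lambda(x'))\neq 0$ at $t=0$ because $\lambda$ separates $Z(\g)$, so Cramer's rule gives $v_i\in\Kbar[[t]][T]$, whence again no denominator of $v_i$ is divisible by $t$.

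Finally, since no denominator vanishes at $t=0$, setting $t=0$ commutes with all defining identities of $\scrQ$: $q(0,T)=\prod_{x\in Z(\g)}(T-\lambda(x))$ is squarefree of degree $\scrC_\bn(\d)=\deg q$, the identity $\lambda(v_1,\dots,v_N)=Tq'\bmod q$ and the membership conditions persist, and the point set described by $((q(0),v_1(0),\dots,v_N(0)),\lambda)$ is exactly $\{\sigma_x(0)=x\}=Z(\g)$; hence this is a zero‑dimensional parametrization of $Z(\g)$. The one genuinely delicate step is the middle one — ruling out extra points of ${\frak Z}$, equivalently showing that no homotopy branch escapes to infinity as $t\to 0$ — which rests on the specific combinatorial construction of the start system $\g$ (the Vandermonde structure of the forms $\kappa$); granting that, the rest is bookkeeping with power series, symmetric functions, and Cramer's rule.
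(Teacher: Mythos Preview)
Your argument is correct, but it takes a substantially longer route than the paper's, and the divergence is precisely at the step you flag as delicate: showing $\#{\frak Z}=\scrC_\bn(\d)$. You rule out points at infinity for $\g$ via the Vandermonde combinatorics, deduce that $\pi$ is finite near $t=0$, and then compare generic and special fibre ranks of the free $\K[t]_{\langle t\rangle}$-module. The paper bypasses all of this: since every point of ${\frak Z}$ has invertible Jacobian (by construction of ${\frak I}$), ${\frak Z}$ sits inside the zero-dimensional component of $V(\homotop)$ over $\overline{\K(t)}$, and Proposition~\ref{prop:degH} (the multi-homogeneous B\'ezout bound) caps that at $\scrC_\bn(\d)$ points; Newton iteration from $Z(\g)$ already supplies that many, so equality is immediate. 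For the valuation claim on $q,v_1,\dots,v_N$, the paper simply writes the interpolation formulas $q=\prod_i(T-\lambda(\Gamma_i))$ and $v_j=\sum_i \Gamma_{i,j}\prod_{i'\ne i}(T-\lambda(\Gamma_{i'}))$ with the Newton lifts $\Gamma_i\in\K[[t]]^N$ (note $Z(\g)\subset\K^N$ here), and reads off that these lie in $\K[[t]][T]$; no Cramer or symmetric-function manoeuvre is needed. Your approach has the merit of being self-contained and not invoking Proposition~\ref{prop:degH}, but since that bound is already in hand and reused throughout the paper, the shortcut is the natural choice here.
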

\begin{proof}
  Consider the power series in $\K[[t]]$ obtained by lifting the
  points of $Z(\g)$ to solutions of $\homotop(\f,\g, t)$ using Newton
  iteration; call them $\Gamma_1,\dots,\Gamma_c$, with all $\Gamma_i$
  in $\K[[t]]^N \subset \Kbar[[t]]^N$ and $c=\scrC_{\bn}(\d)$. In the
  sequel, $\K((t))$ denotes the field of fractions of $\K((t))$.

  Because there are $c=\scrC_{\bn}(\d)$ such solutions, and ${\frak
    I}$ can have at most $c$ solutions (Proposition~\ref{prop:degH}),
  these power series are the {\em only} solutions of the extension of
  ${\frak I}$ to $\Kbar((t))[\X]$. The following well-known interpolation
  formulas
  \begin{equation}\label{eq:qv}
  q = \prod_{\bx \in {\frak Z}}(T-\lambda(\bx)), \quad
  v_i = \sum_{\bx=(x_1,\dots,x_N) \in {\frak Z}} x_i \prod_{\bx' \in {\frak Z},\ \bx' \ne \bx} (T-\lambda(\bx')) \quad (1 \le i \le N).
\end{equation}
define $\scrQ$; they show that all polynomials
  $q$ and $v_1,\dots,v_N$ have non-negative valuation at $t=0$
  and prove our claims.
\end{proof}

The situation at $t=1$ is more complex, since $\f$ may have fewer than
$\scrC_\bn(\d)$ roots. To state the relevant construction, we will
need power series centered at $t=1$ (and generalizations
thereof). Thus, we let $\tau=t-1$, and work with polynomials and power
series written in $\tau$ (the system $\homotop(\f,\g, t)$ written in
terms of $\tau$ becomes $\homotop(\f,\g, \tau)$). Let
$\varphi_1,\dots,\varphi_s$ be the points in $Z(\f)$; they belong to
$\Kbar{}^N$. Because the Jacobian matrix of $\f$ is invertible at
these points, we can use Newton iteration to lift them to power series
$\Phi_1,\dots,\Phi_s$ in $\Kbar[[\tau]]^N$ that cancel
$\homotop(\f,\g, \tau)$.

We will in fact need to describe all solutions of
$\homotop(\f,\g, \tau)$; for this, we use a slight generalization of
the presentation in~\cite{RRS}. That paper describes such solutions in
characteristic zero, where this is done by means of Puiseux series; in
arbitrary characteristic, this is not enough, so we will rely on the
fact that the ring $\L$ of all ``generalized power series''
$F=\sum_{i \in I} f_i {\tau}^i$, where the index set $I \subset \Q$
(that depends on $F$) is well-ordered and all $f_i$'s are in $\Kbar$,
contains an algebraic closure of $\Kbar((\tau))$~\cite{Rayner68}.

Because the exponent support is well-ordered, we can define the {\em
  valuation} of such a (non-zero) $F$ as the rational
$\nu(F)=\min(i \in I,\ f_i \ne 0)$; this extends the $\tau$-adic
valuation on $\Kbar((\tau))$. For such an element $F$, if
$\nu(F) \ge 0$, we write $\ell_0(F)$ for the coefficient of $\tau^0$
in the expansion of $F$ (and we extend this notation to vectors).

We will ensure below that we can apply Lemma~\ref{lemma:spec0}; as a
consequence, $\homotop(\f,\g, \tau)$ has $c=\scrC_{\bn}(\d)$ pairwise
distinct roots in an algebraic closure of $\Kbar((\tau))$. These roots
can then be written as $\Phi_1,\dots,\Phi_c$, with all $\Phi_i$
in~$\L^N$; up to reordering them, we can assume that the first $s$ of
them are the power series $\Phi_1,\dots,\Phi_s$ defined previously.

\begin{lemma}\label{lemma:nonrep}
  Let $c'$ in $\{s,\dots,c\}$ be such that $\Phi_1,\dots,\Phi_{c'}$ have
  all their coordinates with non-negative valuations.  Define
  $\varphi_1,\dots,\varphi_{c'}$ as the vectors in $\Kbar{}^N$ obtained
  as $\varphi_i=\ell_0(\Phi_i)$ for all $i$. 
  Then, for $i=1,\dots,s$ and $i'=s+1,\dots,c'$, $\varphi_i
  \ne \varphi_{i'}$ holds.
\end{lemma}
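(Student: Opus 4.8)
We argue by contradiction: suppose that $\varphi_{i'} = \varphi_i$ for some $i \in \{1,\dots,s\}$ and some $i' \in \{s+1,\dots,c'\}$. The first step is to see that $\varphi_{i'} = \ell_0(\Phi_{i'})$ is a root of $\f$: since $\homotop(\f,\g,\tau) = \f + \tau(\f-\g)$ and all coordinates of $\Phi_{i'}$ have non-negative valuation, applying $\ell_0$ to the identity $\homotop(\f,\g,\tau)(\Phi_{i'}) = 0$ (the $\tau(\f-\g)(\Phi_{i'})$ part has positive valuation) yields $\f(\varphi_{i'}) = 0$. Hence, under the contradiction hypothesis, $\varphi_{i'} = \varphi_i$ is one of the points of $Z(\f)$; in particular the Jacobian matrix $J$ of $\f$ is invertible at $\varphi_i$. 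The plan is then to establish a uniqueness statement: \emph{there is at most one solution of $\homotop(\f,\g,\tau) = 0$ in $\L^N$ whose coordinates have non-negative valuation and whose $\ell_0$ equals $\varphi_i$.} Granting this, both $\Phi_i$ (by its construction as a Newton lift, $\Phi_i \in \Kbar[[\tau]]^N$ with $\ell_0(\Phi_i) = \varphi_i$) and $\Phi_{i'}$ (by the contradiction hypothesis) are such solutions, so $\Phi_{i'} = \Phi_i$; but the roots $\Phi_1,\dots,\Phi_c$ are pairwise distinct and $i \ne i'$, a contradiction.

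To prove the uniqueness statement I would substitute $\bx = \varphi_i + \Psi$ and rewrite the equation as a fixed point. Using $\f(\varphi_i) = 0$ and Taylor expansion at $\varphi_i$, the equation $\homotop(\f,\g,\tau)(\varphi_i + \Psi) = 0$ becomes $\Psi = \Theta(\Psi)$, where $\Theta(\Psi) = -J^{-1}\big( \tau\,(\f-\g)(\varphi_i+\Psi) + Q(\Psi)\big)$ and $Q$ collects the terms of order $\ge 2$ in $\Psi$ of $\f(\varphi_i + \Psi)$. Two facts then need checking. First, since $J$ is invertible over $\Kbar$, it preserves the valuation of vectors in $\L^N$; comparing valuations in $J\Psi + Q(\Psi) + \tau(\f-\g)(\varphi_i+\Psi) = 0$ shows that any solution $\Psi$ with positive valuation in fact satisfies $\nu(\Psi) \ge 1$ — if $0 < \nu(\Psi) < 1$ then $J\Psi$ has valuation exactly $\nu(\Psi)$ while $Q(\Psi)$ has valuation $\ge 2\nu(\Psi)$ and $\tau(\f-\g)(\varphi_i+\Psi)$ has valuation $\ge 1$, both strictly larger, which is impossible. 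Second, on the closed ball $\mathcal{M} = \{\Psi \in \L^N : \nu(\Psi) \ge 1\}$, the map $\Theta$ sends $\mathcal{M}$ into itself and is an ultrametric contraction: $\nu(\Theta(\Psi) - \Theta(\Psi')) \ge \nu(\Psi - \Psi') + 1$, since the difference of the first summands carries a factor $\tau$ and the difference of the $Q$-parts is governed by monomials of degree $\ge 2$ evaluated on elements of $\mathcal{M}$. Because $\L$ is spherically complete (it is the field of Hahn series over $\Kbar$ with value group $\Q$), the ball $\mathcal{M}$ is complete, so the ultrametric fixed-point principle gives $\Theta$ a unique fixed point in $\mathcal{M}$.

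Combining the two ingredients: any solution of $\homotop(\f,\g,\tau)(\varphi_i + \cdot) = 0$ with non-negative valuation and vanishing constant term has positive valuation, hence valuation $\ge 1$, hence lies in $\mathcal{M}$, hence equals the unique fixed point of $\Theta$; applied to $\Psi = \Phi_i - \varphi_i$ and $\Psi = \Phi_{i'} - \varphi_i$, this forces $\Phi_{i'} = \Phi_i$, contradicting distinctness of the $\Phi_j$'s. I expect the main obstacle to be running this Newton/fixed-point argument over the exotic coefficient ring $\L$ of generalized power series (well-ordered $\Q$-indexed supports) rather than over ordinary formal power series: one must verify that the $\tau$-adic valuation extends to a genuine $\Q$-valued valuation on $\L$, that a constant invertible matrix over $\Kbar$ preserves it on $\L^N$, and that closed balls in $\L$ are complete — all true because $\L$ is spherically complete, but these are the points requiring care; everything else is a routine Taylor-expansion bookkeeping.
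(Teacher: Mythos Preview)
Your proof is correct and reaches the same conclusion as the paper's, but by a genuinely different route. The paper splits into two cases: if $\Phi_{i'}$ happens to lie in $\Kbar[[\tau]]^N$, ordinary Hensel over $\Kbar[[\tau]]$ already forces $\Phi_{i'}=\Phi_i$; the interesting case is $\Phi_{i'}\in\L^N\setminus\Kbar[[\tau]]^N$, where the paper locates the smallest \emph{non-integer} exponent $e$ appearing in any coordinate of $\Phi_{i'}$, writes $\Phi_{i'}=\Phi_{i',0}+\Phi_{i',1}$ with $\Phi_{i',0}$ a truncated ordinary power series and $\nu(\Phi_{i',1})\ge e$, and shows via a single Taylor step that $\mathbf{J}(\homotop)(\Phi_{i',0})^{-1}\homotop(\Phi_{i',0})+\Phi_{i',1}=O(\tau^{2e})$; the first summand being a genuine power series, the $\tau^e$ term in $\Phi_{i',1}$ cannot be cancelled, a direct contradiction. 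This is completely elementary and never invokes any completeness property of $\L$. Your argument instead proves a uniform uniqueness statement over all of $\L^N$ by recasting the equation as a fixed point $\Psi=\Theta(\Psi)$, bootstrapping $\nu(\Psi)>0$ to $\nu(\Psi)\ge 1$, and then appealing to the ultrametric Banach principle on the ball $\{\nu\ge 1\}$, which requires that $\L$ be spherically complete. That fact is true for Hahn series fields, so your proof stands; what it buys is a single unified argument covering both cases, at the cost of importing a less elementary structural property of $\L$, whereas the paper's approach stays entirely self-contained.
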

\begin{proof}
  Take $i$ and $i'$ as above. By Newton iteration, we know that
  $\Phi_i$ is the unique vector of power series in $\Kbar[[\tau]]$
  that cancels $\homotop(\f,\g, \tau)$ and such that
  $\ell_0(\Phi_i)=\varphi_i$. Hence, the only case we have to exclude
  is $\Phi_{i'}$ being a vector in $\L^N-\Kbar[[\tau]]^N$ and with
  $\ell_0(\Phi_{i'})=\varphi_i$.

  Suppose it is the case. By assumption, $\Phi_{i'}$ is not in
  $\Kbar[[\tau]]^N$, so one of its entries, say $\Phi_{i',j}$, is not
  in $\Kbar[[\tau]]$. The well-ordered nature of the exponent set of
  $\Phi_{i',j}$ shows that there exists $e$ in $\Q_{>0}$ such that
  $\tau^e$ is the smallest non-integer exponent appearing with
  non-zero coefficient in $\Phi_{i',j}$; if there are several such
  $j$'s, assume we have chosen one with smallest exponent $e$.

  Write $\Phi_{i'} =\Phi_{i',0} + \Phi_{i',1}$, where $\Phi_{i',0}$
  consists of all terms with exponent less than $e$; this is thus a
  vector of truncated power series, and all terms in $\Phi_{i',1}$
  have valuation at least $e$. Since
  $\homotop(\f,\g, \tau)(\Phi_{i'})=0$, Taylor expansion shows that
  $\homotop(\f,\g, \tau)(\Phi_{i',0}) + \mathbf{J}(\homotop(\f,\g,
  \tau))(\Phi_{i',0}) \Phi_{i',1} = O(\tau^{2 e})$,
  where the right-hand side consists of terms with valuation at least
  $2e$.  The invertibility of
  $\mathbf{J}(\homotop(\f,\g, \tau))(\varphi_{i})$ implies that the
  matrix $\mathbf{J}(\homotop(\f,\g, \tau))(\Phi_{i',0})$ is
  invertible too, so that
  \[\mathbf{J}(\homotop(\f,\g, \tau))(\Phi_{i',0})^{-1} \homotop(\f,\g,
  \tau)(\Phi_{i',0}) + \Phi_{i',1} = O(\tau^{2 e}).\]
  The first term is a power series, whereas by assumption
  $\Phi_{i',1}$ has at least one term with non-integer exponent
  $e$. This term cannot be cancelled by the right-hand side, a
  contradiction.
\end{proof}

Finally, in the discussion below, for $i=1,\dots,c$ and $j=1,\dots,N$,
we write $\mu_{i,j}=\nu(\Phi_{i,j})$ and $\mu_i=\min_{1 \le j \le N}
\mu_{i,j}$. In particular, $\mu_i \ge 0$ if and only if $i \le
c'$. Still inspired by~\cite{RRS}, we will say that a linear form
$\lambda$ with coefficients in $\K$ is a {\em well-separating} element
for $(\f,\g)$ if:
\begin{enumerate}
\item $\lambda$ is separating for $Z(\g)$
\item $\lambda$ is separating for $\{\varphi_1,\dots,\varphi_{c'}\}$
\item $\nu(\lambda(\Phi_i)) = \mu_i$ for all $i=1,\dots,c$.
\end{enumerate}
We will discuss later on how random choices can ensure these
properties with high probability. For the moment, remark that by
Lemma~\ref{lemma:spec0}, the first condition implies that 
$\lambda$ is separating for ${\frak Z}$.

Let us extend $\nu$ to $\L[T]$, by letting $\nu(a_0 + \cdots + a_s
T^s)=\min_{a_i \ne 0}(\nu(a_i))$. This applies in particular to
polynomials in $\K((\tau))[T]$; in that case, note that for any $f$ in
$\K(\tau)[T]$ and $e$ in $\Z$, $\tau^e f$ is in $\K[[\tau]][T]$ if and
only if $e + \nu(f) \ge 0$. This being said, we can state the main
result in this paragraph; it follows closely~\cite{RRS}, in our
slightly different setting.
\begin{lemma}\label{lemma:rw}
  Suppose that $\lambda$ is a well-separating element, let
  $\scrQ=((q,v_1,\dots,v_N),\lambda)$ be the corresponding
  zero-dimensional parametrization of ${\frak Z}$ over $\K((\tau))$,
  and let $e = -\nu(q)$.  Define the polynomials $q^\star=\tau^e q$
  and $(v_j^\star=\tau^e v_j)_{1 \le j \le N}$. Then, these
  polynomials are in $\K[[\tau]][T]$.

  Defining further $r_0$ as the leading coefficient of $q^\star(0,T)$ and
  $$r = \frac 1{r_0} q^\star(0,T) \quad\text{and}\quad w_j = \frac
  1{r_0} v^\star_j(0,T) \bmod r \quad (1 \le j \le N),$$
  the polynomials $r,w_1,\dots,w_N$ are such that
  $$r = \prod_{1\le i \le c'} (T-\lambda(\varphi_i)) \quad \text{and}\quad w_j = \sum_{1 \le i \le c'} \varphi_{i,j}
  \prod_{1 \le i' \le c',\ i' \ne i} (T-\lambda(\varphi_{i'})).$$
\end{lemma}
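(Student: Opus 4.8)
The plan is to transport $\scrQ$ to the factored form it takes over an algebraic closure of $\K((\tau))$ and to track $\tau$-adic valuations throughout. By well-separation condition~(1) and Lemma~\ref{lemma:spec0}, $\lambda$ separates ${\frak Z}$, so $\scrQ$ is well defined, and (again by Lemma~\ref{lemma:spec0}) the system $\homotop(\f,\g,\tau)$ has exactly $c=\scrC_\bn(\d)$ pairwise distinct roots $\Phi_1,\dots,\Phi_c\in\L^N$. Inside $\L[T]$, the interpolation formulas \eqref{eq:qv}, now with $\bx$ ranging over these $c$ points, read
\[
q=\prod_{i=1}^c\bigl(T-\lambda(\Phi_i)\bigr),\qquad v_j=\sum_{i=1}^c\Phi_{i,j}\prod_{i'\ne i}\bigl(T-\lambda(\Phi_{i'})\bigr)\quad(1\le j\le N).
\]
Extending $\nu$ to $\L[T]$ by the Gauss rule $\nu(\sum_k a_kT^k)=\min_k\nu(a_k)$ --- still a valuation, hence additive on products --- condition~(3) gives $\nu(\lambda(\Phi_i))=\mu_i$, so $\nu(T-\lambda(\Phi_i))=\min(0,\mu_i)$, which is $0$ when $i\le c'$ and $\mu_i$ when $i>c'$ (recall $\mu_i\ge 0$ iff $i\le c'$). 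Therefore $\nu(q)=\sum_{i>c'}\mu_i$; since $q\in\K((\tau))[T]$ this is an integer, and $e=-\nu(q)=\sum_{i>c'}(-\mu_i)$ is a non-negative integer.

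I would next check the integrality claims. A polynomial over $\K((\tau))$ whose Gauss valuation is $\ge0$ has all coefficients in $\K[[\tau]]$; as $\nu(q^\star)=e+\nu(q)=0$, this gives $q^\star\in\K[[\tau]][T]$. For $v^\star_j$, bound below the valuation of the $i$-th summand of $v_j$: it equals $\nu(\Phi_{i,j})+\sum_{i'\ne i}\nu(T-\lambda(\Phi_{i'}))$, which is $\ge 0+(-e)$ when $i\le c'$ (the omitted factor then contributes valuation $0$, and $\nu(\Phi_{i,j})\ge\mu_i\ge 0$) and $\ge\mu_i+(-e-\mu_i)=-e$ when $i>c'$. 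Hence $\nu(v_j)\ge-e$, so $\nu(v^\star_j)\ge0$ and $v^\star_j\in\K[[\tau]][T]$ as well.

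It remains to identify the reductions at $\tau=0$. Extend $\ell_0$ coefficientwise; on polynomials whose coefficients lie in the valuation ring of $\L$ it is a ring homomorphism onto $\Kbar[T]$, and on $\K[[\tau]][T]$ it is evaluation at $\tau=0$. Writing $\tau^e=\prod_{i>c'}\tau^{-\mu_i}$ and absorbing one $\tau^{-\mu_i}$ into $(T-\lambda(\Phi_i))$ for each $i>c'$: that factor becomes $\tau^{-\mu_i}T-\tau^{-\mu_i}\lambda(\Phi_i)$, whose $T$-coefficient has positive valuation and whose constant term has valuation $0$ by condition~(3), so its $\ell_0$-image is a nonzero constant $-b_i\in\Kbar$; and for $i\le c'$, $\ell_0(T-\lambda(\Phi_i))=T-\lambda(\varphi_i)$ since $\lambda$ has coefficients in $\K$ and $\mu_i\ge0$. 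Applying the homomorphism $\ell_0$ to the factored $q^\star$ yields $q^\star(0,T)=\bigl(\prod_{i>c'}(-b_i)\bigr)\prod_{i\le c'}(T-\lambda(\varphi_i))$, so $r_0=\prod_{i>c'}(-b_i)\ne0$, $r=\prod_{i\le c'}(T-\lambda(\varphi_i))$, and $r$ is squarefree by condition~(2). For $v^\star_j$, I split the interpolation sum at $i=c'$: for $i\le c'$, absorbing $\tau^e$ into the factors indexed by $i'>c'$ and applying $\ell_0$ gives $r_0\,\varphi_{i,j}\prod_{i'\le c',\,i'\ne i}(T-\lambda(\varphi_{i'}))$, so after dividing by $r_0$ and summing, this part becomes exactly the asserted $w_j$, of degree $<c'=\deg r$; for $i>c'$, absorbing $\tau^{-\mu_i}$ into $\Phi_{i,j}$ (legitimate because $\nu(\Phi_{i,j})\ge\mu_i$) leaves the factor $\prod_{i'\le c'}(T-\lambda(\Phi_{i'}))$, whose $\ell_0$-image is $r$, so this part of $\tfrac1{r_0}v^\star_j(0,T)$ is a multiple of $r$ and vanishes modulo~$r$. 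Combining gives $w_j=\sum_{i\le c'}\varphi_{i,j}\prod_{i'\le c',\,i'\ne i}(T-\lambda(\varphi_{i'}))$, as desired.

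The step I expect to be most delicate is the bookkeeping inside the generalized power series ring $\L$: one has to verify that the Gauss valuation and the reduction map $\ell_0$ are multiplicative there, and that redistributing the integral power $\tau^e$ as the product of the rational powers $\tau^{-\mu_i}$ --- each of which is meaningful only inside $\L[T]$, although the full product returns to $\K[[\tau]][T]$ --- is legitimate. This is precisely where the well-separation hypotheses, above all $\nu(\lambda(\Phi_i))=\mu_i$, are indispensable: without them neither the value of $e$, nor $r_0\ne0$, nor the fact that $\deg r=c'$ would hold.
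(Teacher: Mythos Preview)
Your proof is correct and follows essentially the same route as the paper's: factor $q$ over $\L$, read off $\nu(q)=\sum_{i>c'}\mu_i$ from condition~(3), bound $\nu(v_j)\ge\nu(q)$ termwise, then redistribute $\tau^e=\prod_{i>c'}\tau^{-\mu_i}$ among the factors and apply $\ell_0$. Your treatment of $w_j$ is in fact more explicit than the paper's (which simply says ``proceeding similarly''), since you show that the summands with $i>c'$ reduce to multiples of $r$ and hence disappear modulo~$r$.

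One remark is genuinely wrong, though it does not affect the lemma as stated: you assert that ``$r$ is squarefree by condition~(2)''. This is false. Condition~(2) says $\lambda$ separates the \emph{set} $\{\varphi_1,\dots,\varphi_{c'}\}$, i.e.\ distinct $\varphi_i$'s get distinct $\lambda$-values; it does \emph{not} say the $\varphi_i$ are pairwise distinct. They need not be (Lemma~\ref{lemma:nonrep} only guarantees that the first $s$ of them are distinct from the remaining ones), so $r=\prod_{i\le c'}(T-\lambda(\varphi_i))$ can have repeated roots. This is exactly why Subsection~\ref{ssec:recoverZ} has to extract the squarefree-in-multiplicity-one part $r_1$ before calling {\sf Clean}. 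Drop that parenthetical and the argument is clean.
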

\begin{proof}
  To prove the first point, since all polynomials $v_j$ and $q$ have
  coefficients in $\K((\tau))$, it is enough to prove that $\nu(v_j)
  \ge \nu(q)$ holds for all $j$. In view of the interpolation formulas
  $$q = \prod_{1 \le i \le c} (T-\lambda(\Phi_i)), \quad v_j = \sum_{1
    \le i \le c} \Phi_{i,j} \prod_{1 \le i' \le c,\ i' \ne i}
  (T-\lambda(\Phi_{i'})),$$ we deduce first that $\nu(q) = \sum_{c' <
    i \le c} \mu_i$, and that for all $i,j$,
  $$\nu\left(\Phi_{i,j} \prod_{1 \le i' \le c,\ i' \ne i}
  (T-\lambda(\Phi_{i'}))\right) = \mu_{i,j} +
  \sum_{c' < i' \le c,\ i' \ne i}
  \mu_{i'} \ge \sum_{c' < i' \le c}
  \mu_{i'}=\nu(q).$$ Taking the sum, this implies that
  $\nu(v_j) \ge \nu(q)$, as claimed. Besides, since the definition of $e$ gives
  $e=-\sum_{c' < i \le c} \mu_i$, we obtain
  the factorization
  $$q^\star =\prod_{1 \le i \le c'} (T-\lambda(\Phi_i)) \cdot
  \prod_{c' < i \le c}
  (\tau^{-\mu_i}T-\tau^{-\mu_i}\lambda(\Phi_i)).$$
  In particular, the polynomial $r=q^\star(0,T)$ satisfies
  $$r = \gamma \prod_{1 \le i \le c'}
  (T-\ell_0(\lambda(\Phi_i))) = \gamma \prod_{1 \le i \le c'}
  (T-\lambda(\varphi_i)),$$ where $\gamma$ is the scalar
  $\gamma = \prod_{c' < i \le c}
  \ell_0(\tau^{-\mu_i}\lambda(\Phi_i))$; it is non-zero, as a consequence of the third
condition in the definition of a well-separating element. Proceeding similarly with $v_j^\star$, we obtain the
  claim for $w_j$.
\end{proof}

\subsection{Recovering $Z(\f)$} \label{ssec:recoverZ}
The polynomials $r$ and $w_1,\dots,w_N$ defined in the previous lemma
do not necessarily form a zero-dimensional parametrization of
$\{\varphi_1,\dots,\varphi_{c'}\}$, since $r$ may have multiple
roots. We show here how to deduce a zero-dimensional parametrization
of $Z(\f)=\{\varphi_1,\dots,\varphi_{s}\}$.

Our starting point is that the minimal polynomial in the
parametrization of $Z(\f)$ associated to $\lambda$ is $t=\prod_{1 \le
  i \le s}(T-\lambda(\varphi_i))$, and that this polynomial is a
factor of $r$.

More precisely, because $\lambda$ separates
$\{\varphi_1,\dots,\varphi_{c'}\}$, and because each $\varphi_i$, for $i$ in
$\{1,\dots,s\}$, only
appears once among $\varphi_1,\dots,\varphi_{c'}$
(Lemma~\ref{lemma:nonrep}), $\lambda(\varphi_i)$ is a root of $r$ of
multiplicity $1$, for all $i$ as above. Thus, we can assume without loss of
generality that the roots of $r$ of multiplicity $1$ are
$\lambda(\varphi_1),\dots,\lambda(\varphi_{c''})$, for some $c''$ in
$\{s,\dots,c'\}$, and let $r_1$ be the product $\prod_{1 \le i \le
  c''} (T-\lambda(\varphi_i))$, so that $t$ divides $r_1$. Explicitly,
we have (independently of the characteristic)
\begin{equation}\label{eq:r1}
r_1 = \frac{\tilde r}{\gcd(\tilde r, r')} \quad\text{with}\quad \tilde r= \frac r{\gcd(r, r')}.
\end{equation}
Let us write $r = r_1 r_{\ge 2}$, where $r_{\ge 2}$ is $\prod_{c'' < i
  < c'} (T-\lambda(\varphi(i))$, and define
$$y_i = \frac {w_i}{r_{\ge 2}} \bmod r_1, \quad 1 \le i \le N;$$
one easily sees that 
$$y_i = \sum_{1 \le i \le c''} \varphi_{i,j} \prod_{1 \le i' \le
  c'',\ i' \ne i} (T-\lambda(\varphi_i)).$$ In other words,
$((r_1,y_1,\dots,y_N),\lambda)$ is a zero-dimensional parametrization
of $\{\varphi_1,\dots,\varphi_{c''}\}$.

The set $\{\varphi_1,\dots,\varphi_{c''}\}$ contains
$Z(\f)=\{\varphi_1,\dots,\varphi_{s}\}$ and is contained in
$V(\f)$. To conclude, we remove from this set all points where the
Jacobian determinant of $\f$ vanishes. This is done as in Algorithm
${\sf Clean}$ of~\cite{GiLeSa01}, with one small modification: in that
result, zero-dimensional parametrizations did not involve rational
expressions of the roots of the form $x_i = v_i(T)/q'(T)$, but
polynomial ones of the form $x_i = v_i(T)$.  This is harmless, since
conversions between the two can be done in quasi-linear time.

\subsection{The algorithm and proof of
  Proposition~\ref{prop:homo1}} \label{ssec:algoK}
We can finally summarize the whole process in Algorithm~\ref{algo:1}
below. For the moment, we assume that a well-separating element
$\lambda$ is part of the input.

\begin{lemma}\label{lemma:10}
  Suppose that $\f=(f_1,\dots,f_N)$ has multi-degree at most
  $\d=(\dunder_1,\dots,\dunder_N)$, with all $\dunder_i$ in $\N^m$,
  and that $\f$ is given by a straight-line program $\Gamma$ of size
  $L$; suppose further that $\K$ has characteristic either zero or at
  least equal to $\max_{1 \le j \le m} d_{1,j} + \cdots +
  d_{N,j}$.
  Given $\Gamma$, $\d$ and a linear form $\lambda$ which is a
  well-separating element for $(\f,\g)$, one can compute a
  zero-dimensional parametrization of $Z(\f)$ using
  $$ O\tilde{~}\left (\scrC_\bn(\d)\scrC_{\bn'}(\d') \left (L+\sum_{i,j}d_{i,j}+N^2 \right)N \right)$$
  operations in $\K$. 
\end{lemma}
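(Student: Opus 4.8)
The plan is to assemble the constructions of the preceding subsections into a single procedure, the dominant cost coming from a global Newton iteration along the homotopy $\homotop(\f,\g,t)$. Throughout, write $c=\scrC_\bn(\d)$ and $c'=\scrC_{\bn'}(\d')$, and note that $c'\ge c$, since in the definition of $\scrC_{\bn'}(\d')$ the monomials free of $\vartheta_0$ already contribute $\scrC_\bn(\d)$. First I would construct the start system $\g$ and solve it: by Lemma~\ref{lemma:g}, the characteristic hypothesis ensures that $\g$ is computed by a straight-line program of length $O\tilde{~}(\sum_{i,j}d_{i,j})$, that its $c$ roots are computed in $O\tilde{~}(cN)$ operations, and that the Jacobian matrix of $\g$ is invertible at each of them. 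Since the first property of a well-separating element makes $\lambda$ separating for $Z(\g)$, the interpolation formulas~\eqref{eq:qv} evaluated at $t=0$ yield a zero-dimensional parametrization of $Z(\g)$ associated to $\lambda$ in $O\tilde{~}(cN)$ operations by fast interpolation; by Lemma~\ref{lemma:spec0}, this parametrization is exactly the specialization at $t=0$ of the parametrization $\scrQ=((q,v_1,\dots,v_N),\lambda)$ of $\mathfrak{Z}$ over $\K(t)$, and $t$ divides no denominator of $\scrQ$.

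The core of the argument is the lifting step. Starting from the parametrization at $t=0$, I would run a global Newton iteration over $\K[[t]]$ along $\homotop(\f,\g,t)$, in the manner of the geometric resolution algorithms of~\cite{GiLeSa01,Schost03}, doubling the $t$-adic precision at each step; this is well defined since, by construction, the Jacobian matrix of $\homotop(\f,\g,t)$ with respect to $\X$ is invertible along $\mathfrak{Z}$. One step at precision $\kappa$ evaluates $\homotop(\f,\g,t)$ and its Jacobian matrix at the current parametrization — whose entries are polynomials in $T$ of degree less than $c$ with coefficients in $\K[t]/t^\kappa$ — using the straight-line program together with automatic differentiation, and then solves an $N\times N$ linear system over $(\K[t]/t^\kappa)[T]/\langle q\rangle$; since one operation in that ring costs $O\tilde{~}(c\kappa)$, the cost of a step is $O\tilde{~}(N(L+\sum_{i,j}d_{i,j}+N^2)\,c\kappa)$, and, the precisions forming a geometric sequence, the total is governed by the largest one. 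By the degree bound on $\mathscr{Z}$ of Lemma~\ref{lemma:degreeZ} together with Theorem~1 of~\cite{Schost03}, the numerators and denominators of all coefficients of $q,v_1,\dots,v_N$ have degree at most $c'$, so a $t$-adic precision of $2c'+1$ suffices to recover each of these coefficients by fast Padé approximation, for a further $O\tilde{~}(Ncc')$ operations. Hence this whole phase costs $O\tilde{~}(cc'(L+\sum_{i,j}d_{i,j}+N^2)N)$ operations in $\K$, which is the bound of the statement.

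It remains to specialize $\scrQ$ at $t=1$ and clean the result. Setting $\tau=t-1$ and re-expanding the coefficients of $\scrQ$ as Laurent series in $\tau$, I would invoke Lemma~\ref{lemma:rw}: with $e=-\nu(q)$, the polynomials $q^\star=\tau^e q$ and $v_j^\star=\tau^e v_j$ lie in $\K[[\tau]][T]$, and their reductions modulo $\tau$, normalized by the leading coefficient $r_0$ of $q^\star(0,T)$, produce polynomials $r,w_1,\dots,w_N$ describing $\{\varphi_1,\dots,\varphi_{c'}\}$. Using~\eqref{eq:r1} I would extract the squarefree factor $r_1$ carrying the simple roots of $r$, form $y_i=w_i/r_{\ge 2}\bmod r_1$ for $1\le i\le N$, and thus obtain a genuine zero-dimensional parametrization $((r_1,y_1,\dots,y_N),\lambda)$ of $\{\varphi_1,\dots,\varphi_{c''}\}$; here Lemma~\ref{lemma:nonrep} guarantees that each $\varphi_i$ with $i\le s$ is a simple root of $r$, so that $Z(\f)=\{\varphi_1,\dots,\varphi_s\}\subseteq\{\varphi_1,\dots,\varphi_{c''}\}\subseteq V(\f)$. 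Finally, algorithm {\sf Clean} of~\cite{GiLeSa01} — adapted as in Subsection~\ref{ssec:recoverZ} to the $v_i(T)/q'(T)$ convention, at only quasi-linear extra cost — removes from this set exactly the points at which the Jacobian determinant of $\f$ vanishes, returning a zero-dimensional parametrization of $Z(\f)$. All the operations of this last phase, like those of building and parametrizing $Z(\g)$ and the Padé reconstruction, manipulate only objects of sizes already encountered (degree $O(c)$ in $T$, degree $O(c')$ in $t$ or $\tau$) and are subsumed by the bound of the previous paragraph.

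The step I expect to be the main obstacle is the cost analysis of the lifting phase: one has to verify carefully that the global Newton iteration runs over $\K[[t]]$ within the stated arithmetic cost — in particular that the $N\times N$ linear algebra with the Jacobian of the parametrized system contributes only the factor $N^2\cdot N$, and that automatic differentiation keeps the Jacobian evaluation within $O(NL)$ ring operations — and that the $t$-adic precision needed for reconstruction is $O(c')$, which rests on the degree estimate of Lemma~\ref{lemma:degreeZ} feeding into the coefficient-size bound quoted from~\cite{Schost03}. The characteristic hypothesis enters only through Lemma~\ref{lemma:g}, ensuring that $\g$ has exactly $c=\scrC_\bn(\d)$ distinct roots with invertible Jacobian, which is what makes the homotopy and all subsequent counts valid.
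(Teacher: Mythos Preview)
Your proposal is correct and follows essentially the same approach as the paper: construct and solve $\g$ via Lemma~\ref{lemma:g}, parametrize $Z(\g)$ by interpolation, lift along $\homotop(\f,\g,t)$ using the algorithm of~\cite{Schost03} up to $t$-adic precision $O(\scrC_{\bn'}(\d'))$, specialize at $t=1$ via Lemma~\ref{lemma:rw}, extract the simple-root part as in Subsection~\ref{ssec:recoverZ}, and finish with {\sf Clean}. Your cost breakdown of the lifting step is more explicit than the paper's (which simply defers to~\cite{Schost03}), but the conclusion is the same, and you correctly identify it as the dominant term.

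One small point to fix: you introduce $c'=\scrC_{\bn'}(\d')$ at the start, but later, when invoking Lemmas~\ref{lemma:nonrep} and~\ref{lemma:rw}, you write ``describing $\{\varphi_1,\dots,\varphi_{c'}\}$'' and ``$\{\varphi_1,\dots,\varphi_{c''}\}$'', where that $c'$ is the paper's \emph{other} $c'$ (the number of branches $\Phi_i$ with non-negative valuation, an integer in $\{s,\dots,c\}$). These two uses of $c'$ are unrelated; choose a different symbol for one of them. Also, in the final phase you should mention explicitly that {\sf Clean} requires reducing the Jacobian determinant $D$ of $\f$ modulo $(r_1,y_1,\dots,y_N)$, which costs $O(\scrC_\bn(\d)(L+N^2)N)$ via a straight-line program for $D$ of size $O((L+N^2)N)$; this is indeed subsumed by the lifting cost, but it is the only other place where the factor $L+N^2$ reappears.
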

\begin{proof}
  The cost of Step~\ref{step:1:1} in {\sf NonsingularSolutions\_aux}
  follows from Lemma~\ref{lemma:g}.  Step~\ref{step:1:2} can be done
  in quasi-linear time $O\tilde{~}(\scrC_\bn(\d) N)$ using the
  algorithms of~\cite[Chapter~10]{GaGe03}.

  For the main step, computing the parametrization $\scrQ$ with
  coefficients in $\K(t)$, we use the lifting algorithm
  in~\cite{Schost03}. The main factor determining the cost of this
  algorithm is the required precision needed in $t$, that is, the
  degree of the coefficients in the output: Lemma~\ref{lemma:degreeZ}
  shows that it is at most $\scrC_{\bn'}(\d')$. The other important
  quantity is the size of the straight-line program that evaluates
  $\homotop(\f,\g, t)$: using Lemma~\ref{lemma:g}, we see that it is
  $O(L+\sum_{i,j} d_{i,j})$. We deduce that this step has cost
  $O\tilde{~}(\scrC_\bn(\d)\scrC_{\bn'}(\d')(L+\sum_{i,j}d_{i,j}+N^2)N)$.

  Step~\ref{step:1:4} involves exponent comparisons, setting some
  variable to zero and computing a remainder; it can be done in
  quasi-linear time $O\tilde{~}(\scrC_\bn(\d) N)$. Step~\ref{step:1:5}
  requires computing the polynomial $r_1$ using~\eqref{eq:r1}, and
  some computations modulo $r_1$; all of this can be done in time
  $O\tilde{~}(\scrC_\bn(\d) N)$.
  
  Finally, Step~\ref{step:1:6} takes $O(\scrC_\bn(\d) (L + N^2)N)$ to
  reduce $D$ modulo $(r_1,u_1,\dots,y_N)$, where the term $(L+N^2)N$
  is the size of the straight-line program that computes the Jacobian
  determinant $D$. The other operations at this stage take
  quasi-linear time $O(\scrC_\bn(\d) N)$.
\end{proof}
 
\begin{algorithm}
\begin{algorithmic}[1]
\Require $\Gamma$, $\d$, a well-separating element $\lambda$
\Ensure a zero-dimensional parametrization of $Z(\f)$
\State \label{step:1:1}Define $\g$ and compute $Z(\g)$ using Lemma~\ref{lemma:g}
\Statex {\sf Cost:} $O\tilde{~}(\scrC_\bn(\d) N)$
\State \label{step:1:2}Compute a zero-dimensional parametrization $\scrQ_\g$ for $Z(\g)$ using interpolation formulas~\eqref{eq:qv}
\Statex {\sf Cost:} $O\tilde{~}(\scrC_\bn(\d) N)$
\State \label{step:1:3}Apply the lifting algorithm of~\cite{Schost03} to $\scrQ_\g$ and $\homotop(\f,\g, t)$, to recover a zero-dimensional parametrization $\scrQ$ for ${\frak Z}$ with coefficients in $\K(t)$
\Statex {\sf Cost:} $O\tilde{~}(\scrC_\bn(\d)\scrC_{\bn'}(\d')(L+\sum_{i,j}d_{i,j}+N^2)N)$
\State \label{step:1:4}Compute $r$ and $w_1,\dots,w_N$ as in Lemma~\ref{lemma:rw}
\Statex {\sf Cost:} $O\tilde{~}(\scrC_\bn(\d) N)$
\State \label{step:1:5}Compute $r_1$ and $y_1,\dots,y_N$ as in Subsection~\ref{ssec:recoverZ}
\Statex {\sf Cost:} $O\tilde{~}(\scrC_\bn(\d) N)$
\State \label{step:1:6}Compute and return ${\sf Clean}(r_1,y_1,\dots,y_N,D)$
\Statex {\sf Cost:} $O\tilde{~}( \scrC_\bn(\d) (L+N^2)N)$
\caption{({\sf NonsingularSolutions\_aux}): Solving $\f$ by symbolic homotopy}
\label{algo:1}
\end{algorithmic}
\end{algorithm}


Our last question is how to ensure that with high probability, a
randomly chosen $\lambda$ is well-separating. For this, we can follow
the analysis of~\cite[Lemma~4.2]{RRS}: for a linear form $\lambda$ to
be well-separating, $\lambda$ must assume non-zero values on at most
$c^2$ non-zero vectors in $\Kbar{}^N$ (with $c=\scrC_\bn(\d)$), namely the
differences $\bx-\bx'$, for distinct $\bx,\bx'$ in
$Z(\g)$, the differences $\varphi_i-\varphi_{i'}$, for $i,i'$
in $\{1,\dots,c'\}$ such that $\varphi_i \ne \varphi_{i'}$, and the
coefficient vectors $({\rm coeff}(\Phi_{i,j}, \tau^{\mu_i}))_{1 \le j
  \le N}$, for $i$ in $\{1,\dots,c\}$.

The following classical result shows that a random choice of $\lambda$ is
well-separating with high probability, provided we pick it in a large
enough set.

\begin{lemma}\label{lemma:avoid}
  Let $\A$ be a domain containing a field $\K$, let
  $\bx_1,\dots,\bx_k$ be non-zero vectors in $\A^N$, and suppose that
  $\K$ has characteristic either zero or at least $8(N-1)k$. Consider
  the set of linear forms
  $$u^{(i)} = X_1 + i X_2 + \cdots + i^{N-1} X_N,$$ for $i$ in
  $\{1,\dots,8(N-1)k\}$. Then at least $7/8$ of these linear forms
  vanish on none of $\bx_1,\dots,\bx_k$.
\end{lemma}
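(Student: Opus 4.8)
The plan is to bound the number of ``bad'' linear forms among the $u^{(i)}$ by a counting argument on non-zero polynomial evaluations, and then check that the Vandermonde-type structure of the $u^{(i)}$ prevents any single bad vector from killing too many of them. Fix one non-zero vector $\bx_\ell = (x_{\ell,1},\dots,x_{\ell,N}) \in \A^N$ and consider the univariate polynomial $P_\ell(Z) = x_{\ell,1} + x_{\ell,2} Z + \cdots + x_{\ell,N} Z^{N-1} \in \A[Z]$. Since $\bx_\ell \ne 0$, at least one coefficient of $P_\ell$ is non-zero, so $P_\ell$ is a non-zero polynomial of degree at most $N-1$ over the domain $\A$; hence it has at most $N-1$ roots in $\A$ (here we use that $\A$ is a domain). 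Therefore the linear form $u^{(i)}$ vanishes on $\bx_\ell$ for at most $N-1$ values of $i$ in $\{1,\dots,8(N-1)k\}$.

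Next I would union-bound over the $k$ vectors: the number of indices $i \in \{1,\dots,8(N-1)k\}$ for which $u^{(i)}$ vanishes on at least one of $\bx_1,\dots,\bx_k$ is at most $k(N-1)$. Since the total number of indices is $8(N-1)k$, the fraction of ``good'' indices — those $i$ for which $u^{(i)}$ vanishes on none of the $\bx_\ell$ — is at least $1 - \frac{k(N-1)}{8(N-1)k} = 1 - \tfrac18 = \tfrac78$, which is exactly the claim. The hypothesis on the characteristic of $\K$ (namely $0$ or at least $8(N-1)k$) is what guarantees that the $8(N-1)k$ chosen values $i \in \{1,\dots,8(N-1)k\}$ are pairwise distinct in $\A$, so that ``at most $N-1$ values of $i$'' genuinely translates into ``at most $N-1$ of the $u^{(i)}$'', and so that the count $8(N-1)k$ of available forms is meaningful.

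There is essentially no serious obstacle here; the only point requiring a little care is the interface between ``$i$ ranges over integers $1,\dots,8(N-1)k$'' and ``$i$ is evaluated in $\A$'': one must observe that distinct integers in this range map to distinct elements of $\A$ precisely because $\K \subseteq \A$ has characteristic $0$ or $\ge 8(N-1)k$, and $\A$ is a domain so no two of them can coincide. Once this is noted, the root-counting bound for $P_\ell$ over a domain and the union bound finish the proof immediately. I would write it up in three or four short sentences, emphasizing the domain hypothesis at the two places it is used (degree bound on roots of $P_\ell$, and distinctness of the evaluation points).
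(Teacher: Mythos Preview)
Your proof is correct and follows essentially the same approach as the paper: associate to each non-zero vector a univariate polynomial of degree at most $N-1$, bound its roots, and union-bound over the $k$ vectors. If anything, you are slightly more explicit than the paper about where the characteristic hypothesis and the domain hypothesis are actually used.
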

\begin{proof}
  For any $i$ in $\{1,\dots,k\}$, write
  $\bx_i=(x_{i,1},\dots,x_{i,N})$ and consider the polynomial $P_i =
  x_{i,1} + x_{i,2} T + \cdots + x_{i,N} T^{N-1}$. This is a non-zero
  polynomial, so it has at most $N-1$ roots, and thus at most $N-1$
  roots in $\{1,\dots,8(N-1)k\}$. Taking all $i$'s into account, we
  see that at least $7/8$ of the elements in $\{1,\dots,8(N-1)k\}$
  cancel none of the polynomials $P_i$.
\end{proof}

We can then state the main algorithm of this section, together with
its probablity analysis (obviously, the cost is the same as that of
{\sf NonsingularSolutions\_aux}), which will finish the proof of
Proposition~\ref{prop:homo1}. 

\begin{algorithm}
\begin{algorithmic}[1]
\Require $\Gamma$, $\d$ \Ensure a zero-dimensional parametrization of
$Z(\f)$ \State \label{step:2:1} Set $\lambda=u^{(i)}$, for a randomly
chosen $i$ in $\{1,\dots,8(N-1)\scrC_\bn(\d)^2\}$
\State \label{step:2:} Return {\sf
  NonsingularSolutions\_aux}$(\Gamma,\d,\lambda)$
\caption{({\sf NonsingularSolutions}): Solving $\f$ by symbolic homotopy}
\label{algo:2}
\end{algorithmic}
\end{algorithm}


\begin{remark}\label{rem:failure}
We do not know how to verify the output of our algorithm with an
admissible cost (that is, similar to the cost of running the algorithm
itself). In any case, the output is a subset of $Z(\f)$; this is
ensured by our call to {\sf Clean} in the last step of {\sf
  NonsingularSolutions\_aux}. However, we may miss some solutions.

More precisely, if $\lambda$ is a well-separating element, which
occurs with probability at least $7/8$, the output is $Z(\f)$ itself;
otherwise we may obtain a subset of it, or {\sf fail}, when for
instance the assumptions of Lemma~\ref{lemma:rw} are not
satisfied (this analysis establishes Proposition~\ref{prop:homo1}).

Running the algorithm $k$ times, we obtain $k$ outputs, and a
zero-dimensional parametri\-zation of $Z(\f)$ lies among these $k$
outputs with probability at least $1-1/8^k$. If it is the case, since
all other outputs have degree less than that of $Z(\f)$, 
the correct outputs are the ones with highest degree. 
\end{remark}


\subsection{Example}

To illustrate the algorithm, let us consider a simple example. In this
example, we work over $\Q$ (later on, the algorithm of this section
will be applied over a prime field, but it is of course valid over
$\Q$ as well). We take $m=2$ and $\bn=(1,2)$, so that $N=3$, and that
our variables are $\X_1=(X_{1,1})$ and $\X_2=(X_{2,1},X_{2,2})$.  We
take polynomials $\f=(f_1,f_2,f_3)$ having respective multi-degrees
$\d=(\dunder_1,\dunder_2,\dunder_3)$, with
$\dunder_1=\dunder_2=\dunder_3=(1,1)$ (that is, they are
bilinear). Explicitly,
\begin{align*}
f_1&=      -16X_{1,1}X_{2,1} + 8X_{1,1},\\
f_2&=      -8X_{1,1}X_{2,1} - 16X_{1,1}X_{2,2} - 4X_{1,1},\\
f_3&=      3X_{1,1}X_{2,1} + 4X_{1,1}X_{2,2} + X_{1,1} + 2X_{2,1} + 4.
\end{align*}
The quantity $\scrC_\bn(\d)$ is the coefficient of
$\vartheta_1\vartheta_2^2$ in $(\vartheta_1 + \vartheta_2)^3 \bmod
\langle \vartheta_1^2,\vartheta_2^3\rangle$, that is, $3$;
similarly, with $\d'=(\dunder'_1,\dunder'_2,\dunder'_3)$,
all $\dunder'_i$ being equal to $(1,1,1)$, and $\bn'=(1,1,1)$,
we see that  $\scrC_{\bn'}(\d')$ is the sum of the coefficients of
$(\vartheta_0 + \vartheta_1 + \vartheta_3)^3 \bmod
\langle \vartheta_0^2,\vartheta_1^2,\vartheta_2^3\rangle$, that is, $12$.

The system $\g$ is given by
\begin{align*}
g_1&= X_{1,1}X_{2,1}\\
g_2&= (X_{1,1} + 1)(X_{2,1} + X_{2,2} + 1)\\
g_3&= (X_{1,1} + 2)(X_{2,1} + 2X_{2,2} + 4),
\end{align*}
its solutions being $(-2,0,-1),(-1,0,-2),(0,2,-3)$ (so it has
$\scrC_\bn(\d)=3$ solutions, as claimed). Using
$\lambda=X_{1,1}+2X_{2,1}+4X_{2,2}$, the corresponding zero-dimensional 
parametrization is
$$\scrQ_\g=(( T^3 + 23T^2 + 174T + 432,\ -3T^2 - 48T - 192,\ 2T^2 +
30T + 108,\ -6T^2 - 90T - 330), \lambda).$$ Applying Newton iteration,
we deduce a zero-dimensional parametrization  with coefficients in $\Q(t)$
of the form
$\scrQ=((q,v_1,v_2,v_3),\lambda)$ that
describes the solutions of $t\f+(1-t)\g$; the coefficients that appear
have numerators and denominators of degree at most $8 \le
\scrC_{\bn'}(\d')=12$. For instance, the first two terms of $q$
are
{\tiny
\[
T^3 +\frac{ 9561314t^7 - 35955867t^6 + 43077203t^5 - 18750948t^4 + 2544440t^3 - 152707t^2 + 4291t - 46}
{1081710t^7 - 3054661t^6 + 2913623t^5 - 1066868t^4 + 133524t^3 - 7525t^2 + 199t - 2} T^2
+ \cdots,
\]
}
where as a sanity check, we can verify that letting $t=0$ gives back the polynomial $T^3+23 T^2+\cdots$
that we started from.

The $(t-1)$-adic valuation of $q$ is $-1$, which means that the
integer $e$ of Lemma~\ref{lemma:rw} is $1$. Hence, we multiply $q$ and
$v_1,v_2,v_3$ by $(t-1)$, to obtain polynomials
$q^\star,v^\star_1,v^\star_2,v^\star_3$, in which we can evaluate $t$
at $1$. In particular, we obtain $q^\star(1,T)=-80/17T^2 - 880/17T$,
whose leading coefficient is $r_0=-80/17$ (remark that Lemma~\ref{lemma:rw}
uses evaluation at $0$, since we work in variable $\tau=t-1$ in that paragraph).
 Still following
Lemma~\ref{lemma:rw}, we can then define $r=1/r_0\, q^\star(1,T)$ and
$w_j=1/r_0\, v_j^\star(1,T) \bmod r$, for $j=1,2,3$; explicitly, they are given
by
$$r=T^2+11T,\ w_1=-10T,\ w_2=-\frac 32T - 22,\ w_3= \frac 12T + 11.$$
The roots of $r$ are $T=0$ and $T=-11$ (both with multiplicity $1$, so 
we do not need to clean multiple roots); evaluating 
$(w_1/r', w_2/r', w_3/r')$ at $T=0$ and $T=-11$, we find the points
$\bx=( -10, 1/2, -1/2)$ and $\bx'=( 0, -2, 1 )$.

Both cancel $\f=(f_1,f_2,f_3)$; on the other hand, the Jacobian determinant
of $\f$ vanishes at $\bx'$, but not at $\bx$. We can then 
conclude that $Z(\f)=\{\bx\}$.


\section{The main algorithm: proof of Theorem~\ref{thm:homoZ}}\label{sec:bounds}

In this section, we work over $\K=\Q$ and we use the bounds on the
height of polynomials appearing in a zero-dimensional parametrization
of a set $Z(\f)$ given before in the context of a lifting algorithm
following that of~\cite{GiLeSa01}. 



\subsection{The lifting algorithm}\label{sec:mainalgo}

Our goal is now to give boolean complexity statements for the
computation of a zero-dimensional representation of $Z(\f)$. Given a
well-chosen prime $p$, we start by computing a zero-dimensional
parametrization of $Z(\f \bmod p)$, which is then lifted to a
zero-dimensional paramet\-rization of $Z(\f)$.

Recall that we assume that we are given an oracle $\mathscr{O}$, which
takes as input an integer $B$, and returns a prime number in
$\{B+1,\dots,2B\}$, uniformly distributed within the set of primes in
this interval~\cite[Section~18.4]{GaGe03}. Recall as well the
statement of Theorem~\ref{thm:homoZ}.

{\em  
  Suppose that $\f=(f_1,\dots,f_N)$ satisfies $\mdeg(\f) \le
  \d=(\dunder_1,\dots,\dunder_m)$ and $\hgt(\f)\le \bs=(s_1,\dots,s_N)$, and that $\f$
  is given by means of a straight-line program $\Gamma$ of size $L$,
  that uses integer constants of height at most $b$.

  There exists an algorithm {\sf NonsingularSolutionsOverZ} that takes
  $\Gamma$, $\d$ and $\bs$ as input, and that produces one of the
  following outputs:

\smallskip

  \begin{itemize}
  \item either a zero-dimensional parametrization of $Z(\f)$,

\smallskip

  \item or a zero-dimensional parametrization of degree less than that of $Z(\f)$,

\smallskip
  \item or {\sf fail}.
  \end{itemize}

\smallskip

\noindent 
  The first outcome occurs with probability at least $21/32$. In any
  case, the algorithm uses
  $$ O\tilde{~}\left (L b + \scrC_\bn(\d)\scrH_{\bn}(\bbeta,\d) \left (L+ N\mydeg +N^2 \right) N(\log(\myheight) + N )
  \right)$$
  boolean operations, with
  $$\mydeg = \max_{1 \le i \le N} d_{i,1} + \cdots + d_{i,m},\quad
\myheight =\max_{1 \le i \le N}(s_i) \quad\text{and}\quad
\bbeta=\left (s_i + \sum_{j=1}^m \log(n_j+1) d_{i,j}\right)_{1 \le i \le N}.$$
The
  algorithm calls the oracle $\mathscr{O}$ with an input parameter
  $B=\myheight \mydeg^{O(N)}$ and the polynomials in the output have
  degree at most $\scrC_\bn(\d)$ and height
  $O\tilde{~}(\scrH_{\bn}(\bbeta,\d) + N \scrC_\bn(\d))$.
}

As in the case of Proposition~\ref{prop:homo1}, running the algorithm
$k$ times gives a list of outputs among which is at least one
zero-dimensional parametrization of $Z(\f)$ with probability at least
$1-(11/32)^k$; observe also that all incorrect answers have degree less
than that of $Z(\f)$. 

The input size of the algorithm is $O(Lb)$ bits, whereas the output
size is \sloppy
$O\tilde{~}(N \scrC_\bn(\d) (\scrH_{\bn}(\bbeta,\d) + N
\scrC_\bn(\d)))$
bits; thus, up to polynomial factors in $N,d,\log(\myheight),L$, the
cost of the algorithm is close to our upper bound on the combined size
of its input and output. We are not aware of previous results that
would take multi-homogeneous bit-size bounds into account in such a
manner.

In order to quantify primes of ``bad reduction'', we need to introduce
several quantities related to $Z(\f)$. In addition to $\d$, $\bbeta$
and $s$ as given above, we define

\smallskip

\begin{itemize}
\item $\mu_1= N \log(8N \scrC_\bn(\d)^2)$, \smallskip
\item $\mu_2 = \scrH_\bn(\bbeta,\d) + 2\log(N + 1) \scrC_\bn(\d),$ \smallskip
\item $\mu_3 = \mu_2 + \mu_1 \scrC_\bn(\d)  +\log(N+2) \scrC_\bn(\d) + (N+1) \log(\scrC_\bn(\d))$,\smallskip
\item $H = 6N(\mydeg+1)\scrC_\bn(\d) \left ( \mu_3 + \myheight +\log(N+1) \scrC_\bn(\d)  \right )$,\smallskip
\item $H' = \scrH_\bn(\bbeta,\d) + (\mu_1+4 \log(N+2))\scrC_\bn(\d),$\smallskip
\item $e = \max_{1 \le j \le m} d_{1,j} + \cdots + d_{N,j}$, \smallskip
\item $B =\max\left ( 8 \lceil  H \rceil, e \right )$.\smallskip
\end{itemize}

Here is how these quantities come into play. We will run Algorithm
{\sf NonsingularSolutions} with input $\f \bmod p$, for a prime
$p$. The separating element used in this algorithm has coefficients in
$\F_p$; once lifted back to $\Z$ in the canonical manner, the
construction used in that algorithm shows that it has height at most
$\mu_1$.

Next, using Lemmas~8 and~9 in~\cite{DaMoScWuXi05}, we deduce that
there is a positive integer $A$ such that we have

\smallskip

\begin{itemize}
\item $\log(A) \le H$
\smallskip
\item for any prime $p$ that does not divide $A$, $Z(\f)$ and
  $Z(\f \bmod p)$ have the same cardinality.
\end{itemize}

\smallskip

\noindent Now, remark the following:

\smallskip

\begin{itemize}
\item there are at least $B/2\log(B)$ primes in $\{B+1,\dots,2B\}$, by~\cite[Theorem~18.8]{GaGe03};

\smallskip

\item there are at most $\log(A)/\log(B) \le H/\log(B)$ primes in  $\{B+1,\dots,2B\}$
  that divide~$A$.
\end{itemize}

\smallskip

Let $p$ be a prime in $\{B+1,\dots,2B\}$, which we obtain by calling
the oracle $\mathscr{O}$ with input parameter $B$. By the discussion
above, the probability that $p$ divides $A$ is at most $2 H/B$, which
is at most $1/4$ by construction; on the other hand, $B$ has been
chosen small enough to be $\myheight \mydeg^{O(N)}$, so that $\log(B)$
is $O(\log(\myheight) + N \log(\mydeg))$.

As in the algorithm in~\cite{GiLeSa01}, we start by solving the system
modulo $p$, then lift this solution to a zero-dimensional
parametrization of $Z(\f)$. By definition of $B$, the field $\F_p$
satisfies the assumptions of Proposition~\ref{prop:homo1}, since $B$
is at least $\max(e,8(N-1)\scrC_\bn(\d)^2)$.  Thus, we can call
Algorithm {\sf NonsingularSolutions}, with input the straight-line
program $\Gamma'$ obtained by reducing all constants appearing in
$\Gamma$ modulo $p$ (computing these constants takes time
$O\tilde{~}(L (\log(B)+ b))=O\tilde{~}(L (\log(\myheight)+N
\log(\mydeg)+ b))$).  Recall that we obtain

\smallskip

\begin{itemize}
\item either a zero-dimensional parametrization of $Z(\f \bmod p)$,
\smallskip
\item or a zero-dimensional parametrization of degree less than that of $Z(\f \bmod p)$,
\smallskip
\item or {\sf fail},
\end{itemize}

\smallskip

\noindent with the first outcome arising with probability at least $7/8$.  In
all cases, since operations modulo $p$ take
$O\tilde{~}(\log(\myheight)+N\log(\mydeg))$ bit operations, the running
time is
\begin{equation}\label{eq:rt1}
  O\tilde{~}\left (\scrC_\bn(\d)\scrC_{\bn'}(\d') \left (L+\sum_{1
        \le i \le N, 1 \le j \le m}d_{i,j}+N^2 \right) N (\log(\myheight) + N \log(\mydeg))
  \right)  
\end{equation}
bit operations. If this computation fails, our main
algorithm will return {\sf fail} as well. Else, we have obtained a
zero-dimensional parametrization $\scrQ_0=(( q_0, v_{1,0},\dots,
v_{N,0}), \lambda_0)$.

Let then $\lambda$ be the canonical lift of $\lambda_0$ to a linear
form with non-negative integer coefficients; as said previously, the
way $\lambda_0$ is chosen implies that $\lambda$ has height at most
$\mu_1=N \log(8 N \scrC_\bn(\d)^2)$.  Using Newton
iteration~\cite[Section~4.3]{GiLeSa01}, we deduce the existence of a
zero-dimensional parametrization $ \scrQ_\infty=((q_\infty,
v_{1,\infty},\dots, v_{N,\infty}), \lambda)$ with coefficients in the
$p$-adic integers $\Z_p$, that describes a subset of $Z(\f)$ over an
algebraic closure of the field of $p$-adic numbers $\Q_p$. We run the
lifting algorithm of~\cite[Section~4.3]{GiLeSa01} up to a precision at
least equal to $2H'$, from which we reconstruct a rational
parametrization with rational coefficients.

\smallskip

\begin{itemize}
\item Suppose that $Z(\f)$ and $Z(\f \bmod p)$ have the same
  cardinality, and that $\scrQ_0$ describes $Z(\f \bmod p)$; this is
  the case in particular when $p$ does not divide $A$, and $\scrQ_0$
  is a zero-dimensional parametrization of $Z(\f \bmod p)$, so it
  occurs with probability at least $7/8 \times 3/4 = 21/32$, as claimed.

  Then, by reasons of cardinality, the zero-dimensional parametrization
  $\scrQ_\infty$ actually describes all of $Z(\f)$, over an algebraic
  closure of $\Q_p$. Since $Z(\f)$ is defined over $\Q$, and since
  $\lambda$ has coefficients in $\Z$, we deduce that all coefficients in
  $\scrQ_\infty$ actually belong to $\Q$: indeed, these polynomials show
  up as a Gr\"obner basis in $\Q_p[X_1,\dots,X_N,T]$ of the ideal
  generated by the defining ideal of $Z(\f)$, together with $T-\lambda$.

  Since the separating element constructed by {\sf NonsingularSolutions} has
  coefficients of height  at most $\mu_1$,
 Proposition~\ref{prop:hgt} shows that all coefficients in
  $\scrQ_\infty$ are rational numbers of height at most $H'$. Hence, 
  knowing them modulo a number greater than $\exp(2H')$ is sufficient
  to reconstruct them.

\smallskip

\item Otherwise, either $Z(\f)$ and $Z(\f \bmod p)$ do not have the
  same cardinality, or $\scrQ_0$ describes a proper subset of
  $Z(\f \bmod p)$. Since the lifting argument above shows that
  $Z(\f \bmod p)$ must have cardinality at most equal to that of
  $Z(\f)$, in all cases, $\scrQ_0$ has degree less than that of
  $Z(\f)$, and similarly for the output of the lifting algorithm.
\end{itemize}

\smallskip

\noindent In any case, the dominant part of this process is lifting, since
reconstructing rational numbers from their $p$-adic expansion can be
done in quasi-linear time~\cite[Chapter~11]{GaGe03}. Using the cost analysis
from~\cite{GiLeSa01}, we deduce that the cost is
\begin{equation}\label{eq:rt2}
 O\tilde{~}\left (\scrC_\bn(\d) H'  \left (L+N^2 \right) N \right)  
\end{equation}
bit operations. 

Up to logarithmic factors, the height bound $H'$ on the output is
$O\tilde{~}(\scrH_{\bn}(\bbeta,\d) + N \scrC_\bn(\d))$.  Remark now
that the definitions of $\scrH_\bn(\bbeta,\d)$ and $\scrC_{\bn'}(\d')$
are very similar, and imply that we have
$\scrC_{\bn}(\d) \le \scrC_{\bn'}(\d') \le \scrH_\bn(\bbeta,\d)$.
Thus, we deduce from~\eqref{eq:rt1} and~\eqref{eq:rt2} the following
upper bound on the total boolean cost of our algorithm:
$$ O\tilde{~}\left (L b + \scrC_\bn(\d)\scrH_{\bn}(\bbeta,\d) \left (L+ N\mydeg +N^2 \right) N(\log(\myheight) + N\log(\mydeg) )
\right).$$



\section{Application to polynomial minimization}\label{sec:appli}

We finally turn to the last question mentioned in the introduction:
given polynomials $\h=(h_1, \ldots, h_p)\subset \Z[X_1, \ldots, X_n]$,
that define an algebraic set $V=V(\h)\subset \C^n$, determine
$\min_{\bx \in V \cap \R^n} \pi_1(\bx)$, where $\pi_1$ is the
canonical projection $ (x_1, \ldots, x_n)\mapsto x_1$.

Our goal is to give boolean complexity estimates for the computation
of this minimum, under some genericity assumptions on $\h$. The
assumptions on $\h$ are discussed in the first subsection, which also
contains the statement of the main result of this section
(Theorem~\ref{thm:main}). Next, we discuss the Lagrangian
reformulation of our minimization problem; this allows us prove
Theorem~\ref{thm:main} in the last subsection.


\subsection{Genericity assumptions}\label{ssec:gen}

Let $\h=(h_1, \ldots, h_p)$ be our input polynomials and let
$V\subset \C^n$ be their zero-set. In general, in cases where we may
not necessarily assume $V$ smooth, the {\em critical points} of
$\pi_1$ on $V$ are those points $\bx \in V$ that do not belong to the
singular locus of $V$ and at which $T_\bx V$ is ``vertical'', in the
sense that $\pi_1(T_\bx V)= \{0\}$; following~\cite{BaGiHeMb01,
  BaGiHePa05}, we denote this set by $W(\pi_1,V)$.

Let $\jac(\h)$ be the Jacobian matrix of
$\h$ and let $\jac(\h, 1)$ denote the truncated
jacobian matrix (which in general is rectangular)
\[
\left [\begin{array}{ccc}
  \frac{\partial h_1}{\partial X_2}& \cdots\cdots & \frac{\partial h_1}{\partial X_n} \\
\vdots & & \vdots \\
  \frac{\partial h_p}{\partial X_2}& \cdots\cdots & \frac{\partial h_p}{\partial X_n} \\
\end{array}\right ].
\]
Following again the construction of~\cite{BaGiHePa05}, if $m$ is a
$(p-1)$-minor of $\jac(\h,1)$, $ {\sf Minors}(\h, m)$ denotes the
vector of $p$-minors of $\jac(\h, 1)$ obtained by adding the missing
row and the missing column to $m$; there are $n-p$ such minors. Then,
we say that $(h_1, \ldots, h_p)$ satisfies assumption $\GS$ if the
following conditions hold:
\begin{itemize}
\item[(1)] At any point of $V$, the jacobian matrix $\jac(\h)$ has full rank
  $p$.

  This implies that if not empty, $V$ is smooth and
  $(n-p)$-equidimensional and $\h$ generates its vanishing ideal.  As
  a further consequence, the set $W(\pi_1,V)$ of critical points of
  $\pi_1$ on $V$ consists exactly of those points $\x$ that satisfy
  the conditions
  \[h_1(\bx)=\cdots =h_p(\bx)=0, \qquad \rank(\jac_\bx(\h, 1))\leq p-1,\]
  and the minimizers of $\pi_1$ on $V \cap \R^n$ form a subset of $W(\pi_1,V)$.

\item[(2)] The truncated jacobian matrix $ \jac(\h, 1)$ has rank $p-1$
  at all $\bx\in W(\pi_1, V)$.

\item[(3)] The set $W(\pi_1, V)$ is finite.

\item[(4)] For any $(p-1)$-minor $m$ of $\jac(\h,1)$, the polynomials
  $\h, {\sf Minors}(\h, m)$ define $W(\pi_1, V)$ in the Zariski open
  set ${\cal O}(m)$ defined by $m\neq 0$ and their jacobian matrix has
  full rank $n$ at any point of $W(\pi_1, V)\cap {\cal O}(m)$.
\end{itemize}

\smallskip\noindent
We can now state the main result of this section. 

\begin{theorem}\label{thm:main}
  Let $\h=(h_1, \ldots, h_p)\subset \Z[X_1, \ldots, X_n]$, and assume
  that all $h_i$'s have degree at most $\mydegh$ and height at most
  $\myheight$. Assume further that $\h$ satisfies $\GS$ and is given by a
  straight-line program $\Gamma$ of length $E$, that uses integers of
  height at most $\myheight'$.

  Then, there exists a randomized algorithm that takes $\Gamma$,
  $\mydegh$ and $\myheight$ as input, and computes a zero-dimensional
  parametrization of the set of critical points of $\pi_1$ on $V(\h)$
  with probability at least $147/256\geq 0.57$ and using
$$ O\tilde{~}\left ( p (E+n) \myheight' + n^{3}
  {{n-1} \choose {p-1}}{n \choose p}
  (\myheight+\mydegh)\mydegh^{2p}(\mydegh-1)^{2(n-p)}(pE+n\mydegh+n^2)
\right ).$$
boolean operations. Moreover, the output polynomials have degree at
most ${{n-1}\choose{p-1}}\mydegh^p(\mydegh-1)^{n-p}$ and height
$O\tilde{~}
(n{{n}\choose{p}}(\myheight+\mydegh)\mydegh^p(\mydegh-1)^{n-p} )$.
\end{theorem}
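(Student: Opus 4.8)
The strategy is to phrase the problem as the computation of a set $Z(\f)$ for a suitable square system and to invoke Theorem~\ref{thm:homoZ}. Set $V=V(\h)\subset\C^n$ and let $\f=(f_1,\dots,f_N)$, $N=n+p$, in the variables $\X=(X_1,\dots,X_n)$ and $\bL=(L_1,\dots,L_p)$, be the Lagrange system consisting of $h_1,\dots,h_p$, the $n-1$ equations $L_1\tfrac{\partial h_1}{\partial X_j}+\cdots+L_p\tfrac{\partial h_p}{\partial X_j}=0$ for $2\le j\le n$, and the normalization $u_1L_1+\cdots+u_pL_p-1$, where the $u_k$ are nonzero integers to be drawn at random. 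As observed in the introduction, this system is bi-homogeneous with $m=2$, $\bn=(n,p)$: bidegree $(\mydegh,0)$ for the $h_i$, at most $(\mydegh-1,1)$ for the $n-1$ Lagrange equations, and $(0,1)$ for the last one; condition~(3) of $\GS$ forces $\jac(\h,1)$ to have no identically zero column, so all $f_i$ are non-constant, as required by Theorem~\ref{thm:homoZ}.

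\smallskip\noindent\emph{Step 1 (geometric reformulation).} I would first prove that, under $\GS$ and for a generic choice of $(u_1,\dots,u_p)$, the set $Z(\f)$ is finite and the projection $(\X,\bL)\mapsto\X$ is a bijection from $Z(\f)$ onto $W(\pi_1,V)$. One inclusion is immediate: any zero $(\bx,\bL)$ of $\f$ satisfies $\h(\bx)=0$ and has $\bL$ a nonzero left-kernel vector of $\jac_\bx(\h,1)$ (nonzero by the normalization), so $h_1(\bx)=\cdots=h_p(\bx)=0$ and $\rank\jac_\bx(\h,1)\le p-1$; by condition~(1), $V$ is smooth, hence $\bx\in W(\pi_1,V)$. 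Conversely, for $\bx\in W(\pi_1,V)$, condition~(2) gives $\rank\jac_\bx(\h,1)=p-1$, so its left kernel is a line $\ell_\bx\subset\C^p$; since $W(\pi_1,V)$ is finite by condition~(3), a generic $(u_1,\dots,u_p)$ is orthogonal to none of the finitely many lines $\ell_\bx$, and for such a choice $\ell_\bx$ meets the hyperplane $\{u_1L_1+\cdots+u_pL_p=1\}$ in a single point $\bL(\bx)$, giving $\bx$ a unique preimage $(\bx,\bL(\bx))$ in $V(\f)$. It then remains to check that this preimage lies in $Z(\f)$, i.e. that the Jacobian of $\f$ has rank $N$ there: on the chart $\{m\ne0\}$ for a $(p-1)$-minor $m$ of $\jac(\h,1)$ with $m(\bx)\ne0$, the Lagrange equations express $\bL$ as a rational function of $\X$, presenting the Lagrange variety as a graph over $V(\h,{\sf Minors}(\h,m))$, so the full-rank-$n$ assumption of condition~(4) transfers to the rank-$N$ condition on $\f$. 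Combined with uniqueness of the preimage, this shows $Z(\f)$ is finite, of cardinality $|W(\pi_1,V)|$, and projects bijectively onto $W(\pi_1,V)$. This transfer of smoothness between the two descriptions of the critical locus, together with the careful handling of the genericity conditions on $(u_1,\dots,u_p)$, is the part I expect to require the most work.

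\smallskip\noindent\emph{Step 2 (quantitative bounds and invocation of Theorem~\ref{thm:homoZ}).} Next I would compute the multi-homogeneous invariants of $\f$. In the Chow ring $A^*(\P^{(n,p)})$ one has $\bchi(\d)=(\mydegh\vartheta_1)^p\bigl((\mydegh-1)\vartheta_1+\vartheta_2\bigr)^{n-1}\vartheta_2$ modulo $\langle\vartheta_1^{n+1},\vartheta_2^{p+1}\rangle$, in which the only surviving monomial of total degree $N$ is $\vartheta_1^n\vartheta_2^p$, so $\scrC_\bn(\d)$ equals its coefficient, namely ${{n-1}\choose{p-1}}\mydegh^p(\mydegh-1)^{n-p}$; isolating the $\zeta$-linear part of $\bchi'(\bbeta,\d)$ and using that the $f_i$ have height $O(\myheight+\log\mydegh+\log\max_k|u_k|)$ — and that the $u_k$ may be taken of height $O(\log\scrC_\bn(\d))$ — gives $\scrH_\bn(\bbeta,\d)=O\tilde{~}\bigl(n{{n}\choose{p}}(\myheight+\mydegh)\mydegh^p(\mydegh-1)^{n-p}\bigr)$. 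For the straight-line program, the Baur--Strassen construction produces from $\Gamma$ (length $E$, integer constants of height $\myheight'$) all $\partial h_i/\partial X_j$ in length $O(pE)$, and $O(np)$ further operations assemble the Lagrange and normalization equations, giving a straight-line program for $\f$ of length $L'=O(p(E+n))$ with integer constants of height $O(\myheight'+\log\mydegh+\log\max_k|u_k|)$. Substituting $N=n+p$, $\mydeg\le\mydegh$, and these bounds into the cost estimate of Theorem~\ref{thm:homoZ} — with $\scrC_\bn(\d)\scrH_\bn(\bbeta,\d)(L'+N\mydeg+N^2)N(\log\myheight+N)$ and absorbing polylogarithmic factors (including $\log\myheight$) — yields the announced boolean cost; the factor $n^3$ arises as one factor $n$ from the explicit $N$, one from the $N$ inside $\log(\myheight)+N$, and one carried by $\scrH_\bn(\bbeta,\d)$, while the term $L'\,b'$ contributes $O\tilde{~}(p(E+n)\myheight')$.

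\smallskip\noindent\emph{Step 3 (conclusion).} Theorem~\ref{thm:homoZ} returns, with probability at least $21/32$, a zero-dimensional parametrization of $Z(\f)\subset\C^{n+p}$ with respect to a linear form in the $n+p$ variables; since by Step~1 the projection to $\X$-space is injective on $Z(\f)$, a random linear form in $X_1,\dots,X_n$ is separating for $Z(\f)$, so a routine change of parametrization (quasi-linear in the degree times $N$, with height growth absorbed in $O\tilde{~}(\cdot)$) followed by discarding the $\bL$-coordinates produces a zero-dimensional parametrization of $W(\pi_1,V)$. Conditioning on the random $(u_1,\dots,u_p)$ being good — which, choosing it of the form $(1,i,\dots,i^{p-1})$ for $i$ in a sufficiently large range and invoking the avoidance argument of Lemma~\ref{lemma:avoid}, occurs with probability at least $7/8$ — Theorem~\ref{thm:homoZ} then succeeds with probability at least $21/32$, for an overall success probability at least $\frac{21}{32}\cdot\frac78=\frac{147}{256}$. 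Finally the output degree is $\scrC_\bn(\d)={{n-1}\choose{p-1}}\mydegh^p(\mydegh-1)^{n-p}$ and the output height is the bound $O\tilde{~}\bigl(\scrH_\bn(\bbeta,\d)+N\scrC_\bn(\d)\bigr)=O\tilde{~}\bigl(n{{n}\choose{p}}(\myheight+\mydegh)\mydegh^p(\mydegh-1)^{n-p}\bigr)$ of Theorem~\ref{thm:homoZ}, unchanged up to $O\tilde{~}(\cdot)$ by the change of parametrization.
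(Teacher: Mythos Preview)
Your proposal is correct and follows essentially the same route as the paper: your Step~1 is the content of Propositions~\ref{prop:A} and~\ref{prop:U}, your Step~2 matches Section~5.3 almost term by term (same computation of $\scrC_\bn(\d)$, same $O\tilde{~}$ estimate for $\scrH_\bn(\bbeta,\d)$, same use of Baur--Strassen for the straight-line program), and the probability $\tfrac{7}{8}\cdot\tfrac{21}{32}=\tfrac{147}{256}$ is obtained identically.

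One small divergence worth flagging: in Step~3 you project from $\C^{n+p}$ down to $\C^n$ via a change of separating element to a linear form in $X_1,\dots,X_n$ only. The paper does not do this---it leaves the output as a parametrization of $Z({\cal W}_\bu)$ in the $(\X,\bL)$-space and reads the theorem accordingly (cf.\ the introduction: ``the output of our algorithm describes a finite set in the $\X,\bL$-space whose projection on the $\X$-space is the set of critical points''). Your extra step introduces a third random choice that you do not fold into the $147/256$ bound; either drop the projection to match the paper, or account for it explicitly in the probability analysis.
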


We now prove that assumption $\GS$ is generic.  The proof of this
proposition occupies the rest of this subsection.  In what follows, we
let $\C[X_1, \ldots, X_n]_\mydegh$ denote the subset of polynomials in
$\C[X_1, \ldots, X_n]$ of degree at most $\mydegh$; we can see this as
an affine space of dimension ${{n+\mydegh} \choose n}$.


\begin{proposition}\label{prop:genericity}
  Let $\mydegh$ be a positive integer. There exists a
  nonempty Zariski open set $\mathscr{O}\subset \C[X_1, \ldots,
  X_n]_\mydegh^p$ such that any $\h\in \mathscr{O}$ satisfies $\GS$.
\end{proposition}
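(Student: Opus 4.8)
The plan is to show that each of the four conditions in $\GS$ holds on a nonempty Zariski open subset of $\C[X_1,\dots,X_n]_\mydegh^p$; since a finite intersection of nonempty Zariski opens is nonempty and open, this suffices. The standard tool here is Thom's weak transversality theorem (or the Jacobian criterion combined with a parametrized-variety incidence argument), applied to auxiliary incidence varieties in a product of the coefficient space $\C[X_1,\dots,X_n]_\mydegh^p$ with an appropriate ``configuration space'' in the $\bx$-variables. I would fix once and for all a single system $\h^0$ (for instance a generic system of linear forms, or products of generic linear forms when $\mydegh>1$) that is easily seen to satisfy $\GS$; this simultaneously proves nonemptiness of each open set and lets me avoid worrying about whether the open sets might be empty.

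I would treat the conditions in turn. For condition (1): consider the incidence variety $\{(\h,\bx)\ :\ h_1(\bx)=\cdots=h_p(\bx)=0,\ \rank\jac_\bx(\h)<p\}$ inside $\C[X_1,\dots,X_n]_\mydegh^p\times\C^n$; because the coefficients of the $h_i$ appear linearly and one can freely perturb them, the projection to the $\bx$-factor is a submersion onto $\C^n$ with fibers of codimension $p + 1$ (the extra $1$ coming from a single vanishing maximal minor once the equations $h_i(\bx)=0$ are imposed), so the image in the $\h$-factor is contained in a proper closed subset; its complement $\mathscr{O}_1$ gives condition (1). Conditions (2) and (4) have exactly the same flavor: one builds, for each fixed choice of a $(p-1)$-minor $m$ of the truncated Jacobian, an incidence variety over the chart $\{m\ne 0\}$ recording that $\h={\sf Minors}(\h,m)=0$ but the full Jacobian of $(\h,{\sf Minors}(\h,m))$ drops rank, resp. that $\jac(\h,1)$ drops below rank $p-1$, and shows by the same linear-perturbation submersion argument that a generic $\h$ avoids the bad locus; intersecting over the finitely many minors $m$ keeps the set open and dense. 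Condition~(3), finiteness of $W(\pi_1,V)$, follows once (1), (2) and (4) hold: on each chart $\{m\ne 0\}$, $W(\pi_1,V)\cap{\cal O}(m)$ is cut out by the $n$ equations $\h,{\sf Minors}(\h,m)$ whose Jacobian has full rank $n$ there (condition (4)), hence that set is smooth of dimension $0$; since the finitely many charts cover $W(\pi_1,V)$ (at a critical point $\jac(\h,1)$ has rank exactly $p-1$ by (2), so some $(p-1)$-minor is nonzero), $W(\pi_1,V)$ is finite. One should also throw in the open condition that $W(\pi_1,V)$ has no component at infinity or, more simply, note that $0$-dimensionality plus being a closed algebraic set already gives finiteness.

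The main obstacle is the transversality/submersion verification for conditions (2) and (4): one must check that, at a point where $h_1(\bx)=\cdots=h_p(\bx)=0$ together with a prescribed rank condition on the (possibly rectangular) truncated Jacobian, the derivative of the defining map \emph{with respect to the coefficients of $\h$} is surjective. This is where the precise linear structure of how coefficients enter the $h_i$ and their partial derivatives must be exploited: perturbing the constant and degree-$1$ terms of a chosen $h_i$ lets one move $h_i(\bx)$ and the entries of the $i$-th row of $\jac_\bx(\h,1)$ essentially independently at a fixed point $\bx$, which is enough to realize any prescribed first-order variation of the relevant minors. Once this surjectivity is in hand at every point of the incidence variety, Thom's theorem (or a direct dimension count on the incidence variety followed by the theorem on dimension of fibers of a dominant morphism) immediately yields that the bad $\h$'s form a proper Zariski-closed subset, and the proof concludes by intersecting the resulting finitely many opens and invoking the explicit $\h^0$ for nonemptiness.
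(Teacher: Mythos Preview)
Your overall strategy via incidence varieties and parametric transversality is sound and differs genuinely from the paper's proof. The paper handles $\GS(1)$ by Sard's theorem, viewing the constant terms $\gamma_{i,1}$ of the $h_i$ as the target of the map $\bx \mapsto (h_1(\bx)-\gamma_{1,1},\dots,h_p(\bx)-\gamma_{p,1})$; for $\GS(2)$, $\GS(3)$, $\GS(4)$ it introduces a generic linear change of coordinates $\X \mapsto \mA\X$, invokes previously established genericity statements from~\cite{SaSc13} for the transformed system $\h^\mA$, and then transports the resulting open set back along the invertible linear map $\h \mapsto \h^\mA$ on coefficient space. Your derivation of $\GS(3)$ as a consequence of $\GS(2)$ and $\GS(4)$ is a pleasant shortcut the paper does not take (it proves $\GS(3)$ independently by the same change-of-variables device).

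There is, however, a concrete error in your dimension count for $\GS(1)$. The condition $\rank\jac_\bx(\h)<p$ on a $p\times n$ matrix (with $p\le n$) is not ``a single vanishing maximal minor'': the determinantal locus of rank at most $p-1$ has codimension $n-p+1$ in $\C^{p\times n}$. Hence the fiber over a fixed $\bx$ has codimension $p+(n-p+1)=n+1$ in the coefficient space, giving an incidence variety of dimension $p\mathcal{N}-1$ and thus a proper image in $\C^{p\mathcal{N}}$; with your claimed codimension $p+1$ the count only closes when $n=p$. A related caution applies to $\GS(4)$: the bad locus there is defined through the Jacobian of $(\h,{\sf Minors}(\h,m))$, which involves \emph{second} partial derivatives of the $h_i$. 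Perturbing only the constant and degree-$1$ coefficients, as you propose, cannot move those second derivatives at a fixed $\bx$, so your surjectivity verification will also need degree-$2$ perturbations (available once $\mydegh\ge 2$; the linear case $\mydegh=1$ is harmless since $W(\pi_1,V)$ is generically empty for $p<n$ and a single point for $p=n$).
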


Let ${\cal N}={{n+\mydegh}\choose{n}}$ be the number of monomials of
degree at most $ \mydegh$ in $\C[X_1, \ldots, X_n]$ and denote these
monomials by $1={\frak m}_1, \dots, {\frak m}_{\cal N}$; they form a
$\C$-vector space basis of $\C[X_1, \ldots, X_n]_\mydegh$.  For
$1\leq i \leq p$, denote by ${\frak h}_i$ the polynomial
$\sum_{j=1}^{\cal N}\gamma_{i,j} {\frak m}_j$, where the
$\gamma_{i,j}$'s are new indeterminates, and by $\K$ the field of
rational fractions $\C(\gamma_{1,1}, \ldots, \gamma_{p, {\cal N}})$.
We consider the sequence
${\frak H}=({\frak h}_1,\ldots, {\frak h}_p)$; it is seen as a
sequence of polynomials in $\K[X_1, \ldots, X_n]$.

Polynomials in $\C[X_1, \ldots, X_n]_\mydegh$ are obtained by
instantiating the indeterminates $\gamma_{i,j}$ to elements of $\C$,
so we can we identify a polynomial $f$ with the sequence of coefficients
of ${\frak m}_1, \dots, {\frak m}_{\cal N}$ in it. In a similar way, a sequence of
polynomials in $\C[X_1, \ldots, X_n]_\mydegh^p$ is identified with
elements of $\C^{{\cal N}p}$ and, by abuse of notation, given a subset
$A\subset \C^{{\cal N}p}$ we may use the notation
``$\h=(h_1, \ldots, h_p) \in A$'' to denote a family of polynomials in
$\C[X_1, \ldots, X_n]_\mydegh^p$ whose sequence of coefficients belongs
to $A$.


\paragraph*{Genericity of $\GS(1)$.}
We first prove that for a generic choice of $\h$, at any point of
$V(\h)$, the jacobian matrix $\jac(\h)$ of $\h$ has full rank $p$. 
In this paragraph, we consider the polynomials ${\frak l}_i={\frak
  h}_i-\gamma_{i,1}$ for $1\leq i \leq p$; hence ${\frak l}_i$ has no
constant term, and belongs to $\K'[X_1, \ldots, X_n]$, where
$\K'\subset \K$ is the field of rational fractions
$\C((\gamma_{i,j})_{1\leq i \leq p, 2\leq j \leq {\cal N}})$. Let $\psi$
denote the mapping
$$\begin{array}{r@{~}c@{~}c@{~}c}
  \psi :& \overline{\K'}^n &\longrightarrow&\overline{\K'}^p\\
  & \mathbf c&\longmapsto& ({\frak l}_1(\mathbf c),\ldots,{\frak l}_p(\mathbf c)).
\end{array}$$
Let $K_0\subset \overline{\K'}^p$ be the set of critical values of
$\psi$.  By Sard's Theorem \cite[Chap. 2, Sec. 6.2, Thm
  2]{Shafarevich77}, $K_0$ is contained in a proper closed subset of
the closure of the image of $\psi$, and thus of $\overline{\K'}^p$.

We use $\gamma_{1,1},\dots,\gamma_{p,1}$ as coordinates in the
target space. Then, the ideal of
$\K'[\X,\gamma_{1,1},\dots,\gamma_{p,1}]$ generated by
${\frak l}_1+\gamma_{1,1},\dots,{\frak l}_p+\gamma_{p,1}$ and the
maximal minors of $\jac({\frak l}_1,\dots,{\frak l}_p)$ contains a
non-zero polynomial $P \in \K'[\gamma_{1,1},\dots,\gamma_{p,1}]$.  Up
to multiplying $P$ by a suitable denominator, we can then assume that
$P$ lies in $\C[ (\gamma_{i,j})_{1\leq i \leq p, 1\leq j \leq {\cal N}}]$ and
belongs to the ideal generated by the above polynomials in
$\C[ (\gamma_{i,j})_{1\leq i \leq p, 1\leq j \leq {\cal N}},\X]$.

Remark now that the generators we consider can be rewritten as
${\frak h}_1,\dots,{\frak h}_p$ and the maximal minors of
$\jac({\frak h}_1,\dots,{\frak h}_p)$. Thus, if we define
$\mathscr{O}_1\subset \C^{{\cal N}p}$ as the non-empty Zariski open
$\C^{{\cal N}p}-V(P)$, we deduce that for any $\h\in \mathscr{O}_1$, $\GS(1)$
holds.

\paragraph*{Genericity of $\GS(2)$.}
For the remaining genericity properties, we will use the fact that for
any system $\h$ that satisfies $\GS(1)$, these properties are known to
hold in generic coordinates. From this, we will deduce our claims 
using several times the following arguments.

Let ${\frak A}$ be the $n\times n$ matrix $\left (\alpha_{k,
  \ell} \right )_{1\leq k, \ell\leq n}$, where the $\alpha_{k,\ell}$'s
are new indeterminates.  We denote by $\FF$ the field of rational
fractions in the indeterminates $\gamma_{i,j}$ and $\alpha_{k,\ell}$ (for
$1\leq i \leq p$, $1\leq j \leq {\cal N}$ and $1\leq k,\ell\leq n$) with
coefficients in $\C$; we will also consider its subfield
$\FF'=\C(\alpha_{1,1}, \ldots, \alpha_{n,n})$.  For $f\in \FF[X_1,
  \ldots, X_n]$, we denote by $f^{\frak A}$ the polynomial $f({\frak
  A} \X)$; for a subset $F\subset \FF[X_1, \ldots, X_n]$, $F^{\frak
  A}$ denotes the set $\{f^{\frak A}\mid f\in F\}$. These notations
are naturally extended to the situation where we let a matrix $\mA\in
\GL_n(\C)$ act on $(X_1, \ldots, X_n)$.

We prove here that for a generic choice of $\h$, the matrix
$\jac(\h,1)$ has rank at least $p-1$ at any $\bx$ in $V(\h)$; this
will prove that it has rank exactly $p-1$ at the points of
$W(\pi_1,V(\h))$.

Let $\Delta({\frak H}, {\frak A})$ be the vector of $(p-1)$-minors of
$\jac({\frak H}^{\frak A}, 1)$ and
${\frak S}({\frak H}, {\frak A})\subset \F[X_1, \ldots, X_n]$ be the
polynomial sequence
\[{\frak H}^{\frak A}, \Delta({\frak H}, {\frak A});\]
remark that the polynomials $\Delta({\frak H}, {\frak A})$ are {\em
  not} obtained by applying the change of variables ${\frak A}$ to the
$(p-1)$-minors of $\jac({\frak H}, 1)$.  For $\h\in \C^{{\cal N}p}$ and
$\mA\in \GL_n(\C)$, we denote by
${\frak S}(\h, {\frak A})\subset \F'[X_1, \ldots, X_n]$,
${\frak S}({\frak H}, \mA)\subset \K[X_1, \ldots, X_n]$ and
${\frak S}(\h, \mA)\subset \C[X_1, \ldots, X_n]$ the polynomial
sequences obtained by instantiating ${\frak H}$ to $\h$ and/or
${\frak A}$ to $\mA$.


Let $r$ be the dimension of the zero-set of ${\frak S}({\frak H},
{\frak A})$ over an algebraic closure of $\F$. We first prove that
this dimension is $-1$.

Indeed, there exists a non-zero polynomial $\Lambda$ in
$\C[(\gamma_{i,j})_{1\leq i \leq p, 1\leq j \leq {\cal
    N}},(\alpha_{k,\ell})_{ 1\leq k,\ell\leq n}]$
such that for any $\h,\mA$ that do not cancel $\Lambda$, the zero-set
of the system ${\frak S}(\h, \mA)$ has dimension $r$ as well.  Fix
$\h$ such that $\Lambda(\h,{\frak A})$ is not zero and such that $\h$
belongs to $\mathscr{O}_1$ (such an $\h$ exists). Since $\h$ then
satisfies $\GS(1)$, using the third item in~\cite[Proposition B.1
(elec. appendix)]{SaSc13}, we deduce that there exists a non-empty
Zariski open set $\mathscr{A}_\h$ of $\C^{n\times n}$ such that for
$\mA\in \mathscr{A}_\h$, the zero-set of ${\frak S}(\h, \mA)$ has
dimension $-1$. On the other hand, by assumption on $\h$, for a
generic $\mA$, the value $\Lambda(\h,\mA)$ is not zero; in that case,
the zero-set of ${\frak S}(\h, \mA)$ has dimension $r$. Thus, our
claim $r=-1$ is proved.

Repeating the specializing argument, but with respect to the variables
$\alpha_{k,\ell}$, we choose $\mA\in \mathscr{A}$ such that
$\Lambda_\mA=\Lambda((\gamma_{i,j})_{1\leq i \leq p, 1\leq j \leq
  {\cal N}},\mA)$
is non zero. Letting $\mathscr{O}_\mA\subset \C^{{\cal N}p}$ be the
complement of $V(\Lambda_\mA)$, we deduce that for
$\h\in \mathscr{O}_\mA$, the system ${\frak S}(\h, \mA)$ is
inconsistent, which means that the polynomials $\h^\mA$ satisfy
$\GS(2)$. The transformation
$\varphi: \h\in \C[X_1, \ldots, X_n]^p_\mydegh\mapsto \h^\mA=\h(\mA
\X)\in \C[X_1, \ldots, X_n]^p_\mydegh$
is linear and invertible. The image
$\mathscr{O}_{2}=\varphi(\mathscr{O}_\mA)$ is thus still Zariski open
and satisfies our requirements.

\paragraph*{Genericity of $\GS(3)$.}
We next prove that for a generic choice of $\h$, the polar variety
$W(\pi_1,V(\h))$ is finite. The proof is similar to the one above,
with a few modifications.  This time, we define
$\Delta'({\frak H}, {\frak A})$ to be the vector of $p$-minors of
$\jac({\frak H}^{\frak A}, 1)$, and let
${\frak S}'({\frak H}, {\frak A})\subset \F[X_1, \ldots, X_n]$ be
system of the polynomials
$({\frak H}^{\frak A}, \Delta'({\frak H}, {\frak A})).$ The
polynomials ${\frak S}'(\h, {\frak A})$ and ${\frak S}'(\h, \mA)$ are
defined as above.

Then, we proceed as before, noticing that there exists a non-zero
polynomial $\Lambda'$ in \sloppy
$\C[(\gamma_{i,j})_{1\leq i \leq p, 1\leq j \leq {\cal
    N}},(\alpha_{k,\ell})_{ 1\leq k,\ell\leq n}]$
such that for any $\h,\mA$ that do not cancel $\Lambda'$, the
zero-sets of the systems ${\frak S}'({\frak H}, {\frak A})$ and
${\frak S}'(\h, \mA)$ have the same dimension $r'$, the former being
over an algebraic closure of $\F$.  Fix an $\h$ such that
$\Lambda'(\h,{\frak A})$ is not zero and that satisfies
$\GS(1)$. \cite[Proposition 3.7]{SaSc13} shows that for
$\mA$ in a suitable Zariski open subset of $\C^{n\times n}$,
$W(\pi_1,V(\h^\mA))$ is finite, or equivalently ${\frak S}'(\h, \mA)$
is finite. As for the previous property, this now implies that $r'$ is
either $0$ or $-1$.

In particular, there exists $\mA$ such that ${\frak S}'({\frak H},
\mA)$ has dimension $r'$ as well; thus, this $\mA$ being fixed, we
deduce that there exists an open set $\mathscr{O}'_\mA$ of $\C^{{\cal N}p}$
such that for $\h$ in $\mathscr{O}'_\mA$, $W(\pi_1,V(\h^\mA))$ is finite.
The conclusion follows as in the previous paragraph, by defining
$\mathscr{O}_3=\varphi(\mathscr{O}'_\mA)$.

\paragraph*{Genericity of $\GS(4)$.}
We first prove that for $\h=(h_1, \ldots, h_p)\in \mathscr{O}_1$, the
first claim in $\GS(4)$ holds. Let $m$ be a $(p-1)$-minor of
$\jac(\h, 1)$; without loss of generality, we assume that this minor
is the upper left minor.

Take $\bx$ that cancels all of $\h, {\sf Minors}(\h, m)$, and such
that $m(\bx)\neq 0$; we prove that $\bx$ belongs to $W(\pi_1, V(\h))$.
Indeed, by elementary linear algebra (using Cramer's rule), we deduce that
there exists a non-zero row vector $[\lambda_1, \ldots, \lambda_p]$
such that
\[
h_1(\bx)=\cdots=h_p(\bx)=0, \qquad 
[\lambda_1,\ldots, \lambda_p]\cdot\jac(\h, 1)=[0, \ldots, 0]. 
\]
We deduce that $\jac(\h, 1)$ is rank deficient at $\bx$, and as
pointed out in the statement of $\GS(1)$ given above, this implies
that $\bx$ belongs to $W(\pi_1, V(\h))$.
For the reverse inclusion, take now $\bx\in W(\pi_1, V(\h))\cap {\cal
  O}(m)$. This implies that $\jac(\h, 1)$ is rank deficient at $\bx$,
so that all minors in ${\sf Minors}(\h, m)$ vanish at $\bx$.  Hence,
we proved that in the open set defined by $m\ne 0$, $W(\pi_1, V(\h))$
is the zero-set of $\h, {\sf Minors}(\h, m)$.

Finally, we have to prove that for a generic choice of $\h$, the
Jacobian matrix of the polynomials $\h, {\sf Minors}(\h, m)$ has full
rank $n$ at every point in $W(\pi_1, V(\h))$ where $m$ does not
vanish. The proof is again modeled on the pattern of our proof of
$\GS(2)$.

Consider the polynomials ${\frak S}''({\frak H}, {\frak A})$,
consisting of ${\frak H}^{\frak A}, {\sf Minors}({\frak H}^{\frak A},
m_{\frak A})$, where $m_{\frak A}$ denotes the top-left $(p-1)$-minor
of $\jac({\frak H}^{\frak A},1)$, together with their Jacobian
determinant $C_{\frak A}$ and the polynomials $m_{\frak A} T - 1$,
where $T$ is a new variable.
We first prove that this system has no solution, over an algebraic
closure of~$\F$. 

As we did before, we notice that there exists a non-zero
polynomial~$\Lambda''$ in $\C[(\gamma_{i,j})_{1\leq i \leq p, 1\leq j
    \leq {\cal N}},(\alpha_{k,\ell})_{ 1\leq k,\ell\leq n}]$ such that for
any $\h,\mA$ that do not cancel $\Lambda''$, the zero-sets of the
systems ${\frak S}''({\frak H}, {\frak A})$ and ${\frak S}''(\h, \mA)$
have the same dimension $r''$. Again, we choose $\h$ in
$\mathscr{O}_1$ and such that $\Lambda''(\h,{\frak A})$ is not zero.

For such an $\h$, because $V(\h)$ is smooth, the third and fourth item
of~\cite[Proposition B.1]{SaSc13} prove that for a generic choice of
$\mA$, the Jacobian matrix of $\h^\mA, {\sf Minors}(\h^\mA, m_\mA)$
has full rank $n$ at every point of
$W(\pi_1,V(\h^\mA))\cap \mathcal{O}(m_\mA)$; as a result, for such an
$\mA$, ${\frak G}''(\h, \mA)$ defines the empty set. As before, this
implies that ${\frak G}''({\frak H}, {\frak A})$ defines the empty set
as well. This in turn implies that for a generic choice of $\mA$, the
system ${\frak S}''({\frak H}, \mA)$ defines the empty set. Fixing
such an $\mA$, we deduce that for a generic choice of $\h$,
${\frak S}''(\h, \mA)$ defines the empty set as well; in other words,
$\h^\mA$ satisfies $\GS(4)$. Undoing the change of variables as we did
before proves the last point in $\GS(4)$.


\subsection{A Lagrangian reformulation}

Suppose in all that follows that $\h$ satisfies $\GS$ and let
$V=V(\h)$.  We now show that under assumption $\GS$, we can derive a
Lagrangian formulation for $W(\pi_1,V)$ that still satisfies
regularity properties. In particular, by $\GS(3)$, $W(\pi_1,V)$ is
finite. Also, by $\GS(1)$, $V$ is smooth, $(n-p)$-equidimensional and
$\h$ generates its vanishing ideal. As previously noticed, this
implies that $W(\pi_1, V)$ is defined by 
\[
h_1=\cdots=h_p=0, \qquad \rank(\jac_\bx(\h, 1))\leq p-1. 
\]
For any $\bx$ in this set, by $\GS(2)$, there exists a non-zero vector
$\bell_\bx = [\ell_{\bx,1},\dots,\ell_{\bx,p}]$ in the left nullspace
of $\jac(\h,1)$, and this vector is unique up to a multiplicative
constant.

\begin{proposition}\label{prop:A}
  Suppose that $\bu=(u_1,\dots,u_p) \in \C^n$ is such that
  $ u_1 \ell_{\bx,1} + \cdots + u_p \ell_{\bx,p} \ne 0$ for all $\bx$
  in $W(\pi_1,V)$. Then the sequence of polynomials in variables
  $X_1,\dots,X_n,L_1,\dots,L_p$
  $${\cal W}_\bu = \big ( \h, \quad [L_1 ~\cdots~L_p] \cdot \jac(\h,1), \quad 
  u_1 L_1 + \cdots + u_p L_p - 1 \big )$$
  is such that 
  \[
  Z({\cal W}_\bu) = 
  \{
  (\bx, \bell_\bx)\in \C^{n+p} \mid \bx \in W(\pi_1,V), (\bx, \bell_\bx)\in V({\cal W}_\u)
  \}.
  \]
\end{proposition}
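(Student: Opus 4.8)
The first thing to notice is that the right-hand side of the asserted identity is \emph{equal} to $V({\cal W}_\bu)$. Indeed, if $(\bx,\bell)\in V({\cal W}_\bu)$, then $\h(\bx)=0$ so $\bx\in V$; the equation $u_1L_1+\cdots+u_pL_p=1$ forces $\bell\neq\mzero$, and $[L_1~\cdots~L_p]\cdot\jac(\h,1)=\mzero$ then shows $\rank(\jac_\bx(\h,1))\le p-1$. By $\GS(1)$, $\bx$ is a smooth point of $V$ and, by the characterization of $W(\pi_1,V)$ recalled there, $\bx\in W(\pi_1,V)$; by $\GS(2)$ the left nullspace of $\jac_\bx(\h,1)$ is one-dimensional, so $\bell$ is the unique scalar multiple of $\bell_\bx$ satisfying $\bu\cdot\bell=1$, a normalization that exists precisely because of the hypothesis on $\bu$. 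Conversely such a point clearly lies in $V({\cal W}_\bu)$. Since $Z({\cal W}_\bu)\subseteq V({\cal W}_\bu)$ always holds by definition of $Z$, the proposition reduces to showing that \emph{every} point $(\bx,\bell_\bx)$ of $V({\cal W}_\bu)$ is a nonsingular zero, i.e.\ that the Jacobian of the $n+p$ polynomials ${\cal W}_\bu$ with respect to $(X_1,\dots,X_n,L_1,\dots,L_p)$ is invertible there.

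Fix such a point. By $\GS(2)$ some $(p-1)$-minor $m$ of $\jac(\h,1)$ is nonzero at $\bx$; after simultaneously permuting $X_2,\dots,X_n$ and the triples $(h_i,L_i,u_i)$ (which changes neither $\pi_1$, nor the shape of ${\cal W}_\bu$, nor the hypothesis on $\bu$) we may assume $m$ is the minor on rows $1,\dots,p-1$ and columns $X_2,\dots,X_p$. Order the equations of ${\cal W}_\bu$ as $\h$, then $\sigma_j:=\sum_i L_i\,\partial h_i/\partial X_j$ for $j=2,\dots,p$, then $\sigma_k$ for $k=p+1,\dots,n$, then $u_1L_1+\cdots+u_pL_p-1$, and order the variables with $\bL$ first. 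Then the $\bL$-columns vanish on the $\h$-rows and on the $\sigma_k$-rows with $k>p$ they equal the row $(\partial h_i/\partial X_k)_i$, while the $\bL$-columns of the $\sigma_j$-rows ($2\le j\le p$) together with the last row assemble into a $p\times p$ matrix $M$. Invertibility of $M$ at $\bx$ is equivalent to $m(\bx)\neq0$ together with $\bu\cdot\bell_\bx\neq0$ (using that $m(\bx)\neq0$ and $\GS(2)$ force the left nullspace of the relevant $p\times(p-1)$ submatrix to be spanned by $\bell_\bx$), hence holds. Performing block Gaussian elimination with the pivot $M$, the determinant of the Jacobian of ${\cal W}_\bu$ equals, up to sign and the nonzero factor $\det M$, the determinant of the $n\times n$ matrix $J$ obtained by stacking $\jac(\h)$ on top of the Schur complement $\widetilde J$ coming from the $\sigma_k$-rows with $p+1\le k\le n$.

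It remains to identify $\widetilde J$ with the Jacobian of ${\sf Minors}(\h,m)$ at $\bx$, up to invertible row scalings. Conceptually, on the open locus $\{m\neq0,\ \bu\cdot\bell^{(0)}\neq0\}$ (where $\bell^{(0)}$ is the Cramer cofactor vector of the columns $X_2,\dots,X_p$ of $\jac(\h,1)$) the equations $\sigma_2=\cdots=\sigma_p=0$ and the normalization determine $\bL$ as a vector $\phi(\X)$ of rational functions with $\phi(\bx)=\bell_\bx$; substituting this into $\sigma_k$ produces $\mu_k/(\bu\cdot\bell^{(0)})$, where $\mu_k$ is, up to sign, exactly the $p$-minor forming ${\sf Minors}(\h,m)$, and $\widetilde J$ is precisely the Jacobian of $(\sigma_k\circ(\mathrm{id},\phi))_k$ at $\bx$. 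Differentiating, the Hessian-type terms $\sum_i(\bell_\bx)_i\,\nabla^2 h_i$ that appear on both sides cancel, and since $\mu_k(\bx)=0$ and $\sigma_k(\bx,\bell_\bx)=0$ for all these $k$, the quotient rule collapses to $\jac_\X({\sf Minors}(\h,m))|_\bx=(\bu\cdot\bell^{(0)}(\bx))\,\widetilde J$ with $\bu\cdot\bell^{(0)}(\bx)\neq0$. Hence $\det J$ is a nonzero multiple of $\det\!\big[\jac(\h)\,;\,\jac({\sf Minors}(\h,m))\big]|_\bx$, which is nonzero by $\GS(4)$. Therefore the Jacobian of ${\cal W}_\bu$ is invertible at $(\bx,\bell_\bx)$, and this together with the first paragraph gives the claimed equality. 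The main obstacle is precisely this last identification: correctly tracking the Cramer formula for $\bell^{(0)}$, the second-order terms on both sides, and the vanishing of $\mu_k$ and $\sigma_k$ at the point, rather than any single conceptual difficulty.
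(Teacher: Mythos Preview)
Your proof is correct, and the overall strategy matches the paper's: first identify the right-hand side with $V({\cal W}_\bu)$, then show the Jacobian of ${\cal W}_\bu$ has full rank at every such point by eliminating the Lagrange multipliers and reducing to $\GS(4)$.

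The execution differs. The paper invokes an ideal equality in the localization $\Q[\X,\bL]_{m_\bx}$ (citing Proposition~5.3 of~\cite{SaSc13}) to rewrite the generators of $\langle {\cal W}_\bu\rangle$ as $\h$, ${\sf Minors}(\h,m_\bx)$, together with $p$ equations that are affine in $\bL$ with invertible coefficient matrix; the Jacobian of this rewritten system is then visibly block-triangular, with the $\X$-block equal to the Jacobian of $(\h,{\sf Minors}(\h,m_\bx))$, and $\GS(4)$ concludes. You instead carry out the elimination directly via a Schur complement, interpreting it through the implicit function $\bL=\phi(\X)$ and the identity $\sigma_k\circ(\mathrm{id},\phi)=\mu_k/(\bu\cdot\bell^{(0)})$. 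Your route is self-contained and avoids the external citation, at the cost of the bookkeeping in the third paragraph; the paper's route hides that bookkeeping inside the cited lemma.

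One minor remark: the sentence about ``Hessian-type terms $\sum_i(\bell_\bx)_i\nabla^2 h_i$ that appear on both sides cancel'' is superfluous. Once you have the functional identity $\sigma_k\circ(\mathrm{id},\phi)=\mu_k/(\bu\cdot\bell^{(0)})$ on the open set $\{m\neq 0\}$, the Jacobians agree automatically; then the quotient rule together with $\mu_k(\bx)=0$ immediately gives $\widetilde J=(\bu\cdot\bell^{(0)}(\bx))^{-1}\jac_\X({\sf Minors}(\h,m))|_\bx$, and there is no need to track second-order terms explicitly.
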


\begin{proof}
  First, take $(\bx,\bell)$ in $V({\cal W}_\bu)$. The fact that $\bx$
  and $\bell$ cancel both $\h$ and
  $[L_1 ~\cdots~L_p] \cdot \jac(\h,1)$ implies that $\bx$ is in
  $W(\pi_1,V)$ and that $\bell = \lambda \bell_\bx$ for some non-zero
  constant $\lambda$. The fact that
  $u_1 \ell_1 +\cdots + u_p \ell_p=1$ implies that
  $(u_1 \ell_{\bx,1} + \cdots + u_p \ell_{\bx,p})\lambda = 1$. Thus,
  we have proved that $V({\cal W}_\bu)$ is contained in the right-hand
  side.

  Conversely, consider a point $(\bx, 1/(u_1 \ell_{\bx,1} + \cdots +
  u_p \ell_{\bx,p}) \ell_\bx)$, for some $\bx$ in $W(\pi_1,V)$; one
  easily sees that it satisfies the defining equations of the zero-set
  ${\cal V}_\bu$, so we have proved that
  $$V({\cal W}_\bu) = \left (\bx, \frac 1{u_1 \ell_{\bx,1} + \cdots +
      u_p \ell_{\bx,p}} \ell_\bx \right )_{\bx \in W(\pi_1,V)} \subset
  \C^{n+p}.$$
  We next prove that all solutions are simple. Take $\bx$ in
  $W(\pi_1,V)$, together with the corresponding $\bell$ such that
  $(\bx,\bell)$ is in ${\cal W}_\bu$.  By $\GS(2)$, there exists a
  $(p-1)$-minor $m_\bx$ of $\jac(\h,1)$ such that $m_\bx(\bx)$ is non-zero;
  let $\iota$ be the index of the missing row. Using Proposition~5.3
  of~\cite{SaSc13}, we deduce the existence of rational
  functions $(\rho_j)_{j =1,\dots,p, j \ne \iota}$ in $\Q[\X]$ such
  that we have equality between ideals
  $$\left \langle
  \h, \quad [L_1 ~\cdots~L_p] \cdot \jac(\h,1) \right \rangle = 
  \left \langle
  \h, \quad L_\iota {\sf Minors}(\h, m_\bx),\quad (L_j - \rho_j L_\iota)_{j =1,\dots,p, j \ne
    \iota}
  \right \rangle$$
  in the localization $\Q[\X,\bL]_{m_\bx}$. Add the equation $  u_1 L_1 + \cdots + u_p L_p - 1$ to
  both sides. On the left, we obtain the equations for
  ${\cal W}_\bu$. On the right, we obtain
  $$   \left \langle
  \h, \quad L_\iota {\sf Minors}(\h, m_\bx),\quad (L_j - \rho_j L_\iota)_{j =1,\dots,p, j \ne
    \iota},\quad  u_1 L_1 + \cdots + u_p L_p - 1
  \right \rangle,$$
  which is equal to
  $$   \left \langle
  \h, \quad L_\iota {\sf Minors}(\h, m_\bx),\quad (L_j - \rho_j L_\iota)_{j =1,\dots,p, j \ne
    \iota},\quad  (u_1 \rho_1 + \cdots + u_p \rho_p)L_\iota - 1
  \right \rangle,$$
  provided we write $\rho_\iota = 1$; this is in turn the same ideal
  as 
  $$ \left \langle \h, \quad {\sf Minors}(\h, m_\bx),\quad (L_j -
  \rho_j L_\iota)_{j =1,\dots,p, j \ne \iota},\quad (u_1 \rho_1 +
  \cdots + u_p \rho_p)L_\iota - 1 \right \rangle.$$ Since
  $(\bx,\bell)$ is in ${\cal W}_\bu$, and $m_\bx(\bx)$ is non-zero,
  $(\bx,\bell)$ must cancel all equations above. In particular,
  $(u_1 \rho_1 + \cdots + u_p \rho_p)(\bx)$ is non-zero.

  Now,  $\GS(4)$ states that the Jacobian matrix of $ ( \h, {\sf
    Minors}(\h, m_\bx))$ has full rank at $\bx$. Writing down that
  Jacobian of the system above in $\Q[\X,\bL]_{m_\bx}$, and using the
  fact that $(u_1 \rho_1 + \cdots + u_p \rho_p)(\bx)$ does not vanish,
  one sees that this larger Jacobian matrix has full rank $n+p$ at
  $(\bx,\bell)$. The equality between ideals seen above implies that
  it is also the case for the polynomials defining ${\cal W}_\bu$.
\end{proof}

The following lemma shows that one can find a suitable
$\bu$ with small bit-size. The proof is a direct application
of Lemma~\ref{lemma:avoid}.

\begin{proposition}\label{prop:U}
  Let $\delta$ be an upper bound on the cardinality of $W(\pi_1,V)$
  and consider the set of linear forms 
  $$u^{(i)} = L_1 + i L_2 + \cdots + i^{p-1} L_p,$$
  for $i$ in $\{1,\dots,8(p-1)\delta\}$. Then at least $7/8$ of these
  linear forms satisfy the assumptions of Proposition~\ref{prop:A}.
\end{proposition}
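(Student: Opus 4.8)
The plan is to mirror the argument of Lemma~\ref{lemma:avoid}, applied to the left-nullspace vectors $\bell_\bx$ attached to the points of $W(\pi_1,V)$. First I would recall what the hypothesis of Proposition~\ref{prop:A} requires: a tuple $\bu=(u_1,\dots,u_p)$ such that $u_1\ell_{\bx,1}+\cdots+u_p\ell_{\bx,p}\ne 0$ for every $\bx\in W(\pi_1,V)$. For the candidate $\bu=(1,i,i^2,\dots,i^{p-1})$ coming from $u^{(i)}$, this quantity is exactly $P_\bx(i)$, where
\[
P_\bx(T)=\ell_{\bx,1}+\ell_{\bx,2}T+\cdots+\ell_{\bx,p}T^{p-1}\in\C[T].
\]
By $\GS(2)$ (as noted just before the statement), for each $\bx\in W(\pi_1,V)$ the vector $\bell_\bx$ spanning the left nullspace of $\jac(\h,1)$ at $\bx$ is non-zero, hence $P_\bx$ is a non-zero polynomial of degree at most $p-1$.

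Next I would count the ``bad'' indices. Each $P_\bx$ has at most $p-1$ roots, hence at most $p-1$ roots lying in $\{1,\dots,8(p-1)\delta\}$. Since $W(\pi_1,V)$ is finite (assumption $\GS(3)$) with cardinality at most $\delta$ by hypothesis, the set of indices $i$ in $\{1,\dots,8(p-1)\delta\}$ for which $P_\bx(i)=0$ for some $\bx\in W(\pi_1,V)$ has cardinality at most $(p-1)\delta$, which is exactly $\tfrac18$ of $8(p-1)\delta$. Therefore at least $7/8$ of the indices $i\in\{1,\dots,8(p-1)\delta\}$ satisfy $P_\bx(i)\ne 0$ for all $\bx\in W(\pi_1,V)$, i.e. the associated $u^{(i)}$ meets the requirement of Proposition~\ref{prop:A}. (If $W(\pi_1,V)=\varnothing$ the statement is vacuous.)

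There is no genuine obstacle here; this is essentially the content of Lemma~\ref{lemma:avoid} with $N\leftarrow p$, $k\leftarrow|W(\pi_1,V)|$ and $\A=\C$ (of characteristic zero, so the characteristic hypothesis of that lemma is automatic). The two small points to be explicit about are: that $\bell_\bx\ne 0$, which is precisely what $\GS(2)$ guarantees and which is what makes $P_\bx$ a non-zero polynomial; and that the count of excluded indices over the range $\{1,\dots,8(p-1)\delta\}$ is still bounded by $(p-1)\delta$ even when the true cardinality of $W(\pi_1,V)$ is strictly smaller than $\delta$ — equivalently, the proof of Lemma~\ref{lemma:avoid} goes through verbatim over this possibly larger index range.
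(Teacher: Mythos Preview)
Your proof is correct and is exactly what the paper intends: it states that the proposition is a direct application of Lemma~\ref{lemma:avoid}, and you have unwound that application with $N\leftarrow p$, $k\leftarrow |W(\pi_1,V)|$, $\A=\C$, and the non-zero vectors taken to be the $\bell_\bx$. Your two explicit remarks (that $\bell_\bx\ne 0$ by $\GS(2)$, and that enlarging the index range from $8(p-1)|W(\pi_1,V)|$ to $8(p-1)\delta$ only helps) are precisely the small points one needs to check.
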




\subsection{Explicit bound for Lagrange systems: proof of Theorem~\ref{thm:main}}

We continue with the notation introduced at the begining of this
section and let $\myheight$ be an upper bound on the height of all
$h_i$, $i=1,\dots,p$.  Assume that $\h$ satisfies the genericity
assumptions $\GS$ defined previously. As in the
previous subsection, let ${\cal W}_\bu$ be the system
$$
\big ( \h, \quad [L_1 ~\cdots~L_p] \cdot \jac(\h,1), \quad 
  u_1 L_1 + \cdots + u_p L_p - 1 \big )
$$ with $\u$ chosen as in Proposition \ref{prop:U}; we write $\g=(g_1,
  \ldots, g_{n-1})$ for the polynomials $[L_1 ~\cdots~L_p] \cdot
  \jac(\h,1)$ and $\ell= u_1 L_1 + \cdots + u_p L_p - 1$.

The proof of Theorem~\ref{thm:main} simply consists in applying
Theorem~\ref{thm:homoZ} to ${\cal W}_\bu$. Let us review
the quantities that appear in that proposition, and adapt 
them to our present context.

\smallskip

\begin{itemize}
\item We have here $m=2$ and $\bn=(n,p)$.

\smallskip

\item The multi-degrees of the input polynomials in  ${\cal W}_\bu$ are bounded by
  the multi-degree vector
  $\d=(\dunder_1, \ldots, \dunder_1, \dunder_2, \ldots, \dunder_2, \dunder_3)$, with $\dunder_1=(\mydegh,0)$
  appearing $p$ times, $\dunder_2=(\mydegh-1,1)$ appearing $n-1$ times and
  $\dunder_3=(0,1)$ appearing once. Expanding the product
  $$\bchi(\d)=(\mydegh \vartheta_1)^p ((\mydegh-1)\vartheta_1 + \vartheta_2)^{n-1}\vartheta_2 \bmod \langle \vartheta_1^{n+1},\vartheta_2^{p+1}\rangle,$$
  we deduce that $\scrC_\bn(\d)={{n-1}\choose{p-1}}\mydegh^p(\mydegh-1)^{n-p}$.
  Proposition~\ref{prop:degH} then implies that $Z({\cal W}_\bu)$ is a
  finite set of cardinality bounded by this quantity.
  In the particular case $\mydegh=2$, the expression above becomes
 $\scrC_\bn(\d)={{n-1}\choose{p-1}}2^p$.

\smallskip

\item The polynomials $\h$, $\g$ and $\ell$ have heights bounded by
  respectively $\myheight$, $\myheight+\log(n)+\log(\mydegh)$ and
  $p\log(8p \scrC_\bn(\d))$. Using the notation introduced in
 Section~\ref{ssec:heightbds}, we now define
  \begin{align*}
  \eta_1&=\myheight+\mydegh\log(n+1),\\
  \eta_2&=\myheight+\log(n)+\log(\mydegh)+(\mydegh-1)\log(n+1)+\log(p+1),\\ 
  \eta_3&=p\log(8p\scrC_\bn(\d))+\log(p+1).
  \end{align*}
  We can then let
  $\bbeta=(\mu_1, \ldots, \mu_1, \mu_2, \ldots, \mu_2, \mu_3)$,
  with $\mu_1$ appearing $p$ times and $\mu_2$ appearing $n-1$ times.  The
  corresponding arithmetic Chow ring is
  $\R[\myxi,\vartheta_1,\vartheta_2]/\langle
  \myxi^2,\vartheta_1^{n+1},\vartheta_2^{p+1}\rangle$, and we have
$$
\bchi'( \bbeta,\d ) = (\mu_1\myxi+\mydegh\vartheta_1)^{p}
(\mu_2\myxi+(\mydegh-1)\vartheta_1+\vartheta_2)^{n-1}(\mu_3\myxi+\vartheta_2)\bmod
\langle \myxi^2, \vartheta_1^{n+1},\vartheta_2^{p+1} \rangle.
$$
We deduce that
\begin{eqnarray*}
\scrH_\bn(\bbeta,\d) & = & 
     \mu_1 \mydegh^{p-1}(\mydegh-1)^{n-p}
                        \left ({{n-1}\choose{p-1}}+(\mydegh-1){{n-1}\choose{p-2}}\right ) + \\
& &     \mu_2 \mydegh^{p} (\mydegh-1)^{n-p-1}\left ({{n-2}\choose{p-1}}+(\mydegh-1){{n-2}\choose{p-2}}\right )  + \\
 & & \mu_3 \mydegh^{p}(\mydegh-1)^{n-p-1}\left ({{n-1}\choose{p}}+(\mydegh-1){{n-1}\choose{p-1}}\right ) + \\
 & & \mydegh^p(\mydegh-1)^{n-p}{{n-1}\choose{p-1}}.
\end{eqnarray*}
Letting $B_1={{n-1}\choose{p-1}}+{{n-1}\choose{p-2}}={{n}\choose{p-1}}$,
$B_2= {{n-2}\choose{p-1}}+{{n-2}\choose{p-2}}={{n-1}\choose{p-1}}$ and
$B_3={{n-1}\choose{p}}+{{n-1}\choose{p-1}}={{n}\choose{p}}$, we deduce
that
\begin{eqnarray*}
\scrH_\bn(\bbeta,\d) & \leq &  
     \mydegh^{p}(\mydegh-1)^{n-p}\left (
                        \mu_1 B_1+(\mu_2+1)B_2+\mu_3B_3  \right ).
\end{eqnarray*}
Observing that $B_1+B_2+B_3\leq (n+2)B_3$, we obtain the upper bound
\begin{eqnarray*}
\scrH_\bn(\bbeta,\d) & \leq &  \mydegh^{p}(\mydegh-1)^{n-p} \max(\mu_1,\mu_2+1,\mu_3) (n+2) B_3.
\end{eqnarray*}
 This implies that
\[
\scrH_\bn(\bbeta,\d)\in O\tilde{~}\left ( n{{n}\choose{p}}(\myheight+\mydegh)\mydegh^p(\mydegh-1)^{n-p}\right ).
\] 
In the particular case $\mydegh=2$, we obtain
\[
\scrH_\bn(\bbeta,\d)\in O\tilde{~}\left ( n{{n}\choose{p}} \myheight 2^p\right ).
\] 
\item For a general value of $\mydegh$, we will assume that $\h$ is given by
  a straight-line program of length $E$ with constants of height
  bounded by $\myheight'$.  Using Baur-Strassen's algorithm
  \cite{BaurStrassen}, one can deduce a straight-line program with
  constants of bit size in $O(\myheight')$ evaluating $\h$ and $\jac(\h)$
  in time $O(pE)$.  Hence, one can deduce a straight-line program with
  constants of bit size in $O(\myheight')$ evaluating $\h$ and $\g$ in time
  $O(pE+pn)$.
  
  Altogether, the system ${\cal W}_\bu$ can be evaluated by
  straight-line program $\Gamma$ of length $L\in O(pE+pn)$
  with constants of height at most $b=\max(\myheight',
  p\log(8p\scrC_\bn(\d)))$.

  When $\mydegh=2$, we use the obvious construction to construct the
  straight-line program for $\h$ (simply expanding all polynomials on
  the monomial basis), with in this case $E\in O(pn^2)$ and
  $\myheight'=\myheight$.
\end{itemize}

\smallskip

Proposition~\ref{prop:U} ensures that $\u$ is well-chosen with
probability at least $7/8$. Using the fact that the total number of
variables $N$ is at most $2n$, Theorem~\ref{thm:homoZ} shows that
on input $\Gamma$, $\d$ and $\bbeta$, Algorithm {\sf
  NonSingularSolutionsOverZ} runs within
$$ O\tilde{~}\left (p (E+n) \myheight' + \scrC_\bn(\d)\scrH_{\bn}(\bbeta,\d) \left (pE+
    n\mydegh +n^2 \right) n^2\right)$$
boolean operations (the expression given in that proposition also
involves a term of the form
$\log(\max(\mu_1,\mu_2,\mu_3))$, but it is
polylogarithmic in terms of $\scrH_{\bn}(\bbeta,\d)$). It returns the
correct output with probability at least $21/32$, so the overall
probability of success is at least $147/256$, as claimed.  Using the
equalities and inequalities
$$\scrC_\bn(\d) = {{n-1} \choose {p-1}}\mydegh^p (\mydegh-1)^{n-p},\quad
\scrH_\bn(\bbeta,\d)\in O\tilde{~}\left ( n{n \choose
    p}(\myheight+\mydegh)\mydegh^p(\mydegh-1)^{n-p} \right ),$$
the bound on the running time becomes
$$
O\tilde{~}\left ( p (E+n) \myheight' + n^{3}{{n-1} \choose {p-1}}{n
    \choose p}
  (\myheight+\mydegh)\mydegh^{2p}(\mydegh-1)^{2(n-p)}(pE+n\mydegh+n^2)
\right ).$$
In the special case $\mydegh=2$, with $E\in O(pn^2)$ and
$\myheight'=\myheight$, this is
$$ O\tilde{~}\left (n^{5}{{n-1} \choose {p-1}}{n \choose p} 2^{2p} \myheight \right ).$$
The height bound on the coefficients in the output follows immediately
from Theorem~\ref{thm:homoZ} and the bounds on $\scrC_\bn(\d)$ and 
$\scrH_\bn(\bbeta,\d)$.


\section{Proof of Proposition~\ref{prop:hgt}}\label{proof:prop:hgt}

We conclude with the proof of Proposition~\ref{prop:hgt}, which reads 
as follows: {\em
  Let $\f=(f_1,\dots,f_N)$ be polynomials in $\Z[\X_1,\dots,\X_m]$,
  with $\mdeg(\f) \le \d=(\dunder_1,\dots,\dunder_N)$ and
  $\dunder_i=(d_{i,1},\dots,d_{i,m})$ for all $i$, and $\hgt(\f) \le
  \bs=(s_1,\dots,s_N)$; let also $\lambda$ be a separating linear form
  for $Z(\f)$ with integer coefficients of height at most $b$. Then
  all polynomials in the zero-dimensional parametrization of $Z(\f)$
  associated to $\lambda$ have height at most $\scrH_\bn(\bbeta,\d) +
  (b + 4\log(N + 2))\scrC_\bn(\d),$ with}
$$\bbeta=\left (s_i + \sum_{j=1}^m \log(n_j+1) d_{i,j}\right)_{1 \le i \le N}.$$

\paragraph*{The Chow forms.} As a preliminary, we recall the 
definition of the Chow form of an algebraic set. Let $V \subset \Qbar{}^N$
be a zero-dimensional algebraic set. We call {\em Chow form} of $V$
any polynomial of the form
$$C_{V,a} = a \prod_{\bx=(x_1,\dots,x_N) \in V}(T_0 - x_1 T_1 - \cdots
- x_N T_N),$$ for some nonzero $a$ in $\Qbar$. If $V$ is defined over
$\Q$, then for $a$ in $\Q$, $C_{V,a}$ is in
$\Q[T_0,\dots,T_N]$. Clearing denominators and removing contents, we
see that only two of them are primitive polynomials in
$\Z[T_0,\dots,T_N]$ (they differ by a sign): we call them the {\em
  primitive} Chow forms of $V$.

\paragraph*{The arithmetic Chow ring.}
The proof of Proposition \ref{prop:hgt} will rely on objects
introduced by, and results due to, D'Andrea, Krick and
Sombra~\cite{DaKrSo13}. We give here a quick overview of the main
features of their construction.

Introducing new variables $X_{1,0},\dots,X_{m,0}$ as homogenization
variables, we will use $\X'=(\X'_1,\dots,\X'_m)$, with
$\X'_j=(X_{j,0},\dots,X_{j,n_j})$ for all $j$, to describe
multi-homogeneous polynomials.  To any $r$-equidimensional algebraic
set $V \subset \P^\bn$ defined over $\Q$, we associate its class
$[V]_\Z \in A^*(\P^\bn,\Z)$, which takes the form of an homogeneous 
expression of degree $N-r$:
\begin{gather*}
[V]_\Z = \sum_{\bc\, \in \N^m,\ |\bc|=r+1,\, \bc \le \bn}
\widehat{h_\bc}(V)\, \zeta\, \vartheta_1^{n_1-c_1} \cdots
\vartheta_m^{n_m-c_m} + \\\sum_{\bc\, \in \N^m,\ |\bc|=r,\, \bc \le \bn}
\deg_\bc(V) \vartheta_1^{n_1-c_1} \cdots \vartheta_m^{n_m-c_m},  
\end{gather*}
where $\widehat{h_{\bc}}(V)$ and $\deg_{\bc}(V)$ are families of
non-negative real numbers.  For $\bc=(c_1,\dots,c_m)$, the degree
$\deg_{\bc}(V)$ is defined as the generic number of intersection points
between $V$ and $c_1$ linear forms in $\X'_1$, \dots, $c_m$ linear
forms in $\X'_m$. 
The height component $\widehat{h_\bc}(V)$ is harder to define,
and we refer to~\cite{DaKrSo13} for a precise statement (the properties 
given below will be sufficient for our purposes).  
When $V$ has dimension zero, using a slight re-indexing of
the height components, we can write
$$[V]_\Z = \sum_{1 \le i \le m}  \widehat{h_i}(V)\, \zeta\, \vartheta_1^{n_1} \cdots \vartheta_i^{n_i-1} \cdots \vartheta_m^{n_m}
+ 
\deg(V) \vartheta_1^{n_1} \cdots \vartheta_m^{n_m},
$$ where $\widehat{h_i}(V)$ is defined as $\widehat{h_{\bc_i}}(V)$,
with $\bc_i$ the $i$th unit vector, and where $\deg(V)$ is
simply its cardinality.

We now list a few properties which will be central for our purposes.
\begin{itemize}
\item [$\sf A_1.$] For any $V$ as above, $\widehat{h_\bc}(V) \ge 0$ holds
  for all $\bc$~\cite[Proposition~2.51.2]{DaKrSo13}. In other words,
  we have $[V]_\Z \ge 0$, where here, and in all that follows,
  inequalities between elements of arithmetic Chow rings are to be
  understood coefficientwise.
\item [$\sf A_2.$] If $V$ and $V'$ are both $r$-equidimensional and without
  irreducible components in common, $[V \cup V']_\Z=[V]_\Z + [V']_\Z$
  (this is clear for the degree and follows
  from~\cite[Definition~2.40]{DaKrSo13} for the height). We could remove
  the assumption above, but this would require us to talk about cycles,
  for which we will have no use below.
\item [$\sf A_3.$] If $V$ is a hypersurface given as $V=V(f)$, with $f \in
  \Z[\X'_1,\dots,\X'_m]$ multi-homogeneous, squarefree and primitive,
  we have from~\cite[Proposition~2.53]{DaKrSo13}
  $$[V]_\Z = m(f) \zeta + \deg_{\X'_1}(f) \vartheta_1 + \cdots + \deg_{\X'_m}(f) \vartheta_m,$$ 
  where $m(f) = \int_{S_1^{N+m}} \log(|f|) d\mu^{N+m}$ is the {\em Mahler measure}
    of $f$ with respect to the Haar measure $\mu$ of mass 1 on the complex 
    unit circle $S_1$.
  \item [$\sf A_4.$] If $V$ is an $r$-equidimensional algebraic subset  of $\P^\bn$
    defined over $\Q$   and $f$ is multi-homogeneous in $\Z[\X'_1,\dots,\X'_m]$, we have
  from~\cite[Corollary~2.61]{DaKrSo13}
$$[W]_\Z \le [V]_\Z \cdot [f]_{\rm sup},$$
  where $W$ is the $(r-1)$-dimensional part of $V \cap V(f)$,
$|f|_{\rm sup} = {\rm sup}_{\bx \in S_1^{N+m}} |f(\bx)|$
 and
$$[f]_{\rm sup} = \log(|f|_{\rm sup}) \zeta + \deg_{\X'_1}(f) \vartheta_1 +\cdots+
\deg_{\X'_m}(f)\vartheta_m.$$
\end{itemize}

\paragraph*{Using the B\'ezout inequality.} Let $\f^h=(f_1^h,\dots,f_N^h)$
be the polynomials in $\Z[\X'_1,\dots,\X'_m]$ obtained by
multi-homogenizing the input $f_1,\dots,f_N$ with respect to all
groups of variables $\X_1,\dots,\X_m$, let $S\subset \P^\bn$ be the
zero-dimensional component of $V(\f^h)$, and let
$\d=(\dunder_1,\dots,\dunder_N)$ and $\s=(\myheight_1,\dots,\myheight_N)$ be
upper bounds on respectively $\mdeg(\f)$ and $\hgt(\f)$; as in the
proposition, we define
$$\bbeta=(\eta_1,\dots,\eta_N)=\left (s_i + \sum_{j=1}^m \log(n_j+1) d_{i,j}\right)_{1 \le i \le N}.$$

By~\cite[Proposition~2.51.3]{DaKrSo13}, $[\P^\bn]_\Z=1$. Applying
${\sf A}_4$ repeatedly, we obtain that
$$[S]_\Z \le [f_1^h]_{\rm sup} \cdots [f_N^h]_{\rm sup}.$$
By~\cite[Lemma~2.32]{DaKrSo13}, for all $i$, we have the inequality
$$[f_i]_{\rm sup} \le \eta_i \zeta + d_{i,1} \vartheta_1 + \cdots +
d_{i,m} \vartheta_m,$$ or equivalently $[f_i]_{\rm sup} \le
\chi'(\eta_i,\dunder_i)$. This implies that 
\begin{equation}\label{eq:S}
[S]_\Z \le  \chi'(\eta_1,\dunder_1) \cdots \chi'(\eta_N,\dunder_N) = \bchi'(\bbeta,\d).  
\end{equation}

\paragraph*{From multi-projective to affine.}
Let now $S'\subset \P^{\bn}$ be the subset of $S$ consisting of all those points
$\bx'=(\bx'_1,\dots,\bx'_m)$ in $S$, with $\bx'_i$ in
$\P^{n_i}(\Qbar)$ for all $i$, such that

\smallskip

\begin{itemize}
\item $\bx'_i$ does not belong to the hyperplane at
  infinity in $\P^{n_i}(\Qbar)$;
\smallskip
\item the multi-homogeneous polynomial $J^h$ obtained by
  multi-homogenizing the Jacobian determinant $D=\det(jac(\f))$ with
  respect to all groups of variables $\X_1,\dots,\X_m$ does not vanish
  at $\bx'$.
\end{itemize}
\smallskip \noindent
Because we obtain $S'$ by removing algebraic subsets from $S$, and
these subsets are defined over $\Q$, $S'$ itself is defined over $\Q$.
Using ${\sf A}_1$ and ${\sf A}_2$, we deduce from~\eqref{eq:S} that
we have
\begin{equation}\label{eq:Sp}
[S']_\Z \le   \bchi'(\bbeta,\d).  
\end{equation}

Our goal is now to compute the Chow form of the related algebraic $Z(\f)$ in
$\Qbar{}^N$. For $(\bx'_1,\dots,\bx'_m)$ in $S'$, our definition shows
that each block-coordinate $\bx'_i$ can be written as
$\bx'_i=(1,x_{i,1},\dots,x_{i,n_i})$. We use this notation in the
lemma below --- whose proof is a direct consequence of our
construction.

\begin{lemma}
  The following equality holds
  $$Z(\f)=\{(x_{1,1},\dots,x_{1,n_1},\dots,x_{m,1},\dots,x_{m,n_m})
  \ \mid \ \bx \in S'\} \subset \Qbar{}^N.$$ 
\end{lemma}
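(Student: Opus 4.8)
The plan is to establish the two set inclusions separately; the only tools needed are that multi-homogenization and dehomogenization are mutually inverse bijections on the distinguished affine chart $U=\{X_{1,0}\ne 0,\dots,X_{m,0}\ne 0\}$ of $\P^\bn$, together with the definition of $Z(\f)$ as the locus where $\f$ vanishes and where $D=\det(\jac(\f))$ does not. First I would record that the map in the statement is well defined: for $\bx'=(\bx'_1,\dots,\bx'_m)\in S'$, no block $\bx'_i$ lies on the hyperplane at infinity, so each admits a unique representative $\bx'_i=(1,x_{i,1},\dots,x_{i,n_i})$, and the associated affine point $\bx=(x_{1,1},\dots,x_{1,n_1},\dots,x_{m,1},\dots,x_{m,n_m})\in\Qbar^N$ is unambiguous.

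For the inclusion $\supseteq$, take $\bx'\in S'$ and let $\bx$ be the associated affine point. Since $f_i^h$ is the multi-homogenization of $f_i$ and all homogenizing coordinates of $\bx'$ are $1$, one has $f_i^h(\bx')=f_i(\bx)$ for each $i$, and likewise $J^h(\bx')=D(\bx)$; as $\bx'\in V(\f^h)$ this yields $\f(\bx)=0$, and the second defining condition of $S'$ (non-vanishing of $J^h$) yields $D(\bx)\ne 0$, i.e.\ the $N\times N$ matrix $\jac(\f)$ has rank $N$ at $\bx$. Hence $\bx\in Z(\f)$ by definition.

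For the inclusion $\subseteq$, take $\bx\in Z(\f)$ and let $\bx'\in\P^\bn$ be its multi-homogenization, with block-coordinates $\bx'_i=(1,x_{i,1},\dots,x_{i,n_i})$. The same evaluation identities give $f_i^h(\bx')=f_i(\bx)=0$, so $\bx'\in V(\f^h)$, while $X_{i,0}(\bx')=1\ne 0$ and $J^h(\bx')=D(\bx)\ne 0$ already verify the two conditions cutting out $S'$ inside $S$. The one point that needs a short argument — and the only real obstacle — is that $\bx'$ genuinely lies in the \emph{zero-dimensional} component $S$ of $V(\f^h)$, rather than on a positive-dimensional one. Here I would use that $D(\bx)\ne 0$ forces, by the Jacobian criterion, $\bx$ to be an isolated (smooth, dimension $0$) point of $V(\f)$; since the map $\Qbar^N\to U$ sending an affine point to its multi-homogenization is an isomorphism carrying $V(\f)$ onto $V(\f^h)\cap U$, the point $\bx'$ is isolated in $V(\f^h)\cap U$, hence in $V(\f^h)$ because $U$ is open in $\P^\bn$; thus $\bx'$ is a zero-dimensional irreducible component of $V(\f^h)$ and therefore belongs to $S$. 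So $\bx'\in S'$, and dehomogenizing returns $\bx$, which completes the proof. Every step other than this isolation transfer is a direct unwinding of the definitions of $\f^h$, $J^h$, $S'$ and $Z(\f)$.
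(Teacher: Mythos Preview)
Your argument is correct and is precisely the unwinding of definitions that the paper alludes to when it says the lemma ``is a direct consequence of our construction''; the only point the paper leaves implicit and that you make explicit is that $\bx'$ lands in the zero-dimensional component $S$ of $V(\f^h)$, which you justify cleanly via the Jacobian criterion and the openness of the affine chart $U$.
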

Letting $T_0,\dots,T_{N}$ be new variables, the Chow
forms of $Z(\f)$  are thus of the form 
\begin{equation}\label{eq:ChowZf}
C_{Z(\f),c} = c \, \prod_{\bx \in S'} (T_0 - x_{1,1} T_1 - \dots -
x_{m,n_m-1} T_{N -1} - x_{m,n_m-1} T_{N} ),  
\end{equation}
 for
some constant $c$. 

Let us next describe a classical geometric way to construct these Chow
forms starting from $S'$. We start by considering the product
$\mathscr{T} = S' \times \P^{N}(\Qbar)$, which is an algebraic subset
of $\P^{\bn} \times \P^{N}(\Qbar)$; we use $T_0,\dots,T_{N}$ as our
coordinates in $\P^{N}(\Qbar)$.  Next, define $\mathscr{T}'$ as the
intersection of $\mathscr{T}$ and $Z(K^h)$, where $K$ is given by
$$K=T_0 -( X_{1,1} T_1 + X_{1,2} T_2 + \cdots + X_{m,n_m-1} T_{N-1} + X_{m,n_m} T_{N}) $$ and $K^h$ is
obtained by multi-homogenizing $K$ with respect to the groups of variables
$\X_1,\dots,\X_m$, using respectively $X_{1,0},\dots,X_{m,0}$ ($K$ is
already homogeneous with respect to $T_0,\dots,T_{N}$). 

\begin{lemma}
  The intersection $\mathscr{T}'=\mathscr{T} \cap V(K^h)$ is proper.
\end{lemma}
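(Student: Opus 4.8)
The plan is to establish that $K^h$ vanishes identically on none of the irreducible components of $\mathscr{T}$; since $S'$ is a finite set, $\mathscr{T} = S'\times \P^{N}(\Qbar)$ is equidimensional of dimension $N$, and this non-vanishing is exactly the assertion that $\mathscr{T}'=\mathscr{T}\cap V(K^h)$ is equidimensional of dimension $N-1$, i.e.\ that the intersection with the hypersurface $V(K^h)$ is proper. If $S'$ is empty there is nothing to prove, so assume it is not. The irreducible components of $\mathscr{T}$ are then the fibers $\{\bx\}\times \P^{N}(\Qbar)$ over the finitely many points $\bx\in S'$, each of dimension $N$, so it suffices to check, for each $\bx\in S'$, that $K^h$ does not vanish on $\{\bx\}\times \P^{N}(\Qbar)$.

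First I would write $K^h$ explicitly. The polynomial $K = T_0 -(X_{1,1}T_1+\cdots+X_{m,n_m}T_{N})$ is homogeneous of degree $1$ in $T_0,\dots,T_{N}$ and has degree at most $1$ in each block $\X_j$; multi-homogenizing it with respect to $\X_1,\dots,\X_m$ using the variables $X_{1,0},\dots,X_{m,0}$ multiplies each monomial of $\X_j$-degree $0$ by $X_{j,0}$. Hence $K^h$ contains the monomial $T_0\,X_{1,0}\cdots X_{m,0}$, and each of its remaining monomials has the shape $X_{i,k}\,T_{\ell}\prod_{j\ne i}X_{j,0}$.

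The key point --- and this is where the definition of $S'$ enters in an essential way --- is that every $\bx=(\bx'_1,\dots,\bx'_m)\in S'$ lies in the product of affine charts $\{X_{j,0}\ne 0\}$: by construction each $\bx'_j$ avoids the hyperplane at infinity of $\P^{n_j}(\Qbar)$, so we may normalize $\bx'_j=(1,x_{j,1},\dots,x_{j,n_j})$. Substituting such a point into $K^h$ sets all the homogenizing factors equal to $1$ and yields $K^h(\bx,T_0,\dots,T_{N}) = T_0 - x_{1,1}T_1 - \cdots - x_{m,n_m}T_{N}$, a linear form in $T_0,\dots,T_{N}$ whose coefficient of $T_0$ equals $1$, hence a \emph{nonzero} linear form. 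Therefore $V(K^h)\cap\big(\{\bx\}\times \P^{N}(\Qbar)\big)$ is the hyperplane $H_{\bx}\subset\P^{N}(\Qbar)$ that this form defines, of dimension $N-1$. Taking the union over $\bx\in S'$, we get $\mathscr{T}' = \bigcup_{\bx\in S'}\{\bx\}\times H_{\bx}$, which is equidimensional of dimension $N-1=\dim\mathscr{T}-1$; since $V(K^h)$ is a hypersurface, this is precisely a proper intersection. No step is a genuine obstacle here: the only subtlety is the reduction to the affine charts, which is exactly what the construction of $S'$ guarantees.
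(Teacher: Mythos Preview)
Your proof is correct and follows essentially the same approach as the paper's: both reduce to a single point $\bx\in S'$, observe that the restriction of $K^h$ to $\{\bx\}\times\P^N(\Qbar)$ is a linear form in $T_0,\dots,T_N$, and use the fact that $\bx$ avoids the hyperplanes at infinity to conclude that the coefficient of $T_0$ in this form is nonzero. You are simply more explicit about the shape of $K^h$ and the dimension count, whereas the paper compresses this into two sentences.
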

\begin{proof}
  Since $S'$ is finite, it is sufficient to consider the case where
  $S'$ is a single point of the form $(\bx'_1,\dots,\bx'_m)$. In that
  case, the set $\mathscr{T}'$ is isomorphic to the zero-set of the
  linear form $K^h(\bx'_1,\dots,\bx'_m,T_0,\dots,T_{N})$ in
  $\P^{N}(\Qbar)$. Our construction of $S'$ implies that the
  coefficient of $T_0$ in this linear form is non-zero, so we are
  done.
 \end{proof}

Finally, call $\pi$ the projection on the last factor $\P^{N}(\Qbar)$,
and let us define $\mathscr{Y}$ as the image of $\mathscr{T}'$ by this projection.
\begin{lemma}
  The image of each $\Qbar$-irreducible component of $\mathscr{T}'$ by $\pi$ is
  a hypersurface and each squarefree polynomial in $\Q[T_0,\dots,T_N]$
  defining $\mathscr{Y}$ is a Chow form of $Z(\f)$.
\end{lemma}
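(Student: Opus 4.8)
The plan is to describe $\mathscr{T}'$ and $\mathscr{Y}$ explicitly, exploiting that $S'$ is a \emph{finite} set. Since $S'$ is finite, $\mathscr{T}'=\mathscr{T}\cap V(K^h)$ is the disjoint union, over $\bx'\in S'$, of the fibers $\{\bx'\}\times H_{\bx'}$, where $H_{\bx'}\subset\P^N(\Qbar)$ is the zero-set of the linear form $\mathbf{T}\mapsto K^h(\bx',\mathbf{T})$. The proof of the previous lemma shows that this linear form has nonzero coefficient in $T_0$; hence each $H_{\bx'}$ is a genuine hyperplane, each $\{\bx'\}\times H_{\bx'}$ is $\Qbar$-irreducible and isomorphic to $\P^{N-1}(\Qbar)$, and these are precisely the $\Qbar$-irreducible components of $\mathscr{T}'$ (they are pairwise disjoint, so none is contained in another). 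The restriction of $\pi$ to $\{\bx'\}\times H_{\bx'}$ has image $H_{\bx'}$, a hyperplane, in particular a hypersurface in $\P^N(\Qbar)$. This settles the first assertion.

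For the second assertion, I would first exhibit an explicit squarefree defining polynomial for $\mathscr{Y}$. Writing $\bx'_i=(1,x_{i,1},\dots,x_{i,n_i})$ as in the paragraph preceding the lemma, evaluation of $K^h$ at this representative of $\bx'$ yields the linear form $T_0-x_{1,1}T_1-\cdots-x_{m,n_m}T_N$, so $H_{\bx'}$ is its zero-set and
\[
\mathscr{Y}=\bigcup_{\bx'\in S'}H_{\bx'}=V(P),\qquad P=\prod_{\bx'\in S'}\bigl(T_0-x_{1,1}T_1-\cdots-x_{m,n_m}T_N\bigr).
\]
By the preceding lemma the affine coordinates of a point of $S'$ form a tuple in $Z(\f)$, every point of $Z(\f)$ arises this way, and since each $\bx'_i$ equals $(1,x_{i,1},\dots,x_{i,n_i})$ this correspondence is a bijection $S'\to Z(\f)$; hence, after renaming the $N$ coordinates, $P$ coincides with the Chow form $C_{Z(\f),1}$. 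Moreover $P\in\Q[T_0,\dots,T_N]$: indeed $Z(\f)$ is defined over $\Q$, hence stable under $\mathrm{Gal}(\Qbar/\Q)$, which therefore only permutes the linear factors of $P$ and fixes $P$. Finally, the points of $Z(\f)$ being pairwise distinct, the corresponding linear forms are pairwise non-proportional, so $P$ is squarefree.

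It then remains to show that every squarefree $Q\in\Q[T_0,\dots,T_N]$ with $V(Q)=\mathscr{Y}$ is a scalar multiple of $P$. In the UFD $\Q[T_0,\dots,T_N]$ a squarefree polynomial generates a radical ideal, so $(P)$ and $(Q)$ are both radical; defining the same variety $\mathscr{Y}$, they are both equal to $I(\mathscr{Y})$ by the Nullstellensatz, whence $Q=\alpha P$ for some $\alpha\in\Q^\star$. Therefore $Q=\alpha\,C_{Z(\f),1}=C_{Z(\f),\alpha}$ is a Chow form of $Z(\f)$, as claimed. I do not expect a serious obstacle here; the only points requiring care are the appeal to the previous lemma to guarantee that the fibers of $K^h$ over $S'$ are honest hyperplanes (so that $\mathscr{Y}$ is cut out by a product of non-trivial linear forms in $T_0,\dots,T_N$, rather than by $0$), and the Galois-invariance argument ensuring $P$ has rational coefficients.
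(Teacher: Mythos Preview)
Your proof is correct and follows essentially the same approach as the paper's. The paper's argument is extremely terse: it observes that the $\Qbar$-irreducible components of $\mathscr{T}'$ are the sets $\{\bx'\}\times H_{\bx'}$ (with $H_{\bx'}$ the hyperplane you describe) and then simply says ``the conclusion follows from'' the explicit product formula~\eqref{eq:ChowZf} for the Chow form. You have written out in full the details the paper leaves implicit---the bijection $S'\to Z(\f)$, the squarefreeness of $P$, its rationality via Galois invariance, and the Nullstellensatz argument showing any other squarefree defining polynomial is a scalar multiple of $P$.
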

\begin{proof}
  Continuing the proof of the previous lemma, we see that the
  $\Qbar$-irreducible components of $\mathscr{T}'$ are finite unions of sets of
  the form $(\bx'_1,\dots,\bx'_m)\times H$, where, writing
  $\bx'_i=(1,x_{i,1},\dots,x_{i,n_i})$, $H$ is the hyperplane
  of $\P^{N}(\Qbar)$ defined by
$$K=T_0 -( x_{1,1} T_1 + x_{1,2} T_2 + \cdots + x_{m,n_m-1} T_{N-1} + x_{m,n_m} T_{N}).$$
  The conclusion follows from~\eqref{eq:ChowZf}.
\end{proof}

\paragraph*{Explicit bounds.} We can now give quantitative
estimates for the classes of the objects introduced so far.
By~\cite[Proposition~2.66]{DaKrSo13}, we have the equality $[\mathscr{T}]_\Z =
\iota([S']_\Z)$, where $[\mathscr{T}]_\Z$ lies in $A^*(\P^{\bn} \times
\P^{N}(\Qbar),\Z)$ and $\iota$ is
the canonical injection
\begin{multline*}
A^*(\P^\bn,\Z)=\R[\zeta, \vartheta_1, \dots,\vartheta_m]/\langle \zeta^2,
\vartheta_1^{n_1+1},\dots,\vartheta_m^{n_m+1} \rangle\\
\to
A^*(\P^{\bn} \times
\P^{N}(\Qbar),\Z)=\R[\zeta, \vartheta_1, \dots,\vartheta_m,\mu]/\langle \zeta^2,
\vartheta_1^{n_1+1},\dots,\vartheta_m^{n_m+1},\mu^{N+1} \rangle.
\end{multline*}
Since $S'$ has dimension zero, its class in $A^*(\P^\bn,\Z)$ 
has the form
\begin{equation}\label{eq:defSp}
[S']_\Z = \sum_{1 \le i \le m}  \widehat{h_i}(S')\, \zeta\, \vartheta_1^{n_1} \cdots \vartheta_i^{n_i-1} \cdots \vartheta_m^{n_m}
+ 
\deg(S') \vartheta_1^{n_1} \cdots \vartheta_m^{n_m}.
\end{equation}
 We deduce that
$[\mathscr{T}]_\Z$ has the same form, but in $A^*(\P^{\bn} \times
\P^{N}(\Qbar),\Z)$.
Remark next that the element $[K^h]_{\rm sup} \in A^*(\P^{\bn} \times
\P^{N}(\Qbar),\Z)$
satisfies
$$[K^h]_{\rm sup} = \log(N+1) \zeta + \vartheta_1 +\cdots+ \vartheta_m + \mu.$$
Hence, because the intersection defining $\mathscr{T}'$ is proper, we deduce from the B\'ezout inequality $\sf
A_4$ that
$$[\mathscr{T}']_\Z  \le [\mathscr{T}]_\Z \cdot (\log(N+1) \zeta + \vartheta_1 +\cdots+ \vartheta_m + \mu).$$
Using the formula for $[\mathscr{T}]_\Z$ given above, we obtain
\begin{align*}
[\mathscr{T}']_\Z &\le 
 \sum_{1 \le i \le m}  \widehat{h_i}(S')\, \zeta\, \vartheta_1^{n_1} \cdots \vartheta_m^{n_m}
 + 
 \sum_{1 \le i \le m}  \widehat{h_i}(S')\, \zeta\, \vartheta_1^{n_1} \cdots \vartheta_i^{n_i-1} \cdots \vartheta_m^{n_m} \mu
\\[1mm]
& ~+ \log(N+1) \deg(S') \zeta\,\vartheta_1^{n_1} \cdots \vartheta_m^{n_m}
+
 \deg(S') \vartheta_1^{n_1} \cdots \vartheta_m^{n_m}\mu.
\end{align*}
Finally, we consider the projection on $\P^{N}(\Qbar)$.
The arithmetic Chow ring of
this projective space is $\R[\zeta,\mu]/\langle \zeta^2,
\mu^{N+1}\rangle$, and~\cite[Proposition~2.64]{DaKrSo13}
shows that 
$$  \vartheta_1^{n_1} \cdots \vartheta_m^{n_m} [\mathscr{Y}]_\Z \le [\mathscr{T}']_\Z.$$
Considering the possible monomial support of $[\mathscr{Y}]_\Z$, we deduce 
that we have the inequality
$$[\mathscr{Y}]_\Z \le  \sum_{1 \le i \le m} \widehat{h_i}(S')\,
\zeta + \log(N+1) \deg(S') \zeta + \deg(S')\mu. $$ Hence, if $C$ is a primitive polynomial in
$\Z[T_0,\dots,T_{N}]$ defining $\mathscr{Y}$, we deduce from
${\sf A_3}$ that $$m(C)\le \sum_{1 \le i \le m} \widehat{h_i}(S') + \log(N + 1)
\deg(S').$$ 
This leads us to the following lemma.
\begin{lemma}\label{prop:chow}
  Any primitive Chow form $C$ of $V(\f)$ satisfies $$m(C)\le \scrH_\bn(\bbeta,\d) + \log(N + 1)
\scrC_\bn(\d).$$
\end{lemma}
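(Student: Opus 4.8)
The plan is short: essentially all the work has already been done above, and what remains is to translate the bound $m(C)\le \sum_{1 \le i \le m} \widehat{h_i}(S') + \log(N + 1)\deg(S')$ established just before the statement into the quantities $\scrH_\bn(\bbeta,\d)$ and $\scrC_\bn(\d)$, using the inequality~\eqref{eq:Sp}, that is $[S']_\Z \le \bchi'(\bbeta,\d)$, read off coefficient by coefficient. So the proof will consist of identifying the relevant coefficients of $\bchi'(\bbeta,\d)$ and invoking non-negativity.

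First I would recall from~\eqref{eq:defSp} that, since $S'$ is zero-dimensional, one has $\widehat{h_i}(S')=\coeff([S']_\Z,\, \zeta\,\vartheta_1^{n_1}\cdots\vartheta_i^{n_i-1}\cdots\vartheta_m^{n_m})$ for each $i$ and $\deg(S')=\coeff([S']_\Z,\,\vartheta_1^{n_1}\cdots\vartheta_m^{n_m})$. Since inequalities in $A^*(\P^\bn,\Z)$ are coefficientwise, \eqref{eq:Sp} then yields $\widehat{h_i}(S')\le \coeff(\bchi'(\bbeta,\d),\, \zeta\,\vartheta_1^{n_1}\cdots\vartheta_i^{n_i-1}\cdots\vartheta_m^{n_m})$ for all $i$ and $\deg(S')\le \coeff(\bchi'(\bbeta,\d),\,\vartheta_1^{n_1}\cdots\vartheta_m^{n_m})$.

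Next I would identify these two coefficients explicitly. Substituting $\zeta=0$ in $\bchi'(\bbeta,\d)$ produces $\bchi(\d)$; since $n_1+\cdots+n_m=N$ equals the number of factors, the only pure-$\vartheta$ monomial surviving the truncation $\langle\vartheta_1^{n_1+1},\dots,\vartheta_m^{n_m+1}\rangle$ is $\vartheta_1^{n_1}\cdots\vartheta_m^{n_m}$, so $\coeff(\bchi'(\bbeta,\d),\,\vartheta_1^{n_1}\cdots\vartheta_m^{n_m})=\coeff(\bchi(\d),\,\vartheta_1^{n_1}\cdots\vartheta_m^{n_m})=\scrC_\bn(\d)$. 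For the height terms, the monomials $\zeta\,\vartheta^\bc$ with $|\bc|=N-1$ and $\bc\le\bn$ are exactly those with $c_i=n_i-1$ at one index and $c_j=n_j$ elsewhere, i.e. precisely the monomials carrying the $\widehat{h_i}$ in~\eqref{eq:defSp}; and since every $\eta_i$ and every $d_{i,j}$ is non-negative, all coefficients of $\bchi'(\bbeta,\d)$ are non-negative, so their sum is bounded by the full quantity $\scrH_\bn(\bbeta,\d)$ by the very definition of $\scrH_\bn(\bbeta,\d)$ (which adds to this sum the further non-negative term $\scrC_\bn(\d)$). Combining, $\sum_i\widehat{h_i}(S')\le \scrH_\bn(\bbeta,\d)$ and $\deg(S')\le \scrC_\bn(\d)$; substituting into $m(C)\le \sum_i \widehat{h_i}(S') + \log(N+1)\deg(S')$ gives the claim.

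There is no genuine obstacle here: the geometric heavy lifting — the B\'ezout inequality ${\sf A}_4$, the proper-intersection and projection arguments, and the Mahler-measure bound ${\sf A}_3$ — has already been carried out. The only point demanding a moment's care is the verification of the monomial supports of the truncated products $\bchi(\d)$ and $\bchi'(\bbeta,\d)$ in degrees $N$ and $N-1$, which follows immediately from $n_1+\cdots+n_m=N$ together with the relations $\vartheta_j^{n_j+1}=0$.
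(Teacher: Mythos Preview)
Your proposal is correct and follows essentially the same approach as the paper: both read off the coefficientwise inequality~\eqref{eq:Sp} to bound $\sum_i \widehat{h_i}(S')$ by $\scrH_\bn(\bbeta,\d)$ (using that the latter is the full sum of non-negative coefficients) and $\deg(S')$ by $\scrC_\bn(\d)$, then substitute into the bound $m(C)\le \sum_i \widehat{h_i}(S') + \log(N+1)\deg(S')$ already obtained. The only cosmetic difference is that the paper sums all coefficients of~\eqref{eq:Sp} at once (obtaining $\sum_i \widehat{h_i}(S') + \deg(S') \le \scrH_\bn(\bbeta,\d)$ directly), whereas you isolate the height and degree coefficients separately before summing.
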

\begin{proof}
  In view of the previous discussion, it is enough to prove that the
  inequality $$\sum_{1 \le i \le m} \widehat{h_i}(S') + \log(N + 1)
  \deg(S') \le \scrH_\bn(\bbeta,\d) + \log(N + 1) \scrC_\bn(\d)$$
  holds. We saw in~\eqref{eq:Sp} the inequality $[S']_\Z \le
  \bchi(\bbeta,\d)$, which is to be understood coefficient-wise. Take
  the sum of coefficients on both sides.  From~\eqref{eq:defSp},
  we deduce that the left-hand side adds up to $\sum_{1 \le i \le m}
  \widehat{h_i}(S') +\deg(S')$, which is an upper bound on $\sum_{1
    \le i \le m} \widehat{h_i}(S')$, whereas the right-hand side gives
  $\scrH_\bn(\bbeta,\d)$. To conclude, we add $\log(N + 1) \deg(S')$
  on both sides, and we use the fact that $\deg(S')=\deg(Z(\f)) \le
  \scrC_\bn(\d)$, as pointed out after Proposition~\ref{prop:degH}.
\end{proof}

\paragraph*{Conclusion.} Finally, we can conclude the proof
of Proposition~\ref{prop:hgt}. Lemma~\ref{prop:chow} shows that for
any primitive Chow $C$ form of $Z(\f)$, we have $m(C) \le
\scrH_\bn(\bbeta,\d) + \log(N + 1) \scrC_\bn(\d)$; using the inequality
$|m(C) - {\hgt}(C)| \le \log(N +2)\deg(C)$
(see~\cite[Lemma~2.30]{DaKrSo13}), we deduce that such a Chow form has
height at most $\scrH_\bn(\bbeta,\d) + 2\log(N + 2) \scrC_\bn(\d)$.
Using Lemma~\ref{lemma:hgtQ} below (which is itself a standard result), we deduce that all polynomials
appearing in the zero-dimensional parametrization of $Z(\f)$
associated to a linear form $\lambda$ of height $b$ have height at
most
$$\scrH_\bn(\bbeta,\d) + (b + 4\log(N + 2))\scrC_\bn(\d),$$
which proves the proposition.

\begin{lemma}\label{lemma:hgtQ}
  Suppose that $V \subset \Qbar{}^N$ is a zero-dimensional algebraic set
  defined over $\Q$ and that $\lambda$ is a separating linear form for
  $V$ with integer coefficients of height at most $b$. Suppose as
  well that the primitive Chow forms of $V$ have height at most
  $h$. Then, all polynomials that appear in the zero-dimensional
  parametrization $\scrQ=((q,v_1,\dots,v_N),\lambda)$ of $V$ have
  height at most $h + \log(\deg(V)) + \deg(V)(b+\log(N+1))$.
\end{lemma}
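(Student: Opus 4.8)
The plan is to tie the parametrization $\scrQ=((q,v_1,\dots,v_N),\lambda)$ directly to a primitive Chow form $C\in\Z[T_0,\dots,T_N]$ of $V$, and then transfer the assumed height bound on $C$ to $q$ and to the $v_i$'s. Write $d=\deg(V)$ and $C=a_0\prod_{\bx=(x_1,\dots,x_N)\in V}(T_0-x_1T_1-\cdots-x_NT_N)$; the coefficient $a_0$ of $T_0^{d}$ in $C$ is a nonzero integer, and since $C$ is primitive it satisfies $\log|a_0|\le\hgt(C)\le h$. Substituting $T_0=T$ and $T_i=\lambda_i$ (the integer coefficients of $\lambda$, of absolute value at most $2^{b}$) replaces each factor $T_0-\sum_i x_iT_i$ by $T-\lambda(\bx)$; as $\lambda$ is separating, the values $\lambda(\bx)$ are pairwise distinct, so $\prod_{\bx\in V}(T-\lambda(\bx))$ is exactly the monic squarefree polynomial $q$, whence $q=\tfrac1{a_0}\,C(T,\lambda_1,\dots,\lambda_N)$. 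For the numerators, differentiating the product formula gives $\partial C/\partial T_i=-a_0\sum_{\bx}x_i\prod_{\bx'\ne\bx}(T_0-\sum_jx'_jT_j)$, so that $-\tfrac1{a_0}\,\partial_{T_i}C(T,\lambda_1,\dots,\lambda_N)=\sum_{\bx}x_i\prod_{\bx'\ne\bx}(T-\lambda(\bx'))$. This polynomial has degree $<d=\deg(q)$ and takes the value $x_i\,q'(\lambda(\bx))$ at each $\lambda(\bx)$; by uniqueness of the degree-$<\deg(q)$ interpolant it must coincide with $v_i$, so $v_i=-\tfrac1{a_0}\,\partial_{T_i}C(T,\lambda_1,\dots,\lambda_N)$.

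I would then carry out an elementary coefficient-growth estimate under this integer linear substitution. The polynomial $C$ is homogeneous of degree $d$ in the $N+1$ variables $T_0,\dots,T_N$, has integer coefficients of absolute value at most $2^{h}$, and the number of monomials of degree $d$ in $N+1$ variables is $\binom{N+d}{N}\le(N+1)^{d}$. Replacing $T_i$ by $\lambda_i$ for $i\ge1$ therefore produces the integer polynomial $a_0q=C(T,\lambda_1,\dots,\lambda_N)$ with $\|a_0q\|_\infty\le(N+1)^{d}\,2^{h}\,2^{bd}$; since the minimal common denominator of the rational coefficients of the monic polynomial $q$ divides $a_0$ and $\log|a_0|\le h$, this yields $\hgt(q)\le h+d\log(N+1)+bd$. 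For $v_i$ the only change is that forming $\partial_{T_i}C$ multiplies coefficients by an exponent at most $d$ — this is where the $\log(\deg V)$ term enters — so $\|a_0v_i\|_\infty=\|\partial_{T_i}C(T,\lambda_1,\dots,\lambda_N)\|_\infty\le d\,(N+1)^{d}\,2^{h}\,2^{bd}$; the denominators of $v_i$ again divide $a_0$, so $\hgt(v_i)\le\max\bigl(h,\ \log d+h+d\log(N+1)+bd\bigr)=h+\log(\deg V)+\deg(V)\,(b+\log(N+1))$. Since this dominates the bound on $q$, every polynomial in $\scrQ$ obeys it, which is the claim.

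The substance is really in the first paragraph, and the only delicate points there are bookkeeping ones: checking that the polynomials produced from $C$ by substitution and by differentiation are literally the $q$ and $v_i$ of the parametrization (for which the separating hypothesis — ensuring $q$ is squarefree, so $q'(\lambda(\bx))\ne0$ — together with the uniqueness of the low-degree interpolant is exactly what one needs), and correctly tracking the denominator $a_0$ when passing from the integer polynomials $a_0q,\ a_0v_i$ back to $q,\ v_i$ (using that the minimal common denominator divides $a_0$ and that $\log|a_0|\le\hgt(C)\le h$). Everything else is the routine estimate for the growth of the coefficients of a homogeneous integer polynomial under an integer linear substitution, controlled through the monomial count $\binom{N+d}{N}\le(N+1)^{d}$.
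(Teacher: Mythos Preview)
Your proof is correct and follows essentially the same approach as the paper: both derive the formulas $q=\tfrac1{a_0}\,C(T,\lambda_1,\dots,\lambda_N)$ and $v_i=-\tfrac1{a_0}\,\partial_{T_i}C(T,\lambda_1,\dots,\lambda_N)$ from a primitive Chow form $C$, then bound the height growth under the integer linear substitution $T_i\mapsto\lambda_i$. The only difference is that the paper outsources this last step to Lemma~1.2.1.c of~\cite{KrPaSo01}, whereas you give the self-contained elementary estimate via the monomial count $\binom{N+d}{N}\le(N+1)^{d}$; your treatment of the denominator $a_0$ (using that the minimal common denominator of $q$ and of each $v_i$ divides $a_0$, with $\log|a_0|\le h$) is also slightly more explicit than the paper's.
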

\begin{proof}
  Let $C$ be a primitive Chow form of $V$, written $C = a C_0$, with
  $C_0$ monic in $T_0$. It is well-known (see for instance~\cite{ABRW})
  that we obtain $q$ and $v_1,\dots,v_n$ as 
  $$q = \frac 1a C(T,\lambda_1,\dots,\lambda_n),\quad v_i = -\frac 1a
  \frac{\partial C}{\partial T_i}(T,\lambda_1,\dots,\lambda_n).$$
  Since $C$ has degree $\deg(V)$ and height $h$, its partial
  derivatives have height at most $h+\log(\deg(V))$. The conclusion
  then follows from (for instance) Lemma~1.2.1.c in~\cite{KrPaSo01}.
\end{proof}


\end{document}